\documentclass[11pt,a4paper]{article}


\usepackage{amsmath}

\usepackage{times}
\usepackage{bm}
\usepackage[authoryear, round]{natbib}
\bibliographystyle{unsrtnat}



\def\T{{ \mathrm{\scriptscriptstyle T} }}

\usepackage{amsfonts} 
\newcommand{\n}{\leavevmode\newline}

\newcommand{\NN}{\mathbb{N}}
\newcommand{\PP}{\mathbb{P}}

\newcommand{\RR}{\mathbb{R}}
\newcommand{\EE}{\mathbb{E}}

\newcommand{\leqst}{\leq_{\text{st}}}
\newcommand{\normm}[1]{\left\lVert#1 \right\rVert}
\newcommand\roww[1]{_{#1,\boldsymbol{\cdot}}}

\newcommand{\N}{\text{N}}
\newcommand\ind{\mathbbm{1}}
\usepackage{enumitem}
\usepackage{bbm}
\usepackage{algpseudocode}
\usepackage{adjustbox}
\usepackage{xr}

\addtolength\topmargin{35pt}

\usepackage[utf8]{inputenc}
\usepackage{float}
\usepackage[a4paper, total={6in, 8in}]{geometry}
\usepackage{fullpage} 
\usepackage{parskip} 
\usepackage{tikz} 
\usepackage{amsmath}
\usepackage{hyperref}
\hypersetup{
    colorlinks=true,
    linkcolor=purple,
    filecolor=magenta,      
    urlcolor=pink,
}
\usepackage{amssymb}
\usepackage{enumitem}
\usepackage{bm}
\usepackage{rotating}
\usepackage{bbm}
\usepackage{adjustbox}
\usepackage{framed}
\usepackage[lofdepth,lotdepth]{subfig}
\usepackage{amsthm}
\usepackage{listings}

\usepackage{mathtools}

\setlength{\parindent}{0pt}
\mathtoolsset{showonlyrefs=true}
\usepackage{algorithm}
\usepackage{algpseudocode} 

\hypersetup{
    colorlinks   = true, 
    urlcolor     = blue, 
    linkcolor    = blue, 
    citecolor   = black 
}


\newtheoremstyle{mitheorem}
{11pt} 
{11pt} 
{\itshape} 
{} 
{\bfseries} 
{.} 
{.5em} 
{} 

\newtheoremstyle{midef}
{11pt} 
{11pt} 
{\upshape} 
{} 
{\bfseries} 
{.} 
{.5em} 
{} 

\theoremstyle{mitheorem}
\newtheorem{theorem}{Theorem}[section]
\newtheorem{lemma}[theorem]{Lemma}

\newtheorem{proposition}[theorem]{Proposition}
\theoremstyle{midef}

\newtheorem{rec}[theorem]{Plan}


\renewcommand{\emptyset}{\varnothing}
\renewcommand{\epsilon}{\varepsilon}
\renewcommand{\phi}{\varphi}






\newcommand\ceq\coloneqq 
\newcommand\eps\epsilon



\newcommand\pt\partial


\newcommand{\iid}{\overset{\text{i.i.d.}}{\sim}}
\usepackage{pdfpages}
\makeatletter
\renewcommand*\env@matrix[1][*\c@MaxMatrixCols c]{%
 \hskip -\arraycolsep
 \let\@ifnextchar\new@ifnextchar
 \array{#1}}
\makeatother

\begin{document}

\title{Efficient sparsity adaptive changepoint estimation}
\author{Per August Jarval Moen$^{1}$ \and Ingrid Kristine Glad$^{1}$ \and Martin Tveten$^{2}$}

\date{$^1$Department of Mathematics, University of Oslo \\
$^2$Norwegian Computing Center}

\maketitle

\begin{abstract}
\noindent We propose a computationally efficient and sparsity adaptive procedure for estimating changes in unknown subsets of a high-dimensional data sequence. Assuming the data sequence is Gaussian, we prove that the new method successfully estimates the number and locations of changepoints with a given error rate and under minimal conditions, for all sparsities of the changing subset. Our method has computational complexity linear up to logarithmic factors in both the length and number of time series, making it applicable to large data sets. Through extensive numerical studies we show that the new methodology is highly competitive in terms of both estimation accuracy and computational cost. The practical usefulness of the method is illustrated by analysing sensor data from a hydro power plant, and an efficient R implementation is available.
\end{abstract}

\section{Introduction}
During the last decades, new technology has made it possible to gather data in larger quantities from an ever wider range of sources. Data can often display non-stationarities in the form of distributional changes over time, leading to incorrect statistical inferences if not accounted for. Inference on changepoints may also be of interest in it self. For instance, \citet{cunen_statistical_2020} search for changes in the number of battle deaths in interstate wars between 1816 and 2007, \citet{gao_surface_2020} study monitoring of the temperature of transplant organs, and \citet{10.1214/21-AOAS1508} use a changepoint detection algorithm for condition monitoring of a subsea pump.

In this paper, we study the problem of detection and estimation of an unknown number of changes in the mean of high-dimensional data. By \textit{detection}, we refer to testing for the presence of one or more changepoints in the data. By \textit{estimation}, we refer to estimation of the location(s) of the changepoint(s). This problem is well understood in the literature for univariate data,. Several computationally efficient algorithms have been proposed during the last decade, including Pruned Exact Linear Time of \citet{killick_optimal_2012}, Wild Binary Segmentation of \citet{fryzlewicz_wild_2014}, Narrowest Over Threshold of \citet{baranowski_narrowest-over-threshold_2019} and Seeded Binary Segmentation of \citet{kovacs_seeded_2022}. Notably, these methods have been shown to achieve near optimal performance, in a minimax sense, see \citet{wang_univariate_2019}.

Several methods for the multivariate change in mean problem have also been proposed, although this problem is less studied than the univariate setting. 
The Inspect method of \citet{wang_high_2018} uses sparse projections of CUSUM statistics and a variant of Wild Binary Segmentation to detect and localize multiple sparse changes in the mean. \citet{cho_multiple-change-point_2015} propose the Sparsified Binary Segmentation algorithm based on thresholding and aggregating CUSUM statistics over coordinates, in combination with Binary Segmentation. The Double CUSUM method of \citet{cho_change-point_2016} uses test statistics based on ordered CUSUMs, in combination with ordinary Binary Segmentation. The SUBSET method of \citet{tickle_computationally_2021} uses a penalized likelihood approach, in combination with the Wild Binary Segmentation search procedure.

In this work, we present a novel multiple changepoint estimation algorithm, which we call ESAC (\textbf{E}fficient \textbf{S}parsity \textbf{A}daptive \textbf{C}hangepoint estimator). The method is designed to detect and estimate the locations of an unknown number of changes in the mean of high-dimensional data sequences. An important feature of ESAC is that the subset of data components that undergo a change need not be known --- it can be anything from a single changing component to a small subset to all components. We refer to the size of the changing subset as the \textit{sparsity} of the change. ESAC comes with strong theoretical guarantees, and is in particular adaptive to all sparsities of changes and all distances between changepoints. Still, the worst-case computational cost of ESAC is linear in the number of observations, $n$, as well as the number of components, $p$, save for logarithmic factors. Via simulations, we demonstrate that ESAC is highly competetive in terms of statistical accuracy and running time.
%
%
%

The novelty of our work is threefold. Our first contribution is to modify a sparsity adaptive test statistic proposed by \cite{liu_minimax_2021} to make it suitable for testing for changepoints in a multiple changepoint setup. In particular, our proposed test facilitates control over its family-wise error rate, which is necessary in the multiple changepoint situation. Our second contribution is to propose a novel estimator for the location of a single changepoint. The estimator comes with strong theoretical guarantees, and may be of independent interest. Lastly and most importantly, we combine our proposed test statistic and changepoint estimator into a multiple changepoint estimation algorithm, ESAC, using a slight variant of Seeded Binary Segmentation \citep{kovacs_seeded_2022} and Narrowest-over-Threshold \citep{baranowski_narrowest-over-threshold_2019} selection of changepoints. ESAC is also efficiently implemented in an R package \textbf{HDCD} \citep{moen_hdcd_2023}, available on The Comprehensive R Archive Network (\href{https://cran.r-project.org}{cran.r-project.org}). Efficient implementations of Inspect \cite{wang_high_2018} and the method of \cite{pilliat_optimal_2022} are also available in the package.

Most similar to ESAC is the multiple changepoint detection procedure of \citet{pilliat_optimal_2022} for Gaussian changes in mean. The theoretical guarantees, for instance, are the same for ESAC and the method of \citet{pilliat_optimal_2022}. Still, there are important distinctions between the two methods, which we highlight here. As opposed to ESAC, the method proposed by \citet{pilliat_optimal_2022} is based on their novel "bottom up" search. Their approach segments the data into disjoint segments chosen as narrow as possible from a predefined grid, where for each interval, a test statistic must have detected a changepoint. To ensure a disjoint segmentation, they merge overlapping segments of equal length whenever a changepoint is detected in both. From this segmentation, changepoint locations are estimated by taking midpoints of the segments. Consequently, \citeauthor{pilliat_optimal_2022}'s method only requires a test for a changepoint, and not a location estimator. In practice, this generality comes at a cost of changepoints being crudely estimated or not being detected at all, whenever the signal strength is low. This is illustrated in our simulation studies, which feature empirical comparisons between ESAC, the method of \cite{pilliat_optimal_2022} and other proposed methods.

The paper is organized as follows. In Section \ref{secproblemdesc} we give a formal description of the model assumed throughout the paper. In Section \ref{singletest} we present a test statistic for a single changepoint that facilitates control over its family-wise error rate. In Section \ref{locsingle} we propose an estimator for the location of a single changepoint, also stating its finite sample estimation error rate with comparisons to other methods. In Section \ref{fullprocedure} we propose ESAC, our proposed  multiple changepoint estimation procedure. In Section \ref{subsectheory} we present theoretical results regarding the statistical and computational properties of ESAC and compare these to other methods.  In Section \ref{secnumerical} we study the empirical performance of ESAC and other methods via simulations, including for misspecified models. In Section \ref{dataex} we apply ESAC to sensor data from a Swedish hydro power plant. In Appendix \ref{secproofs} we prove our main theoretical results. In the remained of the Appendix we discuss implemention of ESAC in practice, provide more simulation results, and prove auxiliary lemmas for our main results.

We use the following notation throughout the paper. For any vector $y \in \mathbb{R}^d$ we let $y_j$ denote its $j$-th component, $\normm{y}_2$ denote its Euclidean norm and $\normm{y}_0$ denote the number of non-zero entries in $y$. For any matrix $X \in \RR^{p \times n}$ we let $X_{i,v}$ denote its $(i,v)$th element, $X_v \in \RR^p$ denote its $v$th column, $X_{\roww{i}} \in \RR^{n}$ denote its $i$th row, $\normm{X}_F^2 = \sum_{i=1}^p \sum_{v=1}^n X_{i,v}^2$ denote the squared Frobenius norm of $X$, and $\normm{X}_1 = \sum_{i=1}^p \sum_{v=1}^n |X_{i,v}|$ denote the entry-wise $\ell_1$ norm of $X$. For any pair of matrices $X,Y \in \RR^{p \times n}$, we let $\left\langle X,Y\right\rangle = \text{tr}\left(X^{\T} Y\right)$ denote their trace inner product. For any positive integer $I$ we define $[I]= \{1, \ldots, I\}$. For any pair of real numbers $x,y$, we define $x\vee y= \max \left\{  x,y \right\}$ and $x \wedge y = \min \left\{ x,y \right\}$. For any pair of random variables $X,Y$, we let $X \leqst Y$ mean that $Y$ stochastically dominates $X$, i.e. $\PP \left(X \leq t \right) \geq \PP \left( Y \leq t\right)$ for all $t \in \RR$. 
For any $x \in \RR$, we let $\lfloor x \rfloor$ denote the largest integer no larger than $x$, and $\lceil x \rceil$ denote the smallest integer no smaller than $x$. 

We find it useful to adopt the notation of \cite{baranowski_narrowest-over-threshold_2019} to denote integer intervals. For any pair of integers $s,e$ such that $s\leq e-2$, we let $(s,e)$ denote the open integer interval $\{s+1, \ldots, e-1\}$ and let $(s,e]$ denote the left-open and right closed integer interval $\{s+1, \ldots, e\}$.

\section{Problem description}\label{secproblemdesc}
To motivate our method and allow for theoretical analysis, we consider the following model for the remainder of the paper. In Section \ref{secnumerical} we assess performance under various misspecified models. Suppose we observe $n\geq 2$ independent multivariate Gaussian variables
\begin{equation}
X_{v} = \mu_{v} + W_{v} \label{datamodel},
\end{equation}
where $\mu_{v} \in \RR^p$ and $W_{v} \sim \N_p(0, \sigma^2I)$ for $v \in [n]$. 
Assume that there are $J\geq 0$ changepoints $0<\eta_1 < \ldots < \eta_J<n$ such that
$$
\mu_{v}\neq \mu_{v+1} \text{ if and only if } v = \eta_j \text{ for some } j \in [J].
$$
Let $\theta_j = \mu_{\eta_{j+1}} - \mu_{\eta_j}$ denote the change in mean occurring at the $j$th changepoint, and let $\phi_j = \normm{\theta_j}_2$ be the $\ell_2$-norm of the mean-change occuring at changepoint $j$. Further, let $k_j = \normm{\theta_j}_0$ denote the \textit{sparsity} of the $j$th changepoint, i.e. the number of non-zero components of $\theta_j$. Lastly, let $\Delta_j = \min\left(  \eta_j - \eta_{j-1}, \eta_{j+1} - \eta_j  \right)$ denote the minimum distance between the $j$th changepoint and a neighboring changepoint (where we for convenience take $\eta_{0}= 0 $ and $\eta_{J+1}=n$). Our goal is to estimate $J$, the number of changepoints, and their locations $\eta_1 <\ldots<\eta_J$.

In the theoretical analysis to follow, we take $\sigma^2$ to be known. For notational compactness, let $X$, $\mu \in \RR^{p \times n}$ denote the matrices with $X_v$, $\mu_v$ as their $v$th columns, respectively, for $v \in [n]$. 

\section{Method and results}\label{secadaptive}

\subsection{Single changepoint detection with family-wise error rate control}\label{singletest}

We begin by presenting a statistical test for a single change in mean in some arbitrary interval $(s,e) = \{s+1, \ldots, e-1\}$, where $0\leq s <e \leq n$, and $s\leq e-2$. To simplify the exposition, assume for now that $\sigma=1$, as the data can be normalized to satisfy this assumption. We seek a test statistic which facilitates control over the family-wise error rate when testing for changepoints over multiple intervals. This level of control is needed later on when we define a multiple changepoint algorithm in Section \ref{fullprocedure}.

To construct a test statistic, we build upon the work of \citet{liu_minimax_2021}. They propose an efficient and minimax rate optimal test statistic for testing for a single changepoint within an interval. Unfortunately, this test does not allow for control of the family-wise error rate, and thus needs some modifications, presented next. For any candidate changepoint location $v$ such that ${0\leq s<v<e\leq n}$, we define the CUSUM transformation $T^v_{(s,e]}(y)$ of a vector $y \in \RR^n$ as
\begin{align}
T^v_{(s,e]}(y)&= \left\{\frac{e-v}{(e-s)(v-s)}\right\}^{1/2}\sum_{i=s+1}^v y_i - \left\{\frac{v-s}{(e-s)(e-v)}\right\}^{1/2}\sum_{i=v+1}^e y_i\label{c2def},
\end{align}
To simplify notation, we use $C_{(s, e]}^v(i) = T^v_{(s, e]}(X_{i, \cdot})$ to denote the CUSUM of the $i$th component of the data within the integer interval $(s, e]$, evaluated at candidate changepoint position $v$. 

Given a candidate sparsity level $t \in [p]$, and penalizing function $\gamma(t)$, both to be discussed later, define 
\begin{align}
S_{\gamma, (s,e]}^v(t) = \sum_{i=1}^p \left \{  C^v_{(s,e]} \left(i\right)^2 - \nu_{a(t)}       \right\}\mathbbm{1}\left\{|C^v_{(s,e]}(i)| \geq a(t) \right\}  - \gamma(t) \label{stdef},
\end{align}
where the threshold value $a(t)$ is given by 
\begin{align}
a^2(t) = 4 \log\left( \frac{ep \log n}{t^2}\right) \mathbbm{1}\left\{t \leq (p\log n)^{1/2} \right\}\label{atdef},
\end{align} and
$\nu_{a(t)}$ is a mean-centering term defined by taking $\nu_a = \mathbbm{E}\left( Z^2 \ | \ |Z|\geq a\right)$ for $Z\sim \text{N}(0,1)$ and $a\geq0$. In \eqref{atdef} we abuse notation slightly, writing $e = \exp(1)$ to mean Euler's number. 

We refer to $S_{\gamma, (s,e]}^v(t)$ as a \textit{sparsity-specific penalized score}, which heuristically measures the degree of evidence of a changepoint at $v \in (s,e)$ with sparsity $t$. To see this, we note that the CUSUM is a well known test statistic for a univariate change in mean, rejecting the null hypothesis of a constant mean for large values \citep[see][]{wang_univariate_2019}. In the multivariate case, the statistic \eqref{stdef} aggregates CUSUM values in an effort to borrow information across coordinates. To prevent the noise drowning the signal, the CUSUMs are thresholded at level $a(t)$, tailored specifically for a given sparsity level $t$. The functional form of $a(t)$ is depicted in Figure \ref{r_and_t} , where one observes that $a(t)$ decreases with $t$. Hence, $S_{\gamma, (s,e]}^v(t)$ thresholds CUSUMs less harshly as $t$ grows.  More intuition on $a(t)$ is discussed in the end of this subsection. Also plotted in Figure \ref{r_and_t} is $\nu_{a(t)}$, which serves as a mean-centering term for spuriously large CUSUMs, and satisfies $a^2(t) \leq \nu_{a(t)} \leq a^2(t)+2$. Even after thresholding and mean-centering, however, the sum in \eqref{stdef} need not be small even when no changepoint is present. The role of the penalty function $\gamma(t)$ is therefore to ensure that $S^v_{\gamma, (s,e]}<0$ with high probability whenever no changepoint is present.  

The sparsity-specific penalized score $S_{\gamma, (s,e]}^v(t)$ is defined for a fixed sparsity $t$, while the true sparsity of a changepoint is taken to be unknown. To measure the overall degree of evidence of a changepoint at location $v$, we consider an exponentially increasing grid of candidate sparsity levels
 \begin{equation}
\mathcal{T} = \{1, 2, 4, \ldots, 2^{\log_2 \left\{\lfloor \surd(p \log  n) \rfloor \right\} }\} \cup \{p\} \label{mathcaltdef}.
\end{equation}
This approach is also taken by \citet{liu_minimax_2021}, where the grid $\mathcal{T}$ is slightly smaller. This choice of grid is justified as follows. Whenever a changepoint has true sparsity $k < (p\log n)^{1/2}$, there always exists some $t \in \mathcal{T}$ such that $ t/2 \leq k \leq t$, which turns out to be sufficient for detecting the changepoint. Conversely, when the true sparsity $k$ satisfies $\geq (p\log n)^{1/2}$, it is sufficient to consider $t=p$.

For a candidate changepoint position $s<v<e$, define the \textit{penalized score} $S^v_{\gamma,(s,e]}$ as
\begin{align}
S^v_{\gamma, (s,e]} &= \underset{t \in \mathcal{T}}{\max} \ S^v_{\gamma, (s,e]} (t) \label{sdef},
\end{align}
which heuristically measures the degree of evidence of a changepoint at location $v$, irregardless of the sparsity. As test statistic for a changepoint in the interval $(s,e)$, we take 
\begin{equation}
S_{\gamma, (s,e]} = \ind\left\{ \underset{s<v<e}{\max} S^v_{\gamma, (s,e]} >0\ \right\}\label{testhihi}.
\end{equation}

As for the penalty function $\gamma(t)$, for $t \in [p]$ define
\begin{align}
r(t) = r(t,n,p) &= \begin{cases} (p \log n)^{1/2} & \text{ if } t\geq(p \log n)^{1/2}, \\
t \log \left( \frac{e p \log  n }{t^2}\right) \vee \log n & \text{ otherwise.}
\end{cases}\label{rdef}
\end{align}
With penalty function $\gamma(t) = \gamma_0 r(t)$ for some suitably large constant $\gamma_0>0$, we obtain the following control over the family-wise error rate. 
\begin{proposition}\label{lemmalemmalemma}
Consider the model in Section \ref{secproblemdesc}. For all $s,e$ and $v$ such that $0\leq s < v < e \leq n$, assume that the quantity $S_{\gamma,(s,e]}$ is computed with variance-scaled input matrix $\widetilde{X} = ({1}/{\sigma})X$ and penalty function $\gamma(t) = \gamma_0 r(t)$ for some $\gamma_0>0$.
Let $I$ denote the set of all intervals $(s,e) \subseteq (0,n)$ containing no changepoint, i.e. satisfying $\eta_j \notin (s,e) \ \forall j \in [J]$. For any $\epsilon >0$, there exists a universal choice of $\gamma_0>0$ (depending only on $\epsilon$) such that
$$
\PP\left( \underset{(s,e) \in I}{\max} \ S_{\gamma, (s,e]} >0   \right) \leq \epsilon.
$$
\end{proposition}

Some remarks are in order.
Figure \ref{r_and_t} displays plots of $a^2(t)$, $\nu_{a(t)}$ and $r(t)$ as functions of $t$, for $n=p=500$. As our first remark, we observe that $a(t)$ and $\nu_{a(t)}$ are decreasing in $t$, while $r(t)$ is increasing for all $t\leq (p\log (n) /e)^{1/2}$, but with a bulk when $t$ is close to $(p\log n)^{1/2}$. Several equivalent monotonic functions can be chosen in the place of $r(t$), but we have chosen $r(t)$ due to its simple analytical form. The function $r(t)$ can be seen as the information theoretic detection boundary in terms of the signal-to-noise (SNR) ratio for multiple changes in mean of sparsity $t$ in $p$-dimensional Gaussian vectors with sample size $n$ (see Section \ref{subsectheory} or \citealt{pilliat_optimal_2022}). When $p=1$, we recover the standard penalty used in the univariate changepoint literature for Gaussian changes in mean (see e.g. \citealt{wang_univariate_2019}), as $r(1) = \log n$ in this case. As our second remark, the forms of the threshold $a(t)$ and penalty function $\gamma(t)\propto r(t)$ reflect the two sparsity regimes known in the statistical literature on multivariate mean change detection. In the \text{sparse} case where $t\leq (p\log n)^{1/2}$, the threshold $a(t)$ is non-zero and satisfies $a^2(t) \approx r(t)/t$, which is decreasing with $t$. Meanwhile, in the dense case where $t > (p\log n)^{1/2}$, the threshold satisfies $a(t) = 0$, in which case no thresholding takes place and all CUSUMs contribute to \eqref{stdef}.

\begin{figure}[ht]
\includegraphics[width=\textwidth]{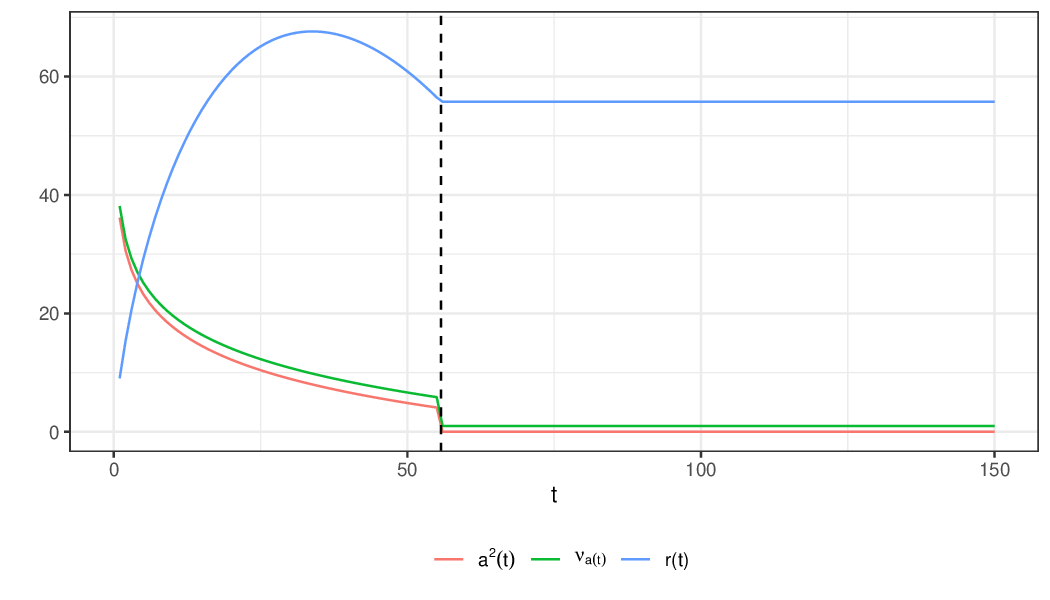}
\caption{Plots of $a^2(t)$ (red), $\nu_{a(t)}$ (green), and $r(t)$ (red) for $n=p=500$. The boundary between the dense and sparse regimes is given by a vertical dashed line at $k = (p\log n)^{1/2}$.}\label{r_and_t}
\end{figure}

Lastly, we remark the following relation between our proposed test statistic and that of \cite{liu_minimax_2021}. To facilitate control over the family-wise error rate, the threshold $a(t)$, the mean-centering term $\nu_{a(t)}$, and the penalty function $\gamma(t) = \gamma_0 r(t)$ grow faster with $n$ than their equivalent counterparts in \citet{liu_minimax_2021}. To recover the test statistic from \citet{liu_minimax_2021}, one must replace $\log n$ by $\log \log (8n)$ in \eqref{stdef}, \eqref{atdef}, \eqref{mathcaltdef}, \eqref{sdef}, \eqref{testhihi} and \eqref{rdef} and use the penalty function $\gamma(t) = \gamma_0 r(t)$ with the modified function $r(t)$.

\subsection{Single changepoint estimation}\label{locsingle}
We now consider the problem of estimating the location of a changepoint within some interval $(s,e)$, assuming the changepoint has already been detected or is known to be present. As before, we assume $\sigma=1$, as the data can be normalized to satisfy this assumption. We further assume that the interval $(s,e)$ contains only a single changepoint. Recalling the model defined in Section \ref{datamodel}, this means that $J=1$ and $[s,e) = [0,n)$. The problem at hand is then to estimate the location $\eta=\eta_1$ of a single changepoint, taking the sparsity $k = k_1$ as unknown. 
As our estimator for $\eta$, we use the test statistic in \eqref{sdef} with a potentially separate penalty function $\lambda(t)$, and maximize it over all candidate changepoints. That is,
\begin{align}
\widehat{\eta}_{\lambda} = \underset{0<v<n}{\arg \max} \ S^v_{\lambda},\label{singlechangeloc}
\end{align}
where we write $S^v_{\lambda}$ shorthand for $S^v_{\lambda, (0,n]}$.
The rationale behind this estimator is that $S^v(\lambda)$ measures the degree of evidence of a changepoint at location $v$. The maximizer is unique with high probability whenever the signal strength is reasonably large. Still, to ensure that $\widehat{\eta}_{\lambda}$ is always well defined, we formally set $\widehat{\eta}_{\lambda}$ be the smallest maximizer, although we suppress this from the notation. 

The finite sample properties of $\widehat{\eta}_{\lambda}$ are given in Theorem \ref{locprop}, which holds whenever $\lambda(t) = \lambda_0 r(t)$ for sufficiently large $\lambda_0>0$. Before stating the theorem, some remarks are in order. The threshold $a(t)$ is as before decreasing in $t$, while a penalty function of the form $\lambda(t) = \lambda_0 r(t)$ is increasing for most values of $t$. For such $\lambda(t)$, the penalized score $S^v_{\lambda}$ involves an implicit model selection in terms of the sparsity. 
An inspection of the proof of Theorem \ref{locprop} reveals the following intuition about the roles of $\lambda(t)$ and $a(t)$; whenever $\lambda_0$ is sufficiently large, the penalty function $\lambda(t) = \lambda_0 r(t)$ cancels all contributions to the sum in \eqref{stdef} from coordinates where there is no signal. Left are the contributions from the affected the coordinates. For a changepoint with a moderate signal strength, these remaining contributions are maximized when $t$ is close to the true sparsity $k$. To see this, note that when is $t$ smaller than the true sparsity $k$, the thresholding is too strict, possibly cancelling the signal from the affected coordinates, either from the thresholding itself or from the mean-centering term. When $t$ is greater $k$, the thresholding is less strict, but the penalty function $\lambda(t)$ may be larger than the signal strength from the affected coordinates. 

The following finite sample result shows that the estimator $\widehat{\eta}_{\lambda}$ is adaptive to the (unknown) sparsity and gives a high-probability upper bound on the estimation error.
\begin{theorem}\label{locprop}
Consider the model in Section \ref{secproblemdesc}, with only one changepoint $\eta$, with sparsity $k$ and $\ell_2$ norm $\phi$. Let $\Delta = \min\left( \eta , n - \eta\right)$. Let $\widehat{\eta}_{\lambda}$ be as in \eqref{singlechangeloc}, when the sparsity-specific score function $S^v_{\lambda}(t) = S^v_{\lambda,(0,n]}$ from \eqref{stdef} is computed with variance-scaled input matrix $\widetilde{X} = ({1}/{\sigma}) X$ and penalty function $\lambda(t) = \lambda_0 r(t)$, where $\lambda_0>0$. Define
\begin{align}
h(t) = h(t,n,p) &= \begin{cases} \left[ p \left\{ \log  n \vee \log \log (ep) \right\}\right]^{1/2} & \text{ if } t\geq(p \log  n)^{1/2}, \\
t \log \left( \frac{e p \log  n }{t^2}\right) \vee \log  n & \text{ otherwise.}
\end{cases}\label{hdef}
\end{align}
There exist a universal choice of $\lambda_0>0$ and universal constants $C_0,C_1>0$ such that, if 
\begin{align}
\frac{\phi^2\Delta}{\sigma^2} \geq C_0 h(k)\label{SNRsingle},
\end{align}
we have that
$$
\PP \left\{ |\widehat{\eta}_{\lambda} - \eta| \leq C_1 \frac{\sigma^2}{\phi^2} h(k) \right\} \geq 1-\frac{1}{n}.
$$
\end{theorem}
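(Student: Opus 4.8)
The plan is to reduce the problem to a univariate one. Since $S^v_{\lambda}(t)$ is computed from $\widetilde X = X/\sigma$, in that data the change has $\ell_2$-norm $\phi/\sigma$ and sparsity $k$, so it suffices to prove the claim for $\sigma=1$ with $h(k)/\phi^2$ in place of $\sigma^2 h(k)/\phi^2$; assume henceforth $\sigma=1$. Write $S:=\{i\in[p]:\theta_i\neq 0\}$, $|S|=k$, and regard $C^v=(C^v(1),\dots,C^v(p))\in\RR^p$. The key structural fact is that, with a single changepoint, $\E[C^v]$ points in the \emph{same} direction for every $v$: $\E[C^v]=\beta(v)\,u$, where $u:=-\theta/\phi$ is a fixed unit vector supported on $S$ and $\beta(v):=\normm{\E[C^v]}_2\ge 0$ equals $\phi(n-\eta)\{v/(n(n-v))\}^{1/2}$ for $v\le\eta$ (symmetrically for $v>\eta$); in particular $\beta(\eta)^2=\phi^2\eta(n-\eta)/n\in[\tfrac12\phi^2\Delta,\ \phi^2\Delta]$, so \eqref{SNRsingle} gives $\beta(\eta)^2\ge \tfrac12 C_0 h(k)$. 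Hence the ``aligned'' statistic
\[
A(v):=\sum_{i=1}^p C^v(i)\,u_i \;=\; T^v_{(0,n]}\big(\,t\mapsto\textstyle\sum_{i}u_i\widetilde X_{i,t}\,\big)
\]
is exactly the univariate CUSUM of a sequence with a single change at $\eta$ of size $\phi$ and unit noise variance; writing $A(v)=\beta(v)+\zeta(v)$, the noise $\zeta(\cdot)$ is the CUSUM of an i.i.d.\ $\N(0,1)$ sequence. Two deterministic facts: (i) $\beta(v)\le\beta(\eta)$ for all $v$; (ii) a direct computation gives $\beta(\eta)^2-\beta(v)^2\ge\tfrac12\phi^2\big(|v-\eta|\wedge\Delta\big)$ for all $v$, hence $\beta(\eta)-\beta(v)\ge \phi|v-\eta|/(4\sqrt\Delta)$ when $|v-\eta|\le\Delta$.

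Next I would fix a good event $\mathcal A$ with $\PP(\mathcal A)\ge 1-1/n$ on which, simultaneously over all $0<v<n$ and all $t\in\mathcal T$: (a) the non-changing coordinates are dominated by the penalty, $\big|\sum_{i\notin S}\{C^v(i)^2-\nu_{a(t)}\}\mathbbm{1}\{|C^v(i)|\ge a(t)\}\big|\le\tfrac{\lambda_0}{2}r(t)$ --- this is where the inflation of $a(t),\lambda(t)$ by $\log n$ (rather than $\log\log n$, as in the test statistic of \citet{liu_minimax_2021}) is used, the union being over all $n-1$ candidate positions; (b) $\chi^2$-type fluctuations are controlled; (c) $\sup_v|\zeta(v)|\le C\sqrt{\log n}$, and the peeled increment bound $|\zeta(v)-\zeta(\eta)|\le C\sqrt{|v-\eta|\log n/\Delta}$ for $|v-\eta|\le\Delta$ (a standard chaining estimate for the CUSUM process of i.i.d.\ Gaussians). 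The point of $A(v)$ is the exact decomposition $\sum_{i\in S}C^v(i)^2=A(v)^2+R^v$ and $\normm{C^v}_2^2=A(v)^2+R^v_{\mathrm{full}}$, where $R^v\sim\chi^2_{k-1}$, $R^v_{\mathrm{full}}\sim\chi^2_{p-1}$ are the squared noise components orthogonal to $u$; on $\mathcal A$, uniformly in $v$, $R^v\le (k-1)+C(\sqrt{k\log n}+\log n)$ and $R^v_{\mathrm{full}}-p\le C(\sqrt{p\log n}+\log n)$.

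The core step is to deduce from $\mathcal A$ the two-sided bound
\[
S^\eta_\lambda\ \ge\ A(\eta)^2-C\,h(k),\qquad S^v_\lambda\ \le\ A(v)^2+C\,h(k)\quad\text{for all }0<v<n.
\]
For the lower bound, evaluate $S^\eta_\lambda$ at the level $t^\star\in\mathcal T$ equal to the power of two nearest $k$ when $k\le(p\log n)^{1/2}$, and to $p$ otherwise: in the sparse case the loss from thresholding and mean-centering is at most $k\{a^2(t^\star)+\nu_{a(t^\star)}\}\lesssim r(k)\le h(k)$, in the dense case the centering $-p$ cancels $\E\normm{C^\eta}_2^2$ exactly, leaving a residual $\lesssim\sqrt{p\log n}\le h(k)$; both are genuinely lower order since \eqref{SNRsingle} gives $\beta(\eta)^2\gtrsim h(k)$, and (a) handles the non-changing coordinates. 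For the upper bound, for each $t$ the non-changing coordinates contribute $\le\tfrac{\lambda_0}{2}r(t)$, absorbed by $-\lambda(t)$, while the changing coordinates contribute at most $\sum_{i\in S}C^v(i)^2=A(v)^2+R^v\le A(v)^2+C h(k)$ in the sparse regime; in the dense regime one must further use that each retained coordinate carries a centering $\nu_{a(t)}\ge a^2(t)\ge 4$ that over-compensates the spurious $\chi^2$-mass. This bookkeeping --- over the two sparsity regimes and over all $t\in\mathcal T$, calibrated so the error is \emph{exactly} of order $h(k)$ (which is what forces the precise shape of $h$, including the $\log\log(ep)$ correction in the dense case) --- is the main obstacle; the rest is routine concentration.

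Granting the two-sided bound, the conclusion follows by peeling. Since $\widehat\eta_\lambda$ maximises $v\mapsto S^v_\lambda$, on $\mathcal A$ we get $A(\widehat\eta_\lambda)^2\ge S^{\widehat\eta_\lambda}_\lambda-C h(k)\ge S^\eta_\lambda-C h(k)\ge A(\eta)^2-2C h(k)$. If $|\widehat\eta_\lambda-\eta|\ge\Delta$, then by (ii), (i), (c), and $\beta(\eta)\le\phi\sqrt\Delta$,
\[
A(\eta)^2-A(\widehat\eta_\lambda)^2\ \ge\ \tfrac12\phi^2\Delta-4C\phi\sqrt{\Delta\log n}-C^2\log n\ \ge\ \tfrac14\phi^2\Delta\ \ge\ \tfrac14 C_0 h(k)\ >\ 2C h(k)
\]
for $C_0$ large enough (using $\phi^2\Delta\ge C_0 h(k)\ge C_0\log n$), a contradiction; so $|\widehat\eta_\lambda-\eta|<\Delta$. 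Now take any $v$ with $C_1 h(k)/\phi^2\le|v-\eta|<\Delta$; the easier case $\beta(v)<\tfrac12\beta(\eta)$ is disposed of directly (then $A(\eta)^2-A(v)^2\gtrsim\beta(\eta)^2\gtrsim\phi^2\Delta\ge C_0h(k)>2Ch(k)$), so assume $\beta(v)\ge\tfrac12\beta(\eta)$, whence $A(v)>0$ since $\beta(\eta)\gg\sqrt{\log n}$. Decomposing $|v-\eta|\in[2^j,2^{j+1})$ and using (ii) and (c),
\[
A(\eta)-A(v)\ \ge\ \big\{\beta(\eta)-\beta(v)\big\}-|\zeta(\eta)-\zeta(v)|\ \ge\ \tfrac{\phi\,2^j}{4\sqrt\Delta}-C\sqrt{\tfrac{2^{j+1}\log n}{\Delta}}\ \gtrsim\ \tfrac{\phi\,2^j}{\sqrt\Delta},
\]
the last step because $2^j\ge C_1 h(k)/\phi^2\gtrsim\log n/\phi^2$; combined with $A(\eta)+A(v)\gtrsim\beta(\eta)\gtrsim\phi\sqrt\Delta$ this yields $A(\eta)^2-A(v)^2\gtrsim\phi^2\,2^j\ge C_1 h(k)>2C h(k)$ once $C_1$ is a large enough multiple of $C$. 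Hence $A(v)^2<A(\eta)^2-2C h(k)\le A(\widehat\eta_\lambda)^2$, so $v\neq\widehat\eta_\lambda$. Therefore $|\widehat\eta_\lambda-\eta|\le C_1 h(k)/\phi^2$ on $\mathcal A$, which is the assertion after undoing the reduction to $\sigma=1$.
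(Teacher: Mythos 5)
Your strategy is sound in outline and, once unpacked, is essentially the paper's argument in a different parametrization: your aligned statistic $A(v)=\beta(v)+\zeta(v)$ satisfies $\beta(\eta)^2-\beta(v)^2=\sum_{i\in S}\{T^\eta(\mu\roww{i})^2-T^v(\mu\roww{i})^2\}$, which is exactly the population quantity $\beta_v$ that the paper controls via Lemma \ref{lemma4NOT}; your good events (a)--(c) are the analogues of $\mathcal{E}_1$--$\mathcal{E}_4$ in Lemma \ref{lemma15}; and where the paper lower-bounds $S^\eta_\lambda-S^v_\lambda$ by $\beta_v-2\{2\beta_v h(k)\}^{1/2}-C h(k)$ and solves the quadratic inequality, you compare $A(\eta)^2$ with $A(v)^2$ by dyadic peeling --- both conclusions work and are interchangeable. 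Your computations (i), (ii), the increment bound on $\zeta$, the lower bound $S^\eta_\lambda\ge A(\eta)^2-Ch(k)$, and the sparse-regime upper bound (non-changing coordinates absorbed by $\lambda(t)=\lambda_0 r(t)$ for a large universal $\lambda_0$, residual $R^v\lesssim k+\surd(k\log n)+\log n\lesssim h(k)$) are all correct.

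The genuine gap is in the step you defer as bookkeeping: the uniform upper bound $S^v_\lambda\le A(v)^2+Ch(k)$ when the true change is dense, $k>(p\log n)^{1/2}$, but the maximizing candidate sparsity $t^*\in\mathcal{T}$ is sparse. There your device of bounding the changing coordinates by $\sum_{i\in S}C^v(i)^2=A(v)^2+R^v$ fails, since $R^v\sim\chi^2_{k-1}$ is of order $k$ (e.g.\ of order $p$ when $k\asymp p$), far exceeding $h(k)=\left[p\left\{\log n\vee\log\log(ep)\right\}\right]^{1/2}$; and the heuristic that the centering $\nu_{a(t)}$ ``over-compensates the spurious $\chi^2$-mass'' is only a null-model computation. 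What is actually needed is, uniformly over all $0<v<n$ and all sparse $t\in\mathcal{T}$,
\begin{align}
\sum_{i=1}^p\left[\left\{C^v(i)^2-\nu_{a(t)}\right\}\ind\left\{|C^v(i)|\ge a(t)\right\}-C^v(i)^2+1\right]\le C\,h(p)+C\,r(t),
\end{align}
in the presence of up to $p$ nonzero means $T^v(\mu\roww{i})$ that vary with $v$, and this does not follow from the events you listed. The paper devotes event $\mathcal{E}_4$ of Lemma \ref{lemma15} to precisely this, and the key missing idea is the stochastic-dominance Lemma \ref{lemma16}: per coordinate, the thresholded-and-centered statistic minus the full centered square is stochastically decreasing in the mean magnitude, so the sum is dominated by its null version, after which a union bound over $v$ and over $t\in\mathcal{T}$ yields the bound and, incidentally, the $\log\log(ep)$ term in $h$ that you correctly anticipate but do not derive. (A second, much milder omission: your event (a) needs a lower-tail analogue of the Liu et al.\ bound for thresholded sums, which the paper proves separately as Lemma \ref{lemma5}; that one is a routine Chernoff argument.) Supplying the dominance step, or an equivalent uniform comparison valid for non-centred coordinates, would complete your proof.
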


The SNR requirement \eqref{SNRsingle} implies that the absolute estimation error of $\widehat{\eta}_{\lambda}$ satisfies $ C_1h(k) {\sigma^2}/{\phi^2} \leq C_0 / C_1 \Delta < \Delta$ whenever the conditions of the Theorem holds. In particular, in an asymptotic regime where $k$, $p$, $\Delta$ and $\phi$ vary with $n$, the quantity $|\widehat{\eta}_{\lambda} - \eta|/\Delta$ converges in probability to $0$ as $n \rightarrow \infty$ whenever $({\phi^2\Delta})/\{{\sigma^2}h(k)\}$ diverges with $n$. Similarly, if ${\phi^2}/\{{\sigma^2}h(k)\}$ diverges with $n$, then $\widehat{\eta}_{\lambda}$ converges in probability to $\eta$ as $n\rightarrow \infty$. Note that Theorem \ref{locprop} requires that the penalty function $\lambda(t)$ has a specific functional form. For practical choices of the penalty function $\lambda(t)$, we refer to Appendix \ref{appendixA}.
 
Some performance comparisons with related methods are in order. In comparison with Theorem \ref{locprop}, the SNR condition for the Inspect method \citep{wang_high_2018} in the single changepoint case is of the form $\phi^2 \Delta/\sigma^2 \geq C v(k, n, p,\Delta)$ for some $C>0$, where $ v(k, n, p,\Delta) = (n/\Delta) k \log \left( p \log n \right)$. Ignoring constants, we see that lthe SNR condition of ESAC is weaker than that of Inspect whenever $k\geq \log n$ and for all values of $k$ whenever $p\geq n/\log n$. Note also that $v(k,n,p,\Delta)$ consists of the factor $n / \Delta$, which is not the case for ESAC. Once the SNR condition for Inspect is satisfied, its error rate is of order $(\sigma^2/\phi^2) \log \log n$, which is smaller than that of ESAC, and especially so whenever $k$ is large. For the Double CUSUM algorithm of \citet[Section~4]{cho_change-point_2016} in the single changepoint case where $\sigma=1$, the asymptotic SNR requirement for consistency implies that $\phi^2\Delta / (k \log^2 n) \rightarrow \infty$. By "consistency" we mean that $|\widehat{\eta}_{\text{DC}} - \eta|/\Delta$ converges to $0$ in probability, where $\widehat{\eta}_{\text{DC}}$ is the Double CUSUM estimate of $\eta$. The (asymptotic) SNR requirement for the Double CUSUM algorithm is thus larger than that of ESAC by a factor of at least $k \log^2(n) / r(k)$, which grows with $k$ and diverges with $n$. Note that theoretical results for the Double CUSUM algorithm only hold whenever $p$ is of the same order as $n^{\omega}$ for some fixed $\omega>0$. For an empirical comparison between the methods in the single changepoint case, we refer to Section \ref{numericsingle}. 

The SNR requirement of ESAC grows much more slowly with $k$ than Inspect and the DC algorithm, which implies that ESAC is able to reasonably estimate dense changepoints under much lower signal strength. Figure \ref{ratessingle} displays the SNR requirement of ESAC, Inspect and DC as a function of $k$, when $n=p=500$. On the left plot, the SNRs are plotted on linear scale, while on a log scale to the right. The boundary between the dense and sparse regimes is indicated by the vertical dashed line at $k= (p\log n)^{1/2}$. As the SNRs are only defined up to constants, each SNR requirement is normalized to have value $1$ for sparsity $k=1$. For ESAC, the normalized SNR ratio is $h(k)/h(1)$, while it is simply $k$ for Inspect and DC. As we can see, the SNR condition of ESAC grows much more slowly with $k$ than Inspect or DC. We remark that the apparent bulk in the SNR requirement of ESAC is a result of keeping the mathematical expression simple, as remarked in Section \ref{singletest}.

\begin{figure}[ht]
\includegraphics[width=\textwidth]{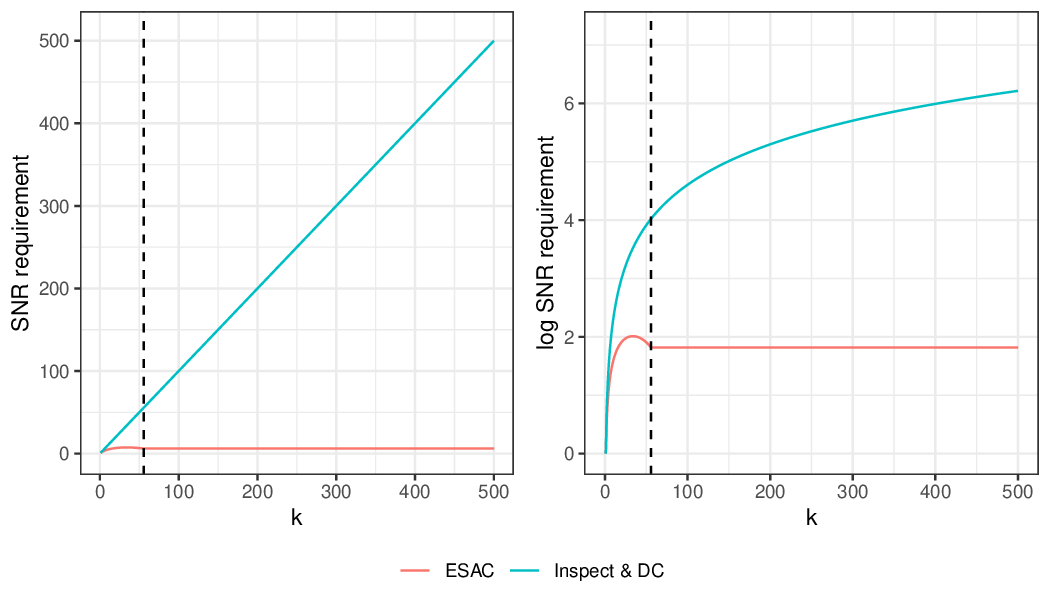}
\caption{Normalized SNR conditions of ESAC (red) and Inspect and the Double CUSUM algorithm (blue), plotted as a function of the sparsity $k$, on a linear scale (left) and log scale (right). The boundary between the dense and sparse regimes is given by a vertical dashed line at $k = (p\log n)^{1/2}$.}\label{ratessingle}
\end{figure}

Lastly, we remark the following. For fixed values of $n$ and $p$, the function $h(k)$ is increasing for most values of $k$. 
Thus, the estimation error and the SNR condition of ESAC tend to grow with the sparsity $k$. As an example, the error rate for estimating a changepoint with sparsity $k=1$ is $(\sigma^2 / \phi^2) \left( \log n \vee \log p\right)$, while the same error rate becomes $(\sigma^2 / \phi^2)\left[ p \left\{ \log n \vee \log \log (ep)\right\}\right]^{1/2}$ for $k=p$. 

\subsection{Detection and estimation of multiple changepoints}\label{fullprocedure}
We now consider the combined problem of detecting and estimating an unknown number of changepoints in the data $X_1, \ldots , X_n$. That is, our goal is to estimate $J$, the number of changepoints, and $(\eta_1, \ldots, \eta_J)^\top$, the changepoint locations. 

Our proposed test statistic from Section \ref{singletest} and changepoint estimator from Section \ref{locsingle} are designed for segments with at most a single changepoint. Hence, a search procedure is needed to allow for multiple changepoint search. Our choice of search procedure is a slight variant of Seeded Binary Segmentation \citep{kovacs_seeded_2022}. In essence, the Seeded Binary Segmentation search procedure generates a deterministic set of intervals (which they call seeded intervals), in which single changepoint candidates are searched for. As a single changepoint may be detected within several distinct intervals, a choice must be made regarding which of these intervals is to be used for estimating its location. We have opted for the Narrowest-Over-Threshold \citep{baranowski_narrowest-over-threshold_2019} choice of changepoints, using the narrowest interval in which a changepoint is detected to estimate its location. Our modification of Seeded Binary Segmentation is minor; in our variant, the generation of intervals is controlled by two parameters, $\alpha$ and $K$. The parameter $K$ controls the distance between the centers of two consecutive intervals of the same length, and the parameter $\alpha$ controls the growth rate of the interval lengths. Our algorithm for generating seeded intervals is found in Appendix \ref{appendixA} (Algorithm \ref{alg:seededintervals}).

Our proposed multiple changepoint estimation procedure, ESAC, is as follows. Let $\mathcal{M} =  \{(s_m, e_m] \ ; \ m\in [M]\}$ denote an enumerated collection of candidate intervals. Let $\gamma(t), \lambda(t)$ denote the penalty functions used in the score statistic \eqref{stdef}, for changepoint detection and estimation, respectively. Given data matrix $X$, our proposed procedure is initiated by calling the recursive algorithm ESAC$(X, (0,n], \mathcal{M}, \emptyset, \gamma, \lambda)$, defined by Algorithm \ref{alg:cap2}. 

For the theoretical analysis of ESAC given in Section \ref{subsectheory}, we find it necessary to consider a slightly modified variant of the algorithm, defined by Algorithm \ref{alg:cap} in Appendix \ref{appendixA}. In this variant, candidate changepoint locations are trimmed away in the recursive step, discarding them from future use to detect or estimate further changepoints. The trimming of changepoints is introduced as a necessary technical step for the proof of Theorem \ref{adaptivetheorem} to go through, specifically to ensure that previously discovered changepoints are not re-discovered.  In practice, we find trimming to be unnecessary and even degrading of performance. An even more modified variant of ESAC, given by Algorithm \ref{alg:cap3} defined in Appendix \ref{appendixA}, takes only the midpoint of an interval as the only candidate changepoint location when testing for a changepoint, in addition to interval trimming.  This results in a substantial decrease in running time at the cost of reduced detection power, although the theoretical results in the next subsection hold for this variant as well. In practical application, we thus recommend using Algorithm \ref{alg:cap2} over Algorithms \ref{alg:cap} and \ref{alg:cap3}. A simulation study comparing the variants of ESAC is found in Appendix \ref{appendixC}.

\begin{algorithm}[H]
\caption{ESAC$\left( X, (s, e], \mathcal{M}, \mathcal{B}, \gamma, \lambda \right)$}\label{alg:cap2}
\textbf{Input: } A matrix of observations $X \in \RR^{p \times n}$, an open integer interval $(s,e)$ in which candidate changepoints are searched for, an enumerated collection $\mathcal{M} =  \{(s_m, e_m] \ ; \ m\in [M]\}$ of $M$ half open integer sub intervals of $(0,n]$, a set of already detected changepoints $\mathcal{B}$, and penalty functions $\gamma(t), \lambda(t)$.  \\
\textbf{Output: } Set $\mathcal{B}$ of already detected changepoints.\\
\begin{tabbing}
\qquad \enspace if {$e-s\leq 1$}\\
\qquad \qquad \textbf{return} $\mathcal{B}$\\
\qquad \enspace set $\mathcal{M}_{(s,e]} = \left\{   m \in [M] \ : \ (s_m, e_m] \subset (s,e]   \right\}$\\
\qquad \enspace set $\mathcal{O}_{(s,e]} = \left\{   m \in \mathcal{M}_{(s,e]} \ : \ \underset{s_m < v <e_m}{\max}  S^{v}_{\gamma, (s_m,e_m]} >0 \right\}$  \\
\qquad \enspace if {$\mathcal{O}_{(s,e]} = \emptyset$}\\
\qquad \qquad \textbf{return} $\mathcal{B}$\\
\qquad \enspace set $l^*= \underset{m \in \mathcal{O}_{(s,e]}}{\text{min}} |e_m - s_m|$\\
\qquad \enspace set $\mathcal{O}_{l^*} =  \left\{   m \in \mathcal{O}_{(s,e]} \ : \  |e_m - s_m| = l^* \right\}$\\
\qquad \enspace set $m^*= \underset{m \in \mathcal{O}_{l^*}}{\text{argmax}} \underset{s_m < v <e_m}{\max} S^v_{\lambda, (s_m, e_m]}$ \\
\qquad \enspace set $v^*= \underset{s_{m^*} <v <e_{m^*}}{\text{argmax}} S^{v}_{\lambda,(s_{m^*},e_{m^*}]}$\\
\qquad \enspace $\mathcal{B} \gets \mathcal{B} \cup \{v^*\}$\\
\qquad \enspace $\mathcal{B}  \gets \text{ESAC} \left( X,(s, v^*],\mathcal{M}, \mathcal{B}, \gamma, \lambda \right)$\\
\qquad \enspace $\mathcal{B} \gets \text{ESAC} \left( X,(v^*,e], \mathcal{M}, \mathcal{B}, \gamma, \lambda \right)$\\
\qquad \enspace \textbf{return} $\mathcal{B}$
\end{tabbing}
\end{algorithm}

\subsection{Theoretical results for the multiple changepoint case}\label{subsectheory}
For the variants of ESAC defined by either Algorithm \ref{alg:cap} or Algorithm \ref{alg:cap3} (both given in in Appendix \ref{appendixA}), we have the following finite-sample statistical result.
\begin{theorem}\label{adaptivetheorem}
Let $X \in \RR^{p\times n}$ follow the model in Section \ref{secproblemdesc}, and let $r(t)$ be defined as in \eqref{rdef}. Let $\mathcal{M}$ denote the set of candidate intervals generated from Algorithm \ref{alg:seededintervals} with parameters $\alpha \leq 2$, $K\geq 2$, and let the penalty function $\gamma(t)$ be defined as $\gamma(t) = \gamma_0 r(t)$. There exists a universal choice of $\gamma_0>0$, such that for some universal constants $C_0, C_1>0$, depending only on $\gamma_0$, and for any choice of $\lambda(t)$,
the following holds. \n\n
Let $\widehat{J}$ and $\widehat{\eta}_1, \ldots, \widehat{\eta}_{\widehat{J}}$ respectively be denote the estimated number of changepoints and sorted changepoint locations from Algorithm \ref{alg:cap} or \ref{alg:cap3}, using penalty functions $\gamma(t)$ and $\lambda(t)$, candidate intervals $\mathcal{M}$, and variance scaled input matrix 
$\widetilde{X} = X/\sigma$. If the SNR condition 
\begin{equation}
\frac{\phi_j^2 \Delta_j}{\sigma^2} \geq C_0 r({k}_j)\label{SNR}
\end{equation}holds for all $j \in [J]$, we have that
\begin{align}
\PP \left\{      \widehat{J} = J \ \cap  \  |\widehat{\eta}_j - \eta_j| \leq C_1 \frac{\sigma^2}{\phi_j^2} r({k}_j)  \ \forall \ j \in [J]    \right\} > 1-\frac{1}{n}\label{theresult}.
\end{align}
\end{theorem}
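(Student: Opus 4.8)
The plan is to carve out a high-probability ``good event'' on which the behaviour of ESAC becomes deterministic, and then run an induction over the recursion tree of Algorithm \ref{alg:cap}. Throughout we work with the variance-scaled matrix $\widetilde X = X/\sigma$, so effectively $\sigma = 1$, and restore $\sigma^2$ in \eqref{theresult} at the end. Let $\mathcal{A}$ be the intersection of two events. First, the ``no false alarm'' event: for every integer interval $(s,e]\subseteq(0,n]$ containing no changepoint, $\max_{s<v<e} S^v_{\gamma,(s,e]}\le 0$; as noted after \eqref{s2def}, a slight extension of Lemma \ref{lemma20} gives this simultaneously with probability at least $1-\tfrac{1}{2n}$ once $\gamma_0$ is large. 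Second, a ``detection'' event: for every seeded interval $(s_m,e_m]\in\mathcal{M}$ that isolates a single changepoint $\eta_j$ (i.e.\ $\eta_j\in(s_m,e_m)$, no other changepoint in $(s_m,e_m)$, and $\eta_j$ at distance $\gtrsim e_m-s_m$ from both endpoints) with $\phi_j^2(e_m-s_m)\ge c\,r(k_j)$, we have $\max_{s_m<v<e_m} S^v_{\gamma,(s_m,e_m]}>0$. This is exactly the content underlying the proof of Theorem \ref{locprop}, applied on the sub-interval $(s_m,e_m]$; since $|\mathcal{M}|$ is polynomial in $n$ and both $a(t)$ and $\gamma(t)=\gamma_0 r(t)$ already carry a $\log n$, a union bound over $\mathcal{M}$ keeps the failure probability below $\tfrac{1}{2n}$ for $\gamma_0$ large, so $\PP(\mathcal{A})\ge 1-\tfrac1n$ with $\gamma_0$ (hence $C_0,C_1$) universal.

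Next I record the seeded-interval geometry: because $\alpha\le 2$ and $K\ge 2$, Algorithm \ref{alg:seededintervals} produces, for each changepoint $\eta_j$ and each scale $\ell$ with $c'\,r(k_j)/\phi_j^2\le\ell\le\Delta_j$, a seeded interval of length between $\ell$ and $2\ell$ containing $\eta_j$ with both endpoints at distance $\asymp\ell$ from $\eta_j$; since $\ell\le\Delta_j$ it isolates $\eta_j$. Taking $\ell\asymp r(k_j)/\phi_j^2$, which is $<\Delta_j$ precisely by \eqref{SNR}, yields a seeded interval of that length which, on $\mathcal{A}$, triggers a detection. Now the core deterministic step on $\mathcal{A}$: consider a recursive call on $(s,e]$ and let $j^\star$ minimise $r(k_j)/\phi_j^2$ over changepoints $\eta_j\in(s,e)$ having ``enough room'', meaning $\phi_j^2\,\mathrm{dist}(\eta_j,\{s,e\})\gtrsim r(k_j)$. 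If no such $\eta_j$ exists, a residual-signal bound together with the no-false-alarm event forces $\mathcal{O}_{(s,e]}=\emptyset$ and the call stops harmlessly. Otherwise the seeded interval from the geometry step lies inside $(s,e]$ and is detecting, so $\mathcal{O}_{(s,e]}\neq\emptyset$ and $l^*\le C\,r(k_{j^\star})/\phi_{j^\star}^2$. Crucially, any detecting interval of length $l^*$ contains exactly one changepoint: if it held $\eta_a<\eta_b$, then $\Delta_a\le\eta_b-\eta_a\le l^*\le C\,r(k_{j^\star})/\phi_{j^\star}^2\le C\,r(k_a)/\phi_a^2$, contradicting $\Delta_a\ge C_0\,r(k_a)/\phi_a^2$ once $C_0>C$. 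Letting $\eta_j$ be the unique changepoint in $(s_{m^*},e_{m^*}]$, then for \emph{any} choice of $\lambda$ the declared $v^*\in(s_{m^*},e_{m^*})$ satisfies
\[
|v^*-\eta_j|<e_{m^*}-s_{m^*}=l^*\le C\,\frac{r(k_{j^\star})}{\phi_{j^\star}^2}\le C\,\frac{r(k_j)}{\phi_j^2},
\]
which is the asserted bound $C_1(\sigma^2/\phi_j^2)r(k_j)$ after undoing the scaling.

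To finish, I run the induction with invariant: the endpoints $s,e$ of any call are each either in $\{0,n\}$ or within the target bound of an already-recorded changepoint, each recorded changepoint is within its target bound of a distinct true changepoint lying outside $(s,e)$, and every true changepoint in $(s,e)$ is not yet recorded. Since $(s_{m^*},e_{m^*})$ contains only $\eta_j$, the trimming in Algorithm \ref{alg:cap} removes $\eta_j$ from both child intervals $(s,s_{m^*}+1]$ and $(e_{m^*}-1,e]$ and removes no other changepoint, so both children satisfy the invariant with $v^*$ appended to $\mathcal B$. Iterating, every changepoint is eventually the unique one isolated in some call and recorded exactly once at the stated accuracy, and the no-false-alarm event guarantees nothing else is recorded; hence the recorded set is in bijection with $\{\eta_1,\dots,\eta_J\}$, giving $\widehat J=J$ and the error bounds of \eqref{theresult} on $\mathcal{A}$. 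The same argument applies verbatim to Algorithm \ref{alg:cap3}.

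The main obstacle is the interval bookkeeping in Steps two and three: one must verify that ``enough room'' is genuinely preserved along the recursion, i.e.\ that trimming a $\Theta(l^*)$-neighbourhood of a detected changepoint never destroys the detectability of a still-undetected neighbour, which is the delicate part of all Narrowest-Over-Threshold / Seeded Binary Segmentation analyses and is where $C_0$, $C_1$ and the seeded-grid constants ($\alpha\le2$, $K\ge2$) must be tuned against one another. Coupled with obtaining the simultaneous detection lower bounds over all of $\mathcal{M}$ while retaining probability $1-1/n$ — precisely the reason $\gamma(t)$ must scale like $r(t)$ with its inflated $\log n$ rather than like the test statistic of \citet{liu_minimax_2021} — these are the substantive components of the proof.
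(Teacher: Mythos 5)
Your overall architecture is the paper's: a good event built from Lemma-\ref{lemma20}-type bounds (no false alarms on changepoint-free intervals plus a detection guarantee), the seeded-interval geometry of Lemma \ref{lemma14}, and an induction over the Narrowest-Over-Threshold recursion. But the decisive step of that induction is exactly what you leave undone. The paper's inductive hypothesis is stronger than yours: every changepoint inside the current interval $(s,e]$ has distance at least $\Delta_j/2$ to both endpoints, and the heart of the proof is the explicit verification that the trimmed children preserve this, namely $s_{m^*}+1 = e_{m^*}+1-l^* \geq \eta_j + 2 - \left\{C_1 r(k_j)/\phi_j^2 \vee 2\right\} \geq \eta_{j-1}+\Delta_{j-1}/2$, which hinges on $C_1 r(k_j)/\phi_j^2 \leq \Delta_j/2$, i.e.\ on fixing $C_0 = 2C_1$ against the SNR condition \eqref{SNR}. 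You explicitly defer this ("the main obstacle\dots must be tuned against one another"), so the part of the argument that actually makes the recursion close is asserted rather than proved.

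Because your invariant is weaker, you are forced to treat a case the paper never encounters: a changepoint in $(s,e)$ without "enough room." You dispose of it by claiming a "residual-signal bound together with the no-false-alarm event forces $\mathcal{O}_{(s,e]}=\emptyset$," but this does not follow from your event $\mathcal{A}$: the no-false-alarm part only controls seeded intervals containing \emph{no} changepoint, while a seeded subinterval containing a boundary-adjacent changepoint has positive drift in its score and nothing in $\mathcal{A}$ prevents it from exceeding the threshold. If it does, the call records an estimate whose accuracy and effect on $\widehat J$ your argument no longer controls. Relatedly, your uniqueness and localization chain $l^*\le C\,r(k_{j^\star})/\phi_{j^\star}^2\le C\,r(k_j)/\phi_j^2$ uses minimality of $j^\star$ over changepoints \emph{with} enough room, which need not apply to the changepoint actually sitting in the selected narrowest interval unless room preservation has already been established. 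So the gap is genuine: either strengthen the invariant to the paper's form and carry out the trimming computation, or supply (and add to the good event) the missing score upper bound for intervals with weak residual signal.
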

\n
The explicit values of $\gamma_0, C_0$ and $C_1$ can be found in the proof of Theorem \ref{adaptivetheorem} in Appendix \ref{secproofs}. We remark that these constants have not been optimized. In practice, we recommend choosing the penalty functions $\gamma(t)$ via Monte Carlo simulation or setting $\lambda(t), \gamma(t)$ proportional to a slight variant of $r(t)$. In particular, when using $\lambda(t), \gamma(t) \propto r(t)$, our simulations suggest that the leading constants can be chosen independently of $n$ and $p$, at least for the values of $(n,p)$ we have considered. For further details and recommendations, we refer to Appendix \ref{impdetails}. 

Note that, whenever the SNR condition \eqref{SNR} holds, the localization error of ESAC satisfies $C_1 {\sigma^2} r({k}_j)/{\phi_j}  < \Delta_j$. If the stronger condition ${\phi_j^2 }/\{{\sigma^2 r({k}_j)}\} \rightarrow \infty$ as $n\rightarrow \infty$ holds for all $j$ (allowing all model parameters to vary with $n$), then ${\max}_{j \in [J]} |  \widehat{\eta}_j - \eta_j |$ converges to $0$ in probability as $n \rightarrow \infty$. Note also that Theorem \ref{adaptivetheorem} holds for any choice of the penalty function $\lambda(t)$. This is because the bound on the estimation error in Theorem \ref{adaptivetheorem} relies upon the detection properties of penalized score $S^v_{\gamma, (s,e]}$ rather than the localization properties of our estimator $\widehat{\eta}_{\lambda}$. Specifically, since ESAC uses Narrowest-over-Threshold selection of changepoints \citep{baranowski_narrowest-over-threshold_2019}, it suffices to upper bound the minimum interval width required to detect a changepoint, which for each $j$ is of the order of ${\sigma^2}r({k}_j)/{\phi_j^2}$. This observation is due to \citet{pilliat_optimal_2022}. In practice, we experience that using the estimator $\widehat{\eta}_{\lambda}$ for changepoint localization improves performance compared to using e.g. the mid-point of an interval. For more details and a simulation study, see Appendix \ref{appendixC}.

Some performance comparisons to related methods are in order. Theorem \ref{adaptivetheorem} gives a very similar theoretical guarantee as the method of \citet[Corollary 3]{pilliat_optimal_2022}. When the probability of the desired event in the Corollary is the same as in Theorem \ref{adaptivetheorem}, the method of Pilliat obtains the same error rate under an up to constants equal SNR requirement. 

The Inspect method (\citealt{wang_high_2018}) requires a signal-to-noise condition of the form ${\phi^2 \Delta}/{\sigma^2} \geq C \log(np)  \{ \left({n}/{\Delta}\right)^3 \ \vee \ k \}({n}/{\Delta})$, where $C>0$ is some universal constant, $\Delta = \min_{j=1,\ldots, J} \Delta_j$, $\phi = \min_{j=1, \ldots, J} \phi_j$ and $k = \max_{j=1,\ldots, J} k_j$. The SNR condition of Inspect is (up to constants) larger than \eqref{SNR} by factors of at least ${n}/{\Delta} \geq 1$ and ${k\log (np)}/{r(k)} \geq 1$. The former factor is only close to $1$ whenever there are  few changepoints with large spacing between, while the latter factor is only close to $1$ whenever all changepoints are sparse. For the Double CUSUM algorithm \citep{cho_change-point_2016} when $\sigma=1$, its (asymptotic) SNR condition requires $\phi_j^2 \Delta_j / ( n^{3-5\beta/2} k_j \log^2 n)\rightarrow 0$ whenever $n^{\beta} = \mathcal{O} (\Delta_j)$ for all $j \in [J]$ for some $\beta \in (6/7,1]$, as well as $p$ being of the same order as $n^{\omega}$ for some fixed $\omega>0$. The SNR requirement for the Double CUSUM algorithm is thus larger than that of ESAC by factors of at least $\surd{n}$ and $k_j \log^2(n) / r(k_j)$. The latter factor diverges as $n \rightarrow \infty$ or $k_j \rightarrow \infty$. 

Figure \ref{ratesmulti} displays the SNR requirements of ESAC (red), Inspect (green) and the Double CUSUM algorithm (blue) on a log scale as a function of the sparsity $k$, plotted for different values of $n$ and $p$. Here we have removed the factor $\left({n}/{\Delta}\right)^3 >1$ from the SNR requirement of Inspect, as well as setting $\Delta = n/2$. To the left, we plot the requirements for $n=10^2, 10^3, 10^4$, indicated by solid, dashed and dotted lines, respectively, keeping $p=500$ fixed. Moreover, each SNR requirement is normalized to have value $1$ for sparsity $k=1$ at $n=100$. To the right, we plot the requirements for $p=500,1000,2000$, indicated by solid, dashed and dotted lines, respectively, keeping $n=1000$ fixed. Here, each SNR requirement is normalized to have value $1$ for sparsity $k=1$ for $p=500$.  From Figure \ref{ratesmulti} we see that the SNR requirement of ESAC grows substantially slower with $k$ than Inspect and the Double CUSUM. The SNR requirement of ESAC and Inspect grow substantially slower with $n$ than the Double CUSUM, while the SNR requirement of ESAC grows faster with $p$ than Inspect and the Double CUSUM. As before, the apparent bulk in the SNR requirement of ESAC is a result of keeping $r(t)$ mathematically simple, as remarked in Section \ref{singletest}.

\begin{figure}[ht]
\includegraphics[width=\textwidth]{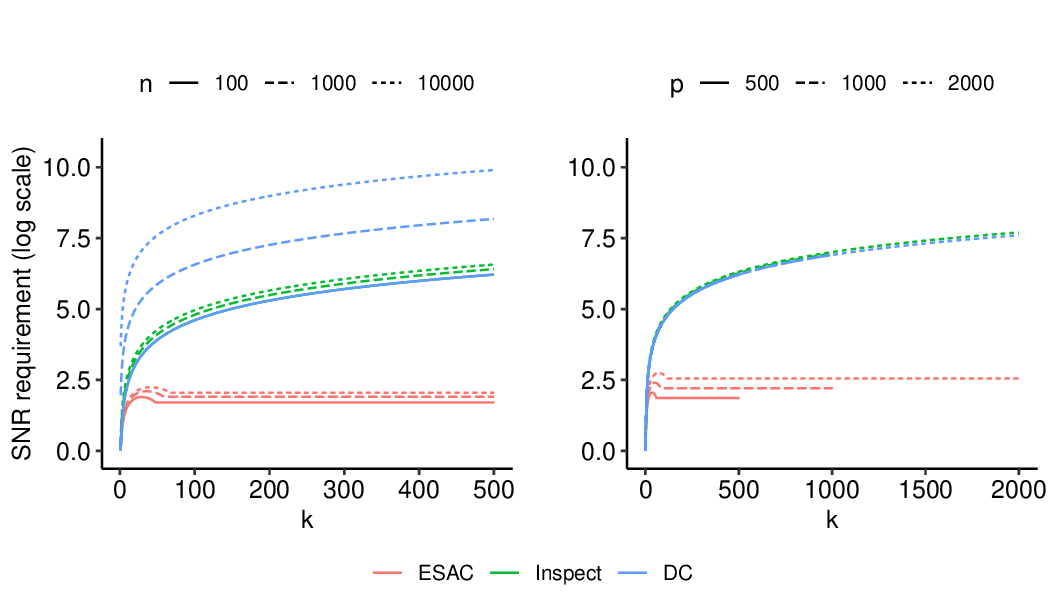}
\caption{Normalized SNR conditions of ESAC (red) and Inspect (green) and the Double CUSUM algorithm (blue) on a log scale, plotted as a function of the sparsity $k$, and varying values of $n$ (left) and $p$ (right). }\label{ratesmulti}
\end{figure}

As for the changepoint location error rates , Inspect obtains a theoretical error rate of ${\sigma^2}({n}/{\Delta})^4 \log(np) / \phi^2$ whenever its SNR condition is satisfied. In comparison, the theoretical error rate of ESAC is at most ${\sigma^2} r(k_j) / {\phi_j^2} r(k_j)$, which is smaller than the error rate of Inspect whenever ${n}/{\Delta}$ is large (short distance between changepoints) or $k$ is large.
For the Double CUSUM algorithm, in the case where $\sigma=1$, the theoretical error rate for each changepoint is at least of the order $\log^2(n) k_j / \phi_j^2$. This is larger than the error rate of ESAC by a factor of at least $k_j \log^2(n) / r(k_j)$, which diverges with $n$ and grows with $k_j$. 

We now turn to optimality considerations. Observe first that the SNR condition \eqref{SNR} is up to constants minimal for identifying $J$, the number of changepoints. Indeed, for any $n$, $p$ and $k \leq p$, an implication of Theorem 2 in \citet{pilliat_optimal_2022} is that
$$
\sup_{P \in Q(n,p,k)} \ P \left( \left|\widehat{\eta}\right|  \neq J\right) \geq 1/4
$$
for all estimators $\widehat{\eta} = \widehat{\eta}(X_1, \ldots, X_n)$ of the changepoint vector $\eta = (\eta_1, \ldots, \eta_J)$, where $Q(n,p,k)$ is the class of all probability distributions of $X_1, \ldots, X_n$ corresponding to the model given in Section \ref{secproblemdesc} for which $k_j \leq k$ and $\phi_j^2 \Delta_j / \sigma^2 \geq c r(k_j)$ for all $j \in [J]$, for some sufficiently small $c>0$. As for the changepoint location error rate, the minimax rate has been shown by \citet{wang_high_2018} to be at least $\sigma^2 / (16\phi^2)$ whenever $\Delta^{-1} \leq \phi^2/\sigma^2 \leq 1$. Hence, at least in this region of the parameter space, the estimator $\widehat{\eta}$ from Section \ref{locsingle} and the full ESAC algorithm have minimax optimal error rates up to factors of $h(k)$ and $r(k)$, respectively, where $k$ is the sparsity of the changepoint in question. Note that, while $r(k)$ and $h(k)$ are constant multiples of $\log n$ whenever $k=1$, they grow substantially with $k$. Hence the error rate of ESAC is only close to minimax rate optimal for small values of the sparsity $k$. 

Finally, we consider the computational cost of ESAC as a function of the size of the data. The following Proposition shows that ESAC has a log-linear computational cost. 

\begin{proposition}\label{compprop}
Consider any input matrix $X \in \RR^{p \times n}$, penalty functions $\gamma(t), \lambda(t)$ and seeded intervals generated by Algorithm \ref{alg:seededintervals} with fixed input parameters $\alpha>1$ and $K \in \NN$. Then the computational complexity of ESAC (either Algorithm \ref{alg:cap2}, \ref{alg:cap} or \ref{alg:cap3}) measured in floating point operations 
is of order $\mathcal{O}\left\{np  \log (p\log n)  \right\}$ in the best case and $\mathcal{O}\left\{np \log n \log (p \log  n ) \right\}$ in the worst case. 
\end{proposition}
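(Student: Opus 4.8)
The plan is to split the running time of ESAC into a one-off preprocessing stage, the testing stage (all evaluations of $S^v_{\gamma,(s_m,e_m]}$ in the construction of the sets $\mathcal O_{(s,e]}$), and the estimation stage (all evaluations of $S^v_{\lambda,(s_m,e_m]}$ when selecting $m^*$ and $v^*$), and to bound each in terms of two quantities: the cost of evaluating one penalized score at one candidate position, and the total length of the seeded intervals that are actually scored. First I would record the elementary cost of one score evaluation. Precomputing the partial sums $\sum_{t\le v}\widetilde X_{i,t}$ for all $i\in[p]$ and $v\in[n]$ takes $O(np)$ flops, after which every CUSUM $C^v_{(s,e]}(i)$ in \eqref{c2def} is available in $O(1)$; hence the $p$ CUSUMs at a fixed pair $(v,(s,e])$ cost $O(p)$, and for each of the $|\mathcal T|$ candidate sparsities the thresholded, mean-centred sum in \eqref{st2def} costs a further $O(p)$. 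Since $|\mathcal T| = O(\log_2\surd(p\log n)) = O(\log(p\log n))$ by \eqref{mathcaltdef}, evaluating $\max_{t\in\mathcal T}S^v_{\cdot,(s,e]}(t)$ costs $O\{p\log(p\log n)\}$ per position, so scoring an entire seeded interval of length $\ell$ (all interior positions, together with the argmax over positions and the interval maximum) costs $O\{\ell\, p\log(p\log n)\}$.

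Next I would count the seeded intervals produced by Algorithm \ref{alg:seededintervals}. With $\alpha>1$ and $K\in\NN$ fixed, the admissible interval lengths form a geometric grid between a constant and $n$, giving $O(\log_\alpha n)=O(\log n)$ distinct scales; at the scale of length $\ell$ the interval centres are $\Theta(\ell/K)$ apart, so there are $O(Kn/\ell)$ such intervals. Consequently the number of seeded intervals is $|\mathcal M| = O(Kn)$ (and generating $\mathcal M$ costs $O(n)$ flops), while their total length is $\sum_m |e_m-s_m| = O(Kn\log n)$. Combining this with the previous paragraph, scoring \emph{every} seeded interval exactly once costs $O\{np\log n\log(p\log n)\}$, and scoring only the seeded intervals of the \emph{narrowest} scale costs $O\{np\log(p\log n)\}$.

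For the worst-case bound it then remains to check that the recursion does not inflate these counts. The two points to verify are: (i) within a single recursive call the estimation stage only rescans seeded intervals whose width equals the detection width $l^*$ of that call, and these lie inside the call's interval, so their total length is at most the length of that interval, while the call intervals occurring at a fixed recursion depth are pairwise disjoint; and (ii) with per-interval maxima memoised across calls, each seeded interval contributes to the testing stage at most once. Feeding (i)--(ii) into the interval counts above, and adding the $O(np)$ preprocessing and the $O(n)$ interval generation, yields the worst-case complexity $O\{np\log n\log(p\log n)\}$. For the best case one exploits the Narrowest-Over-Threshold pruning explicitly: as soon as a call detects a changepoint at width $l^*$ it discards all seeded intervals of width exceeding $l^*$, whose scores are then never computed; so in the favourable situation where every detection occurs at the narrowest admissible scale, only $O(n)$ worth of interval length is ever scored, giving $O\{np\log(p\log n)\}$.

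The step I expect to be the main obstacle is precisely the amortised accounting in (i)--(ii): showing that neither the repeated descent into sub-intervals after a split, nor the rescans in the estimation stage, can make the total number of scored (position, interval) pairs exceed $O(\sum_m|e_m-s_m|)$ in general and $O(n)$ in the favourable case. This relies on memoising the per-interval score maxima, on the combinatorics of the seeded grid, and on the disjointness of the two sibling sub-intervals produced by a split (sharpened by the trimming of candidate locations); once these are in place the remaining ingredients — the $O(np)$ preprocessing, the $O\{p\log(p\log n)\}$ per-evaluation cost, and the $O(\log n)$ scales — combine routinely to give the stated best- and worst-case bounds.
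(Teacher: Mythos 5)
Your proposal is correct and follows essentially the same route as the paper: an $\mathcal{O}(np)$ cumulative-sum preprocessing, an $\mathcal{O}\{p\log(p\log n)\}$ cost per evaluation of $S^v_{\cdot,(s,e]}$ owing to $|\mathcal{T}| = \mathcal{O}\{\log(p\log n)\}$, and a count of $\mathcal{O}(n\log n)$ admissible triples $(s,v,e)$ from the geometric seeded grid (which you re-derive rather than invoke as the paper's Lemma \ref{lemma221}), with the best case realised by detections at the narrowest scale and the worst case by scoring everything. The only real difference is that the paper dispenses with your amortisation worry (i)--(ii) by taking the worst case to be the run with no detections at all, in which there is no recursion and each triple is scored exactly once, and by asserting the best-case count of $n-1$ length-two intervals directly, whereas you make explicit the memoisation of per-interval scores that this accounting implicitly relies on.
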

In comparison, the computational complexity of the Pilliat algorithm is $\mathcal{O}\{np \log (np)\}$, which is of slightly smaller order than the worst-case complexity of ESAC. For the other multiple changepoint methods like the Double CUSUM, Sparsified Binary Segmentation, SUBSET and Inspect, no specific forms of computational cost are provided in the respective articles. For an empirical comparison of running times, we refer to the next section.

\section{Simulations}\label{secnumerical}
We now compare the empirical performance of ESAC with the following state-of-the-art methods for high-dimensional changepoint detection and estimation: a variant of the Inspect method by \citet{wang_high_2018}, the method of \citet{pilliat_optimal_2022} hereby called Pilliat, Sparsified Binary Segmentation  of \citet{cho_multiple-change-point_2015}, the Double CUSUM algorithm of \citet{cho_change-point_2016}, and the SUBSET method by \citet{tickle_computationally_2021}. We introduce a slightly modified variant of Inspect, based on Narrowest-over-Threshold search, mainly to reduce computational cost. The details of our modified Inspect algorithm can be found in Appendix \ref{appendixB}. To run the Sparsified Binary Segmentation and Double CUSUM algorithms, we use the R package \textbf{hdbinseg} \citep{cho_hdbinseg_2018}. To run SUBSET, we use the code from the Github repository of \citet{tickle_subset_2022}. We have implemented the remaining methods ESAC, Pilliat and Inspect in the C programming language, which are found in the R package \textbf{HDCD} \citep{moen_hdcd_2023}, available on CRAN. We remark that our implementations of Inspect and the method \citeauthor{pilliat_optimal_2022} are orders of magnitude faster than their original implementations. Whenever running times are reported, they have been run using R (4.2.1) on a MacOS (12.3) computer with an (ARM) Apple M1 Pro CPU.

For each method in the simulation study, a choice of penalty parameters must be made, which is discussed in each subsection. In all simulations, changes in mean are taken to have magnitudes spread evenly across all affected coordinates. In Appendix \ref{appendixD} we run the same simulations with uneven and random magnitudes, giving similar results. In all simulations 
we assume $\sigma=1$ is unknown. We estimate $\sigma$ separately for each of the $p$ coordinates of the observed time series, and use it to normalize the data before applying each changepoint detection method. As is commonly done in the changepoint literature, we estimate the noise level by the median absolute deviation of first-order differences with scaling factor 1.05 for the Gaussian distribution. 

\subsection{Single changepoint estimation}\label{numericsingle}
In this subsection we consider the algorithms' performance when estimating the location of a single changepoint, assuming that it has already been detected. Our simulations are run with parameters $n \in \left\{ 200,500\right\}$, $p \in \left\{ 100,1000,5000\right\}$ , $k \in \left\{1, \lceil p^{1/3} \rceil, \lceil \surd{(p\log n)} \rceil , p\right\}$. For each configuration of these parameters, we simulate $1000$ data sets and apply the methods considered in the study. For each combination of $n,p,k$, the simulated data sets have a changepoint at $\eta = \lceil n/5 \rceil$ with change-vector $\theta \propto (I_1, \ldots, I_k , 0, \ldots, 0)^{\T}$, where $I_1, \ldots, I_k$ are drawn independently and uniformly from $\{-1,1\}$. For each sample we scale $\theta$ such that $\Delta\phi^2 = ({n}/{5})\normm{\theta}_2^2 = ({5^2}/{2^2})r(k)$, where $k$ is the sparsity of the change and $\Delta = \lceil n/5 \rceil$.

To keep the simulation study simple, we use the authors' recommended non-empirical choices of penalty parameters. We take ESAC to be the estimator given in \eqref{singlechangeloc}, with penalty function $\widetilde{\lambda}(t)$ as defined in Appendix \ref{impdetails}. As for Inspect, we use Algorithm 2 in \citet{wang_high_2018}, with penalty parameter $\lambda = \left\{{\log \left( p \log n\right)/2}\right\}^{1/2}$. 
For the Double CUSUM algorithm we set $\phi = -1$, corresponding to the version presented in Section 4.1 of \citet{cho_change-point_2016}. For Sparsified Binary Segmentation, a default choice of the threshold $\pi_T$ is not available, so we take $\pi_T$ to be the maximum value of the CUSUMs $| T_{[0,n)}^v (Z\roww{i} / \widehat{\sigma}_i)|$ over all values of $0<v<n$ and $i \in [p]$, where $Z_{v,i} {\sim} \N(0, 1)$ independently for $i \in [p]$, $v \in [n]$, and $\widehat{\sigma}_i$ is the median absolute deviation of the noise level in the $i$th series, based on $1000$ Monte Carlo samples. 
Whenever the Sparsified Binary Segmentation estimator is not defined, we set its output to be $1$. For both the Double CUSUM and Sparsified Binary Segmentation algorithm, we have specified $\textit{height}=1$ when calling the respective functions to turn the methods into single changepoint estimators. The Pilliat algorithm is not included in this simulation as there is no straightforward way to modify it into a single changepoint estimator. 

For each method and each configuration of parameters, Table \ref{tablesingleloc} displays the average Mean Squared Error (MSE) and average running time in milliseconds. The Double CUSUM and Sparsified Binary Segmentation methods are abbreviated as DC and SBS, respectively. For each configuration of parameters, the minimum value of both the MSE and the running is indicated in boldface. In terms of statistical accuracy, Table \ref{tablesingleloc} demonstrates that ESAC and SUBSET are the only methods with competitive accuracy across the sparsity regimes, although ESAC has a slight edge over SUBSET. ESAC has the lowest MSE in $14$ out of the $24$ different combinations of parameters (including both dense and sparse regimes), while SUBSET has the lowest MSE in $6$ out of $24$. When averaging the MSE over all the rows, ESAC is the clear winner, with SUBSET in second place. In comparison, the estimation accuracy of Inspect is excellent for $k = \lceil p^{1/3} \rceil$, but deteriorates for higher sparsity levels. The Double CUSUM algorithm displays excellent estimation accuracy when $k=1$, but often not so for dense changepoints (although this seems to vary slightly with $n$ and $p$). Sparsified Binary Segmentation has high estimation accuracy for sparse changepoints (especially when $k=1$), but the accuracy deteriorates for dense changepoints.

In terms of running time, ESAC is the clear winner, with execution time around one fifth of SUBSET, the runner up, and down to $4\%$ of the execution time of Inspect and Double CUSUM for large values of $p$. Note that SUBSET is the only method not implemented in C or C++, giving the other methods an advantage when comparing running times. We also remark that the running time of scaling the data by the median absolute deviations is not included in the running times of ESAC, Inspect and SUBSET, as it would otherwise dominate the running time. The running time of the scaling is included in the running times of the Double CUSUM and Sparsified Binary Segmentation algorithms, as the implementations of these algorithms do not offer an option to disable it.

\begin{table}[ht] \centering
\caption{Single changepoint estimation}
\label{tablesingleloc}
\small
\begin{adjustbox}{width=\columnwidth}
\begin{tabular}{@{\extracolsep{1pt}} ccccc|ccccc|ccccc}
\hline
\multicolumn{5}{c|}{Parameters} & \multicolumn{5}{c|}{MSE} &\multicolumn{5}{c|}{Time in milliseconds} \\ \hline 
$n$ & $p$ & $k$ &  $\eta$ & $\phi$ & \text{ESAC} & \text{Inspect} & \text{SBS} & \text{SUBSET} & \text{DC} & \text{ESAC} & \text{Inspect} &\text{SBS} & \text{SUBSET}& \text{DC} \\
\hline \
200 & 100 & 1 & 40 & 1.40 & 10.4 & 25.3 & 83.0 & 54.2 & \textbf{9.8}  & \textbf{0.4}  & 2.1 & 9.7 & 1.7 & 12.9 \\
200 & 100 & 5 & 40 & 2.00 & 5.8 & 4.1 & 389.0 & \textbf{3.0}  & 15.3 & \textbf{0.3}  & 2.0 & 10.4 & 1.9 & 13.0 \\
200 & 100 & 24 & 40 & 1.90 & \textbf{96.5}  & 139.6 & 1495.8 & 251.1 & 953.4 & \textbf{0.2}  & 2.0 & 9.0 & 1.6 & 14.7 \\
200 & 100 & 100 & 40 & 1.90 & \textbf{95.1}  & 425.9 & 1520.6 & 250.8 & 2807.6 & \textbf{0.2}  & 2.0 & 9.1 & 1.6 & 11.7 \\
200 & 1000 & 1 & 40 & 1.52 & 6.9 & 105.8 & 29.5 & 33.1 & \textbf{6.5}  & \textbf{2.0}  & 40.8 & 84.5 & 12.0 & 120.6 \\
200 & 1000 & 10 & 40 & 2.93 & 5.1 & \textbf{0.8}  & 130.6 & 1.2 & 10.4 & \textbf{1.9}  & 40.8 & 84.0 & 12.6 & 122.2 \\
200 & 1000 & 73 & 40 & 3.37 & \textbf{4.6}  & 64.5 & 1478.2 & 8.2 & 276.1 & \textbf{1.9}  & 40.9 & 83.4 & 12.1 & 122.4 \\
200 & 1000 & 1000 & 40 & 3.37 & \textbf{3.5}  & 796.2 & 1534.7 & 7.1 & 317.2 & \textbf{2.0}  & 41.0 & 82.9 & 12.6 & 122.2 \\
200 & 5000 & 1 & 40 & 1.60 & 45.3 & 413.3 & \textbf{29.1}  & 81.4 & 154.2 & \textbf{11.7}  & 210.4 & 427.9 & 73.8 & 651.7 \\
200 & 5000 & 18 & 40 & 4.00 & 9.4 & \textbf{0.6}  & 65.7 & 3.3 & 94.7 & \textbf{11.6}  & 210.0 & 426.1 & 74.3 & 652.4 \\
200 & 5000 & 163 & 40 & 5.04 & \textbf{3.6}  & 60.3 & 1466.2 & \textbf{3.6}  & 7.2 & \textbf{11.8}  & 210.1 & 422.3 & 75.0 & 654.0 \\
200 & 5000 & 5000 & 40 & 5.04 & \textbf{4.4}  & 1453.9 & 1563.1 & \textbf{4.4}  & 5.3 & \textbf{11.9}  & 210.1 & 420.8 & 75.3 & 655.4 \\
500 & 100 & 1 & 100 & 0.92 & 55.4 & 97.9 & 120.8 & 216.4 & \textbf{54.6}  & \textbf{0.7}  & 5.4 & 16.2 & 3.9 & 26.3 \\
500 & 100 & 5 & 100 & 1.31 & 22.9 & 15.7 & 1060.2 & \textbf{12.6}  & 108.0 & \textbf{0.6}  & 5.3 & 15.5 & 3.7 & 24.9 \\
500 & 100 & 25 & 100 & 1.25 & \textbf{112.7}  & 323.1 & 9560.3 & 2150.5 & 6850.1 & \textbf{0.6}  & 5.3 & 16.0 & 3.6 & 25.3 \\
500 & 100 & 100 & 100 & 1.25 & \textbf{284.4}  & 1845.9 & 9768.9 & 1959.7 & 18386.6 & \textbf{0.6}  & 5.2 & 16.4 & 3.7 & 24.3 \\
500 & 1000 & 1 & 100 & 1.00 & \textbf{30.5}  & 217.7 & 79.3 & 190.5 & 31.0 & \textbf{5.2}  & 252.6 & 141.9 & 32.1 & 256.9 \\
500 & 1000 & 10 & 100 & 1.90 & 12.3 & 5.1 & 232.6 & \textbf{3.7}  & 84.3 & \textbf{5.4}  & 252.2 & 142.4 & 31.7 & 256.1 \\
500 & 1000 & 79 & 100 & 2.22 & \textbf{22.1}  & 122.5 & 9547.8 & 66.4 & 1725.7 & \textbf{5.6}  & 253.2 & 141.6 & 32.3 & 256.2 \\
500 & 1000 & 1000 & 100 & 2.22 & \textbf{15.4}  & 3895.9 & 9790.4 & 82.9 & 2401.7 & \textbf{5.7}  & 254.8 & 142.6 & 32.3 & 258.2 \\
500 & 5000 & 1 & 100 & 1.05 & \textbf{22.2}  & 1322.6 & 51.6 & 95.9 & 192.0 & \textbf{31.0}  & 1307.0 & 743.7 & 168.9 & 1536.6 \\
500 & 5000 & 18 & 100 & 2.58 & 7.9 & \textbf{1.8}  & 102.9 & 3.2 & 505.5 & \textbf{30.1}  & 1302.7 & 726.1 & 163.7 & 1434.2 \\
500 & 5000 & 177 & 100 & 3.32 & \textbf{11.2}  & 175.0 & 9438.2 & \textbf{11.2}  & 37.6 & \textbf{30.8}  & 1300.5 & 720.1 & 158.6 & 1420.4 \\
500 & 5000 & 5000 & 100 & 3.32 & \textbf{20.2}  & 8212.5 & 9799.4 & 33.7 & 27.2 & \textbf{31.0}  & 1301.1 & 717.8 & 158.3 & 1415.7 \\
\hline \multicolumn{5}{c|}{Average MSE}
 & \textbf{37.8}  & 821.9 & 2889.1 & 230.3 & 1460.9 \\
\hline \\[-1.8ex]
\end{tabular}
\end{adjustbox}
\end{table}

\subsection{Multiple changepoint estimation}\label{numericmultiple}
In this subsection we consider the situation of an unknown number of changepoints. Our simulations are run with parameters 
$n=200$, $p \in \left\{ 100,1000,5000\right\}$ and $J \in \left\{0,2,5\right\}$. For each simulated data set we take the changepoint locations $\eta_1, \ldots, \eta_J$ to be ordered and uniformly drawn samples from $\left\{1, \ldots, n-1\right\}$ without replacement. For each combination of $n, p$ and $J$, we consider three different sparsity regimes; \textit{dense}, \textit{sparse} and \textit{mixed}. In the dense and sparse regimes, we sample $k_1, \ldots, k_J$ independently and uniformly from $\left\{\lceil \surd{(p\log n)} \rceil, \ldots, p\right\}$ and $\left\{1, \ldots, \lfloor \surd{(p\log n)} \rfloor\right\}$, respectively. In the mixed regime we sample each $k_j$ independently from a mixture between the dense and sparse regimes, each with equal probability. For each combination of $n,p,J$ and sparsity regime, each changepoint has change-vector $\theta_j \propto (I_{j,1}, \ldots, I_{j,k} , 0, \ldots, 0)\T$, where $I_{j,1}, \ldots, I_{j,k}$ are drawn independently and uniformly from $\{-1,1\}$, scaled such that $\Delta_j\phi_j^2/{\sigma_j^2} = (7/2)^2 r(k_j)$. Notice that we have increased the signal strength slightly in comparison with the single changepoint case, as multiple changepoint estimation is more challenging than estimating the position of a single changepoint whose existence is known. For each combination of $n,p,J$ and sparsity regime we simulate $1000$ data sets. 

For both ESAC and the modified Inspect algorithm, we generate seeded intervals using Algorithm \ref{alg:seededintervals} with parameters $\alpha = 3/2$ and $K=4$. For the Pilliat method we generate intervals using Algorithm \ref{alg:seededintervals} with parameters $\alpha = 3/2$ and $K=2$, giving very similar intervals as the $a$-adic grid $\mathcal{G}_a$ defined in \citet{pilliat_optimal_2022} for $a=2/3$. Due to high computational cost, we run SUBSET with only $100$ randomly drawn intervals in its Wild Binary Segmentation step. To ensure comparability with the remaining methods, we have modified the Pilliat algorithm so that its tests for a changepoint in an integer interval $(s,e]$ are performed by testing for a changepoint at each candidate position $s<v<e$, instead of only the mid-point. In our experience, testing only at the mid-point of an interval results in lower detection power.

We choose detection thresholds using either Monte Carlo simulations (using $N=1000$ samples) or bootstrapping (using $B=100$ samples). For the ESAC algorithm, we use the penalty functions $\widetilde{\gamma}(t)$ and $\widetilde{\lambda}(t)$ given in Appendix \ref{impdetails}, using a false positive rate $\epsilon = 1/N$ to generate the former. For the Pilliat algorithm we choose detection thresholds for the Partial Sum statistic and the dense statistic by Monte Carlo simulating the leading constant in the theoretical thresholds given in \citet{pilliat_optimal_2022}, and apply a Bonferroni correction. For the modified version of Inspect we set $\lambda = \left\{{\log(p\log n)/2}\right\}^{1/2}$ and choose the detection threshold $\xi$ to be the largest sparse projection 
over all seeded intervals and over $N=1000$ data sets with no changepoints. For SUBSET we use the function for choosing thresholds provided by the author, which is based on Monte Carlo simulation. For Sparsified Binary Segmentation and Double CUSUM we use the default parameters when running the algorithms (except for setting $\phi = -1$ for the Double CUSUM algorithm), and use the default bootstrap procedures to select detection thresholds. 

For each method considered and each configuration of parameters and changepoint regimes, Table \ref{fig:tablemulti_small} displays the average Hausdorff distance and average absolute estimation error of $J$. The Double CUSUM and Sparsified Binary Segmentation methods are abbreviated as DC and SBS, respectively. For each configuration of parameters and changepoint regimes, the minimum value of each of the performance measures is indicated in boldface. In terms of average Hausdorff distance, Table \ref{fig:tablemulti_small} demonstrates that ESAC and SUBSET are the top performers across all sparsity regimes with comparable accuracy. SUBSET slightly outperforms ESAC when there are few changepoints ($J=2$), while the opposite is true when there are $J=5$ changepoints. Averaging the Hausdorff distance over all configurations, SUBSET is seen to slightly outperform ESAC. We believe this is due to ESAC using Narrowest-Over-Threshold choice of changepoints, which usually causes ESAC to use fewer observations to estimate changepoint locations. See Appendix \ref{appendixC} for a simulation where ESAC does not use Narrowest-Over-Threshold choice of changepoints locations, improving its performance. For estimating $J$, the number of changepoints, ESAC is the clear winner, with SUBSET in second place. 

Figure \ref{runningtime} displays the natural logarithm of the running times (in milliseconds) of the methods as functions of $n$ and $p$. For the left plot, we fix $p=100$, and for the right plot we fix $n=100$. When it comes to execution time, ESAC outperforms the competing methods by a significant margin for all considered values of $n$ and $p$. The running time of ESAC is smaller than that of the competitors by a factor seemingly constant in $n$ and $p$. When not applying a log transform to the running times (which is omitted for brevity), all methods are seen to have an approximately linear computational cost in both $n$ and $p$.

 \begin{table}[h] \centering
\caption{Multiple change-point estimation}
\label{fig:tablemulti_small}
\small
\begin{adjustbox}{width=\columnwidth}
\begin{tabular}{@{\extracolsep{1pt}} cccc|cccccc|cccccc}
\hline
\multicolumn{4}{c|}{Parameters} & \multicolumn{6}{c|}{Hausdorff distance} &\multicolumn{6}{c|}{$\left | \widehat{J}-J \right |$}  \\ \hline 
        $n$ & $p$ & Sparsity & J & \text{ESAC} & \text{Pilliat} & \text{Inspect} & \text{SBS} & \text{SUBSET} & \text{DC} & \text{ESAC} & \text{Pilliat} & \text{Inspect} & \text{SBS} & \text{SUBSET} & \text{DC}  \\  \hline
        200 & 100 & - & 0 & - & - & - & - & - & - & \textbf{0.00} & \textbf{0.00} & \textbf{0.00} & 0.04 & \textbf{0.00} & 0.04  \\ 
        200 & 100 & Dense & 2 & \textbf{5.27} & 23.05 & 21.61 & 69.92 & 6.52 & 76.37 & \textbf{0.06} & 0.48 & 0.29 & 1.03 & 0.11 & 1.08  \\ 
        200 & 100 & Sparse & 2 & \textbf{1.43} & 12.54 & 7.92 & 49.19 & 1.67 & 14.92 & \textbf{0.01} & 0.25 & 0.10 & 0.70 & 0.04 & 0.26  \\ 
        200 & 100 & Mixed & 2 & 4.60 & 18.57 & 14.36 & 61.15 & \textbf{3.74} & 52.93 & \textbf{0.05} & 0.37 & 0.19 & 0.87 & 0.06 & 0.74  \\ 
        200 & 100 & Dense & 5 & \textbf{5.17} & 23.74 & 16.27 & 67.29 & 6.42 & 55.71 & \textbf{0.14} & 1.53 & 0.54 & 3.12 & 0.33 & 2.76  \\ 
        200 & 100 & Sparse & 5 & \textbf{1.36} & 13.99 & 6.38 & 57.18 & 2.42 & 23.19 & \textbf{0.02} & 0.78 & 0.23 & 2.79 & 0.20 & 1.36  \\ 
        200 & 100 & Mixed & 5 & \textbf{4.04} & 18.77 & 12.56 & 60.43 & 4.90 & 41.31 & \textbf{0.10} & 1.20 & 0.42 & 3.01 & 0.28 & 2.16  \\ 
        200 & 1000 & - & 0 & - & - & - & - & - & - & \textbf{0.00} & \textbf{0.00} & \textbf{0.00} & 0.34 & 0.02 & 0.02  \\ 
        200 & 1000 & Dense & 2 & 1.45 & 13.39 & 9.49 & 55.12 & \textbf{1.08} & 84.98 & \textbf{0.01} & 0.28 & 0.12 & 0.72 & 0.03 & 1.23  \\ 
        200 & 1000 & Sparse & 2 & 0.83 & 9.07 & 9.83 & 49.72 & \textbf{0.75} & 20.38 & \textbf{0.00} & 0.18 & 0.15 & 0.62 & 0.03 & 0.29  \\ 
        200 & 1000 & Mixed & 2 & 1.78 & 11.78 & 9.59 & 50.51 & \textbf{1.48} & 53.23 & \textbf{0.01} & 0.24 & 0.13 & 0.69 & 0.04 & 0.82  \\ 
        200 & 1000 & Dense & 5 & \textbf{1.52} & 15.81 & 9.33 & 57.33 & 1.56 & 60.79 & \textbf{0.03} & 1.02 & 0.26 & 2.96 & 0.17 & 3.06  \\ 
        200 & 1000 & Sparse & 5 & \textbf{0.69} & 11.34 & 9.16 & 54.20 & 1.52 & 25.80 & \textbf{0.00} & 0.64 & 0.36 & 2.68 & 0.17 & 1.43  \\ 
        200 & 1000 & Mixed & 5 & \textbf{1.19} & 13.93 & 9.45 & 58.26 & 1.72 & 45.16 & \textbf{0.02} & 0.81 & 0.33 & 2.86 & 0.19 & 2.38  \\ 
        200 & 5000 & - & 0 & - & - & - & - & - & - & \textbf{0.00} & \textbf{0.00} & \textbf{0.00} & 1.96 & \textbf{0.00} & \textbf{0.00}  \\ 
        200 & 5000 & Dense & 2 & 0.97 & 12.99 & 13.25 & 54.75 & \textbf{0.60} & 124.34 & \textbf{0.00} & 0.29 & 0.17 & 0.53 & 0.02 & 1.68  \\ 
        200 & 5000 & Sparse & 2 & 0.76 & 9.66 & 15.81 & 58.96 & \textbf{0.51} & 70.83 & \textbf{0.00} & 0.20 & 0.24 & 0.56 & 0.03 & 1.04  \\ 
        200 & 5000 & Mixed & 2 & 0.98 & 11.36 & 13.76 & 55.92 & \textbf{0.78} & 96.95 & \textbf{0.00} & 0.25 & 0.21 & 0.50 & 0.03 & 1.36  \\ 
        200 & 5000 & Dense & 5 & \textbf{0.89} & 14.59 & 12.97 & 50.33 & 1.37 & 83.64 & \textbf{0.01} & 0.94 & 0.45 & 2.45 & 0.17 & 3.63  \\ 
        200 & 5000 & Sparse & 5 & \textbf{0.53} & 11.70 & 13.50 & 54.00 & 1.29 & 54.33 & \textbf{0.00} & 0.74 & 0.55 & 2.53 & 0.18 & 2.71  \\ 
        200 & 5000 & Mixed & 5 & \textbf{0.78} & 13.57 & 13.25 & 53.72 & 1.41 & 69.94 & \textbf{0.00} & 0.86 & 0.50 & 2.50 & 0.17 & 3.19    \\ \hline
        \multicolumn{4}{c|}{Average} & \textbf{1.90} & 14.43 & 12.13 & 56.55 & 2.21 & 58.60 & \textbf{0.02} & 0.61 & 0.29 & 1.73 & 0.12 & 1.73 \\ 
\hline
\end{tabular}
\end{adjustbox}
\end{table}

\begin{figure}[ht]
\includegraphics[width=\textwidth]{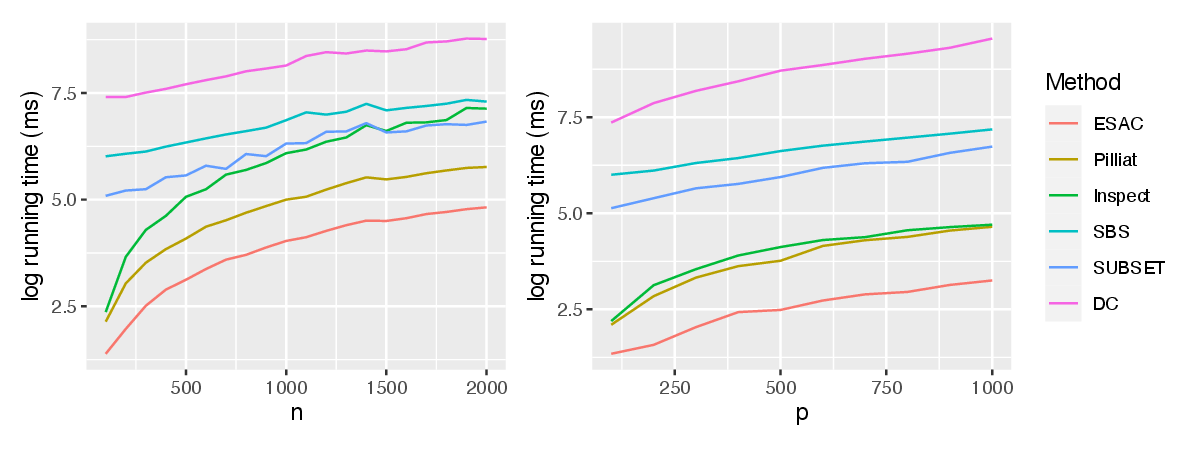}
\caption{Running times as functions of $n$ (left) and $p$ (right) on a logarithmic scale.}\label{runningtime}
\end{figure}

\subsection{Misspecified model}
ESAC is designed for data with isotropic Gaussian noise, which can be an unrealistic assumption in practice. We now investigate the empirical performance of ESAC and the competing methods in the single changepoint setting under other data generating mechanism than the model described in Section \ref{secproblemdesc}. With the changepoint location fixed at $\eta = \lceil n/5 \rceil = 40$, we consider two sparsity regimes, \textit{sparse} and \textit{dense}. We sample $k$ independently and uniformly from $\left\{1, \ldots, \lfloor \surd{(p\log n)}\rfloor\right\}$ in the sparse regime, and from $\left\{\lceil \surd{(p\log n)}\rceil \ldots, p\right\}$ in the dense regime. In both regimes, we take the change-vector $\theta$ to satisfy $\theta \propto (I_1, \ldots, I_k , 0, \ldots, 0)^{\T}$, where $I_1, \ldots, I_k$ are drawn independently and uniformly from $\{-1,1\}$. Furthermore, we scale $\theta$ such that $\Delta\phi^2 = ({n}/{5})\normm{\theta}_2^2 = 9 r(k)$.

Similar to the simulation study in \citet{wang_high_2018}, we consider the following data generating mechanisms. In model $M_0$ we take the noise vector $W_{v}$ to satisfy $W_v \sim \N_p(0, I)$ independently for $v \in [n]$. In models $M_{\text{Unif}}$ and $M_{\text{t}_{\text{d}}}$ we take $W_{i,v} \sim \text{Unif}(-\surd{3},\surd{3})$ and $\{d/(d-2)\}^{1/2} W_{i,v} \sim t_d$, respectively and independently for all $v \in[n]$ and $i\in[p]$, where $t_d$ denotes the Student t distribution with $d$ degrees of freedom. In model $M_{\text{cs, loc}}(\rho)$ we let the noise vectors $W_1, \ldots, W_n$ have short-ranged spatial correlation, taking $W_v \sim \N_p\left(0, \Sigma(\rho)\right)$ independently for all $v\in[n]$, where $\Sigma(\rho)_{j,k} = \rho^{|j-k|}$ for each $j,k\in[p]$. In the model $M_{\text{cs}}(\rho)$ we let the noise vectors $W_1, \ldots, W_n$ have global spatial correlation by taking $W_v \sim \N_p\left(0, \Delta(\rho)\right)$ independently for $v\in [n]$, where $\Delta(\rho) = (1-\rho) I_p + \rho /p I_p I_p^{\T}$. In the model $M_{\text{temp}}(\rho)$ we allow for temporal dependence between the noise vectors $W_1, \ldots, W_n$ by letting $W_1 = \widetilde{W}_1$ and $W_v = \surd{\rho} \widetilde{W}_v + \surd{(1-\rho)} W_{v-1}$ for $v = 2, \ldots, n$, where $\widetilde{W}_1, \ldots, \widetilde{W}_n \sim \N_p(0, I_p)$, independently. In the models 
$M_{\text{async}}$ and $M_{\text{gradual}}$ we allow for changes in the mean to occur asynchronous and gradual in time, respectively, with noise vectors $W_v \sim \N_p(0, I_p)$ independently for $v\in [n]$. In $M_{\text{async}}$, for each changepoint $\eta_j$, we randomly shift the position (in time) of the change in mean in the $i$th coordinate, where the shifts are drawn independently from $\text{Unif}(\eta_j - \lfloor \Delta_j/2\rfloor,\eta_j - \lfloor \Delta_j/2\rfloor +1, \ldots, \eta_j + \lfloor \Delta_j/2\rfloor)$. In $M_{\text{gradual}}$, for each changepoint $\eta_j$, any change in mean occurs linearly over time, starting at position $\eta_j - \lfloor \Delta_j/2\rfloor+1$ and ending at position $\eta_j + \lfloor \Delta_j/2 \rfloor+1$. For both the single and multiple changepoint settings, we have set $n=p=200$.

 Table \ref{tablesinglelocmisspec} displays the MSE of the competing methods using the same running parameters as in Section \ref{numericsingle}, based on $N=1000$ runs. Table \ref{tablesinglelocmisspec} indicates that ESAC (along with Inspect and SUBSET) is robust to model deviations in the form of light-tailed noise and short-ranged spatial correlation. With global spatial correlation, however, all methods degrade substantially in performance, with Inspect having a slight edge over the remaining methods. With autocorrelation, the performance of the methods also degrades markedly, with ESAC and SUBSET having a slight edge over the remaining methods. Lastly, ESAC and SUBSET seem to be slightly more robust to asynchronous and gradual occurrence of changepoints than the remaining methods.

\begin{table}[h] \centering
\caption{Single changepoint estimation under misspecified models}
\label{tablesinglelocmisspec}
\small
\begin{tabular}{@{\extracolsep{1pt}} cc|ccccc}
\hline
\multicolumn{2}{c|}{Parameters} & \multicolumn{5}{c|}{MSE} \\ \hline 
Model & Sparsity & \text{ESAC} & \text{Inspect} & \text{SBS} & \text{SUBSET} & \text{DC} \\
\hline \
$M$ & Sparse & 1.2 & \textbf{1.0}  & 506.1 & 1.1 & 26.8 \\
$M$ & Dense & \textbf{1.6}  & 84.5 & 1507.9 & \textbf{1.6}  & 1101.2 \\
$M_{\text{Unif}}$ & Sparse & 1.0 & 1.2 & 666.7 & \textbf{0.9}  & 17.7 \\
$M_{\text{Unif}}$ & Dense & \textbf{7.5}  & 141.3 & 1506.1 & 23.4 & 982.4 \\
$M_{t_{3}}$ & Sparse & 1373.5 & \textbf{405.7}  & 3014.5 & 1379.4 & 3342.1 \\
$M_{t_{3}}$ & Dense & 1561.6 & \textbf{1184.0}  & 12195.2 & 1642.9 & 8132.5 \\
$M_{t_{10}}$ & Sparse & 2.3 & 1.5 & 649.5 & \textbf{1.4}  & 64.3 \\
$M_{t_{10}}$ & Dense & \textbf{2.1}  & 54.6 & 2045.5 & \textbf{2.1}  & 1256.2 \\
$M_{\text{cs, loc}}(\rho = 0.1)$ & Sparse & 1.1 & \textbf{0.7}  & 497.7 & 1.0 & 21.6 \\
$M_{\text{cs, loc}}(\rho = 0.1)$ & Dense & \textbf{1.9}  & 69.3 & 1501.3 & \textbf{1.9}  & 1108.3 \\
$M_{\text{cs, loc}}(\rho = 0.4)$ & Sparse & 1.3 & \textbf{1.2}  & 508.0 & \textbf{1.2}  & 17.9 \\
$M_{\text{cs, loc}}(\rho = 0.4)$ & Dense & \textbf{2.7}  & 116.6 & 1505.7 & \textbf{2.7}  & 1347.6 \\
$M_{\text{cs}}(\rho = 0.1)$ & Sparse & 126.7 & \textbf{2.0}  & 524.3 & 82.2 & 20.9 \\
$M_{\text{cs}}(\rho = 0.1)$ & Dense & 281.4 & \textbf{114.4}  & 1516.1 & 281.4 & 1111.4 \\
$M_{\text{cs}}(\rho = 0.4)$ & Sparse & 4087.5 & 66.2 & 540.1 & 3997.6 & \textbf{50.7}  \\
$M_{\text{cs}}(\rho = 0.4)$ & Dense & 5473.6 & \textbf{1296.8}  & 1503.6 & 5473.6 & 2072.9 \\
$M_{\text{AR}}(\rho = 0.1)$ & Sparse & 148.0 & \textbf{77.1}  & 194.2 & 148.0 & 175.8 \\
$M_{\text{AR}}(\rho = 0.1)$ & Dense & \textbf{40.6}  & 461.7 & 3023.6 & \textbf{40.6}  & 2557.8 \\
$M_{\text{AR}}(\rho = 0.4)$ & Sparse & \textbf{979.9}  & 1648.6 & 1270.0 & \textbf{979.9}  & 2010.2 \\
$M_{\text{AR}}(\rho = 0.4)$ & Dense & \textbf{1274.3}  & 1994.0 & 2552.5 & \textbf{1274.3}  & 4337.5 \\
$M_{\text{async}}$ & Sparse & 83.1 & 99.0 & 603.8 & \textbf{81.2}  & 233.3 \\
$M_{\text{async}}$ & Dense & \textbf{75.8}  & 288.3 & 1521.7 & 79.4 & 1644.7 \\
$M_{\text{grad}}$ & Sparse & 50.3 & 54.5 & 794.5 & \textbf{49.2}  & 223.8 \\
$M_{\text{grad}}$ & Dense & \textbf{67.7}  & 231.3 & 1524.4 & 92.2 & 2002.1 \\
$M_{\text{grad}}$ & Dense & \textbf{67.7}  & 231.3 & 1524.4 & 92.2 & 2002.1 \\
\hline \\[-1.8ex]
\end{tabular}
\end{table}

\section{Real data example}\label{dataex}
To illustrate how ESAC can be applied in practice, we examine raw sensor data from a Swedish hydro power plant. The data consists of measurements from 20 sensors taken every minute for 1800 minutes, so that $p=20$ and $n=1800$. The sensors measure the magnitude of movements and vibrations (the latter measured at 1-10 and 10-1000 Hz bands) at various locations along the shaft connecting the turbine and the generator. During the 1800 minutes we consider, the mode of operation changes several times, detailed in Table \ref{hydrpowermode}. We take these changes of operation mode as the ground truth regarding the number of changepoints and their locations. 
\begin{table}[ht] \centering
\caption{Operation modes of the hydro power plant}
\label{hydrpowermode}
\small
\begin{adjustbox}{}
\begin{tabular}{@{\extracolsep{1pt}} c|c}
\hline
Time period & Operation mode \\ \hline 
1 -- 529 & running\\
530 -- 537 & stopping \\
538 -- 1307 & off \\
1308 -- 1310 & starting \\
1311 -- 2000 & running \\
\hline \\[-1.8ex]
\end{tabular}
\end{adjustbox}
\end{table}

The data generating mechanism of the data is undeniably in violation of several underlying assumptions of ESAC. Importantly, the data are highly cross-correlated and auto-regressive. Moreover, the measurements in the data set are influenced by contextual variables such as power output, guide vane opening, and other (human controlled) running conditions in a complex manner. This dependence on contextual variables should ideally be modeled carefully, although such modeling is outside the scope of this paper. As a remedy, we instead transform the observed data by right multiplying each observed data point $X_i$ by $\widehat{\Sigma}^{-1/2}$. Here, $\widehat{\Sigma}$ is the estimated variance-covariance matrix of $X_i$, estimated from an independent data set with 5992 observations, in which running conditions are stable (i.e. with no changes in operation mode). Moreover, we choose the penalty function $\gamma(t)$ empirically as described in Appendix \ref{appendixA}, using false probability rate $\epsilon = 0.01$ and letting each of the $N=1000$ Monte Carlo samples $X^{(j)}$ have independent entries following a $t_5$ distribution. This choice of penalty function ensures that ESAC is rather conservative in declaring changepoints. 

The Monte Carlo simulation for generating the penalty function $\lambda(t)$ took 2 minutes and 2 seconds. Applying ESAC to the data took 0.035 seconds, resulting in six estimated changepoints, at locations 531, 533, 974, 1067, 1308, and 1330. Figure \ref{dataexfig} displays the 20 transformed sensor measurements over the sampling period, with estimated changepoint locations indicated by red ticks on the x axis. The grey rectangle in the plot indicate the times at which the plant is either starting, stopping, or off. From the Figure, we clearly see that the first, second and fifth and sixth identified changepoint are associated with a change in operation mode of the plant. Interestingly, the other two changepoints, located at time points 974, 1067, are not associated with a change in running conditions. These changepoints are likely declared by ESAC due to the sudden shift in the yellow curves occurring at time 974 and reverting back again at time 1067. 

\begin{figure}
\includegraphics[width=\textwidth]{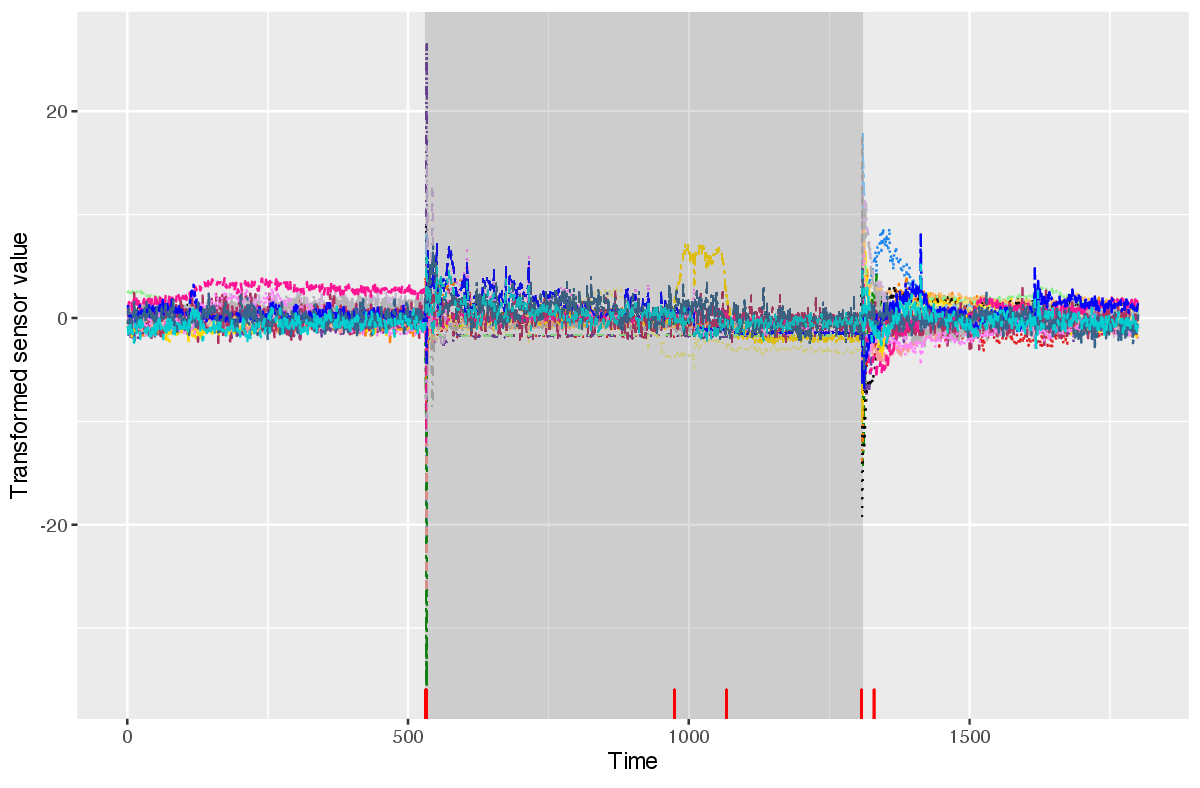}
\caption{Vibration sensor measurements with detected changepoints indicated by red ticks. Grey areas indicate time intervals in which the plant is starting, stopping or off.}\label{dataexfig}
\end{figure}

\section*{Acknowledgement}
We thank Idris Eckley, Arnoldo Frigessi and Nils Lid Hjort for constructive discussions and feedback. We also thank Camilla Feurst and Statkraft for providing the data used in our real-life example. This project has been partially funded by the centre BigInsight, Norwegian Research Council, project number 237718.

\bibliography{preliminary}

\section*{Appendices}
\addcontentsline{toc}{section}{Appendices}
\renewcommand{\thesubsection}{\Alph{subsection}}
\subsection{Proofs of main results}\label{secproofs}

\begin{proof}[Proof of Proposition \ref{lemmalemmalemma}]
Set $c_1 = 6+2\log(8/\epsilon)/\log(2)$, $c_2 =  12+ 2(\log(1/\epsilon))^{1/2} +2 \log(1/\epsilon)$ and $\gamma_0 = 9 (c_1 + c_1^{1/2}\exp(-1)) + c_2$. Note first that $I$ has cardinality no larger than $n^3$. By a union bound, it thus suffices to show that $\PP(S_{\gamma, (s,e]} >0) \leq \eps n^{-3}$ for any $(s,e) \subseteq I$. 

So fix any $(s,e) \subseteq I$. Let $t \in \mathcal{T}\setminus\{p\}$ (the case $t=p$ is handled later), and fix $x_t>0$ (to be specified shortly). Since $(s,e)$ does not contain any changepoint, we must have ${C}^{v}_{(s,e]} (i) \iid \N(0,1)$ for all $i \in [p]$. By Lemma \ref{lemma6} we have that
\begin{align}
 \sum_{i =1}^p \left\{ {C}^{v}_{(s,e]} (i)^2  - \nu_{a(t)} \right \} \ind {\left\{\left|{C}^{v}_{(s,e]}(i)\right| > a(t) \right\}} &\geq 9 \left [  \left\{ p e^{-a(t)^2/2} x_t\right\}^{1/2} + x_t \right] \label{l21e3}
\end{align}
with probability at most $e^{-x_t}$. Now set $x_t = c_1 \left\{  \frac{p \log^2(n)}{t^2} \wedge r(t)  \right\}$ for all $t$. Then,
\begin{align}
\sum_{t\in \mathcal{T}\setminus\{p\}} e^{-x_t} &\leq \sum_{t\in \mathcal{T}\setminus\{p\}}   \exp\left\{-c_1\frac{p \log^2(n)}{t^2}\right\} + \sum_{t\in \mathcal{T}\setminus\{p\}} \exp\left\{ -c_1r(t)\right\}.
\end{align}
For the first sum, we have
\begin{align}
\sum_{t\in \mathcal{T}\setminus\{p\}}   \exp\left\{-c_1\frac{p \log^2(n)}{t^2}\right\} &\leq \sum_{k=0}^{\infty} \exp\left\{-c_1{\log(n)} 4^{k}\right\} \\
&= \sum_{k=0}^{\infty}\left( \frac{1}{n^{c_1}} \right)^{4^k}\\
&\leq n^{-c_1} + n^{-c_1}\sum_{k=1}^{\infty}\left( \frac{1}{n^{c_1}} \right)^{3k}\\
&=2 n^{-c_1}.
\end{align}
For the second sum, noting that $c_1 r(t) = c_1 \left\{ t \log \left( \frac{ep\log n}{t^2}\right) \vee \log n\right\} \geq (c_1/2) t \log \left( \frac{ep\log n}{t^2}\right) + (c_1/2)\log n$, we have
\begin{align}
\sum_{t\in \mathcal{T}\setminus\{p\}} \exp\left\{ -c_1r(t)\right\} &\leq n^{-c_1/2} \exp(-c_1/2)  
\sum_{t\in \mathcal{T}\setminus\{p\}}  \left(   \frac{t^2}{ep\log n}   \right)^{c_1t/2}\\
&\leq n^{-c_1/2} \exp(-c_1/2) \left( 1 + \sum_{k=1}^{\infty} 4^{-c_1t/2} \right)\\
&\leq 2n^{-c_1/2}.
\end{align}
With this choice of $x_t$, we thus have that
\begin{align}
\sum_{t\in \mathcal{T}\setminus\{p\}} e^{-x_t} &\leq 4n^{-c_1/2},
\end{align}
using that $c_1>1$. Moreover, using that $a^2(t) = 4\log\left( \frac{ep\log n}{t^2}  \right)$, we have that
\begin{align}
9 \left [  \left\{ p e^{-a^2(t)/2} x_t\right\}^{1/2} + x_t \right] &= 9 \left [  \left\{ p \frac{t^4}{e^2 p^2 \log^2 n} x_t\right\}^{1/2} + x_t \right] \\
&\leq 9 \left \{ \frac{tc_1^{1/2}}{{e}}  + c_1 r(t)\right\}\\
&\leq 9 \left( c_1^{1/2}\exp(-1) + c_1 \right) r(t),
\end{align}
where we used that $x_t \leq c_1r(t)$ and $x_t \leq {c_1p \log^2(n)}/{t^2}$, as well as the fact that $t \leq r(t)$ whenever $t \leq (p \log n)^{1/2}$. Recalling that $\gamma(t) = \gamma_0 r(t)$, since $\gamma_0 >  9 \left (c_1 + c_1^{1/2}\exp(-1) \right)$, a union bound gives
\begin{align}
& \ \PP \left( \exists t \in \mathcal{T}\setminus\{p\} \ ; \ S_{\gamma, (s,e]}(t) \geq 0 \right)\\
= & \ \PP \left[ \exists t \in \mathcal{T}\setminus\{p\} \ ; \ \sum_{i =1 }^p \left\{ {C}^{v}_{(s,e]} (i)^2  - \nu_{a(t)} \right \} \ind{\left\{|{C}^{v}_{(s,e]}(i)| > a(t) \right\}} \geq \gamma_0 r(t) \right]\\
\leq & 4n^{-c_1/2} \\
\leq & 4n^{-3-\log(8/\epsilon)/\log(2)}\\
\leq & 4n^{-3} \exp\left( -\log(8/\epsilon)\log (n) / \log(2)\right)\\
\leq & n^{-3} \epsilon/2,
\end{align}
where we in the last inequality used that $n\geq 2$. 

Now consider the case where $t = p$. If $p \leq ({p\log n})^{1/2}$, then similarly as above we have that
\begin{align}
\ \PP \left(  \ S_{\gamma, (s,e]}(p) \geq 0 \right) &\leq n^{-3} \epsilon/2.
\end{align}

If we instead have $p > (p \log n)^{1/2}$ (in which case $a(p) = 0$ and $\nu_{a(p)} = 1$), then
\begin{align}
\sum_{i =1 }^p \left\{ {C}^{v}_{(s,e]} (i)^2  - \nu_{a(p)} \right \} \ind{\left\{|{C}^{v}_{(s,e]}(i)| > a(p) \right\}} &= \sum_{i =1 }^p  {C}^{v}_{(s,e]} (i)^2 - p .  
\end{align}
As $\sum_{i =1 }^p {C}^{v}_{(s,e]} (i)^2   \sim \chi^2_{p}$, we obtain from Lemma \ref{lemma12} that
\begin{align}
\PP \left\{  \sum_{i =1 }^p  {C}^{v}_{(s,e]} (i)^2 - p   > 2(p \log (2n^3/\epsilon))^{1/2} + 2 \log(2n^3/\epsilon) \right\} &\leq n^{-3} \epsilon /2. 
\end{align}
Now, 
\begin{align}
2(p \log (2n^3/\epsilon))^{1/2} + 2 \log(2n^3/\epsilon) &\leq 2(p \log (n^4/\epsilon))^{1/2} + 2 \log(n^4/\epsilon) \\
&\leq 4 (p\log n)^{1/2} + 2(p \log(1/\epsilon))^{1/2} + 8\log n -2\log (\epsilon) \\
&\leq r(p) \left(12+ 2(\log(1/\epsilon))^{1/2} +2 \log(1/\epsilon) \right)\\
&= c_2 r(p)\\
&< \gamma_0 r(p),
\end{align}
using that $n\geq 2$, $r(p)\geq 1$, and $r(p) = (p\log n)^{1/2} \geq \log n$ whenever $p \geq (p\log n)^{1/2}$. Hence, 
\begin{align}
\PP \left\{  \sum_{i =1 }^p  {C}^{v}_{(s,e]} (i)^2 - p   > \gamma_0 r(p) \right\} &\leq n^{-3} \epsilon /2. 
\end{align}
We conclude that
\begin{align}
    \PP\left( \underset{(s,e) \in I}{\max} \ S_{\gamma, (s,e]} >0   \right) &\leq n^3 \PP \left( \exists t \in \mathcal{T} \ ; \ S_{\gamma, (s,e]}(t) \geq 0 \right) \\
    &\leq n^3 \left( n^{-3}\epsilon/2 + n^{-3}\epsilon/2\right) \\
    &= \epsilon,
\end{align}
and we are done. 
\end{proof}
\begin{proof}[Proof of Theorem \ref{locprop}]
Let $\lambda_0 \geq 63$, $\lambda(t) = \lambda_0 r(t)$, $C_1 = 2 \left\{ 2 \left( 4\lambda_0 + 242\right)^{1/2}+ 2\lambda_0 +123 \right\}$ and $C_0 >C_1$. Let $\widehat{\eta} \in \underset{0<v<n}{\arg \max} \ S^v_{\lambda}$, where $S^v_{\lambda}$ is defined as in \eqref{sdef}, and let $S^v(t)$ be defined as in \eqref{stdef}. 
Let the CUSUM transformation $T^v_{(s,e]} (\cdot ) $ be defined as in \eqref{c2def}, and for ease of notation, let $T^v (\cdot ) = T^v_{(0,n]} (\cdot )$ . 
Let $\mathcal{K} = \left\{   i \ ; \ \mu_{i, \eta+1} - \mu_{i, \eta} \neq 0 \right\}$ denote the set of coordinates for which there is a change in mean, and for any $0<v<n$ let $\beta_v= \sum_{i \in \mathcal{K}} \left \{ T^{\eta}(\mu\roww{i})^2 - T^v (\mu\roww{i})^2 \right\}$. Let $\overline{k}$ denote the smallest element in $\mathcal{T}$ such that $\overline{k} \geq k$. We may without loss of generality take $\sigma=1$, as we can otherwise normalize the data matrix $X$ and replace the squared norm of the change in mean $\phi^2$ by $\phi^2 / \sigma^2$.

Consider the event $\mathcal{E}= \mathcal{E}_1 \cap \mathcal{E}_2 \cap \mathcal{E}_3 \cap \mathcal{E}_4$ as defined in Lemma \ref{lemma15}, for which we know that $\PP\left(\mathcal{E}\right) \geq 1-\frac{1}{n}$. On the event $\mathcal{E}$, we will show that any $0<v<n$ such that $\left | v - \eta \right| > C_1 {h(k)}/{\phi^2}$ must satisfy $S^{\eta}_{\lambda} > S^{v}_{\lambda}$, which implies that $\left | \widehat{\eta}_{\lambda} - \eta \right| \leq C_1 {h(k)}/{\phi^2}$.

Fix some $0<v<n$ and let $t^* \in \underset{t \in \mathcal{T}}{\arg \max} \ S^{v}_{\lambda}(t)$, so that $S^{v}_{\lambda} = S^{v}_{\lambda} (t^*)$. We claim that 
\begin{align}
S^{\eta}_{\lambda} - S^{v}_{\lambda} \geq  \beta_v   - 2\left\{{2 \beta_v h(k)}\right\}^{1/2} - (119 + 2\lambda_0) h(k)\label{p1claim}.
\end{align}
To see this, suppose first that $k \leq (p\log n)^{1/2}$. We have that
\begin{align}
S^{\eta}_{\lambda} - S^{v}_{\lambda} &\geq S^{\eta}_{\lambda}(\overline{k}) - S^{v}_{\lambda}(t^*)\\
= & \sum_{i=1 }^p \left [ \left( {C}^{\eta} (i)^2  - \nu_{a(\overline{k})} \right ) \ind{\left\{\left|{C}^{\eta}(i)\right| > a(\overline{k}) \right\}} - \left( {C}^{v} (i)^2  - \nu_{a(t^*)} \right ) \ind{\left\{\left|{C}^{v}(i)\right| > a(t^*) \right\}} \right] \notag\\
&- \lambda_0 r(\overline{k}) + \lambda_0 r(t^*).
\intertext{For any $x \in \RR$ and any $t\in \mathcal{T}$, we have $x^2 - \nu_{a(t)} \leq \left( x^2  - \nu_{a(t)} \right ) \ind{\left\{\left|x\right| > a(t) \right\}} \leq x^2$, and so }
S^{\eta}_{\lambda} - S^{v}_{\lambda} \geq & \sum_{i \in \mathcal{K} } \left[ {C}^{\eta} (i)^2  - {C}^{v} (i)^2  \right ] - k \nu_{a(\overline{k})}  \\
&+ \sum_{i \in [p]\setminus \mathcal{K} } \left [ \left( {C}^{\eta} (i)^2  - \nu_{a(\overline{k})} \right ) \ind{\left\{\left|{C}^{\eta}(i)\right| > a(\overline{k}) \right\}} - \left( {C}^{v} (i)^2  - \nu_{a(t^*)} \right ) \ind{\left\{\left|C^{v}(i)\right| > a(t^*) \right\}} \right] \notag \\
&- \lambda_0 r(\overline{k}) +\lambda_0 r(t^*).
\intertext{On the event $\mathcal{E}_1 \cap \mathcal{E}_2 \cap \mathcal{E}_3 \supseteq \mathcal{E}$ we therefore have that}
S^{\eta}_{\lambda} - S^{v}_{\lambda} \geq & \beta_{v}   - 2 \left\{ 2\beta_v \log n\right\}^{1/2} - 16 r(k)  - 35 r(\overline{k}) \notag \\
&- (63 - \lambda_0) r(t^*) - \lambda_0 r(\overline{k}) -  k \nu_{a(\overline{k})}\\
\geq & \beta_v   - 2 \left \{ 2 \beta_v r(k)\right\}^{1/2} - (92 + 2\lambda_0) r(k)\label{locprope1}, 
\end{align}
where we have used that $\lambda_0 \geq 63$, $\log n \leq r(k)$, $r(\overline{k}) \leq 2 r(k)$, and for $k \leq (p\log n)^{1/2}$, we have $k \nu_{a(\overline{k})} \leq k\left( 2 + a^2(\overline{k})\right) \leq 2k + ka^2(k) \leq 6 r(k)$. Since $r(k) \leq h(k)$ for all $k \in [p]$, the claim \eqref{p1claim} holds whenever $k \leq (p\log n)^{1/2}$.

Now suppose $k > (p \log n)^{1/2}$. By the definition of $\mathcal{E}_4$, we have that
$S^{v}_{\lambda}(t^*) - S^v_{\lambda}(p) \leq 5h(p) + 63 r(t^*) - \lambda_0 r(t^*) + \lambda_0 r(p)$. Hence, 
\begin{align}
S^{\eta}_{\lambda} - S^{v}_{\lambda} &\geq S^{\eta}_{\lambda}(p) - S^{v}_{\lambda}(t^*)\\
&= S^{\eta}_{\lambda}(p) - S^{v}_{\lambda}(p) + S^v_{\lambda}(p) - S^v_{\lambda}(t^*)\\
&\geq S_{\lambda}^{\eta}(p) - S^{v}_{\lambda}(p) - 5h(p) - 63 r(t^*) + \lambda_0 r(t^*) - \lambda_0 r(p)\\
&= \sum_{i=1}^p \left\{      C^{\eta}(i)^2 - C^{v}(i)^2\right\} - 5h(p) - 63 r(t^*) + \lambda_0 r(t^*) - \lambda_0 r(p).
\intertext{On the event $\mathcal{E}$ we thus have that}
S^{\eta}_{\lambda} - S^{v}_{\lambda} &\geq \beta_v -2 \left \{2\beta_v \log n\right\}^{1/2} - 16 r(p) - 63 r(p) - 35 r(p) - 5h(p) \\&- 63 r(t^*) + \lambda_0 r(t^*) - \lambda_0 r(p)\\
&\geq \beta_v -2\left\{ \beta_v h(p)\right\}^{1/2} - (119 + \lambda_0) h(p),
\end{align}
where we in the last inequality used that $r(k)\leq h(k)$ for all $k$ and $\lambda_0 \geq 63$. Hence \eqref{p1claim} holds whenever $k > (p\log n)^{1/2}$. 
Solving the quadratic inequality \eqref{p1claim} with respect to $\beta_v$, we obtain that $S^{\eta}_{\lambda} - S^v_{\lambda} >0$ if 
\begin{align}
\beta_v > \left\{ 2 \left( 4\lambda_0 + 242\right)^{1/2} + 2\lambda_0 +123 \right\} h(k)\label{quadineq}.
\end{align}
Without loss of generality we may assume $v \geq \eta$ (the converse case is similar). By Lemma \ref{lemma4NOT} we have that 
\begin{align}
\beta_v &= \sum_{i \in \mathcal{K}} \left\{ T^{\eta}(\mu\roww{i})^2 - T^v (\mu\roww{i})^2 \right\}\\
&= \frac{|v - \eta| \eta}{|v - \eta| + \eta} \phi^2 \\
&\geq \frac{1}{2} \min\left( \left| v - \eta \right| \ , \ \eta \right) {\phi^2},
\end{align}
and therefore \eqref{quadineq} is satisfied if 
\begin{align}
\min\left( \left| v - \eta \right| \ , \ \eta \right) &> 2 \left\{ 2 \left(4\lambda_0 + 242\right)^{1/2}+ 2\lambda_0 +123 \right\} \frac{h(k)}{\phi^2}\\
&= C_1 \frac{h(k)}{\phi^2} \label{tmpeq}.
\end{align}
By the assumption $C_0 > C_1$, $\eta$ is strictly larger than the right hand side of \eqref{tmpeq}. Therefore \eqref{quadineq} is satisfied if 
\begin{align}
 \left| v - \eta \right| &> C_1 \frac{h(k)}{\phi^2}.
\end{align}
Hence, if $|\eta - v| > C_1 {h(k)}/{\phi^2}$, we must have $S^{\eta}_{\lambda} > S^v_{\lambda}$, and the proof is complete.
\end{proof}\n

 \begin{proof}[Proof of Theorem \ref{adaptivetheorem}]
Let $\gamma_0 \geq 82$, $\gamma(t) = \gamma_0 r(t)$ and define $C_1 = 32\left\{ \gamma_0 + 170 + 8 \left( 2\gamma_0 + 276\right)^{1/2}\right\}$ and $C_0 = 2C_1$. We may without loss of generality take $\sigma=1$, as we can otherwise normalize the data matrix $X$ and replace the squared norm of the change in mean $\phi^2$ by $\phi^2 / \sigma^2$. Let $\mathcal{M} =  \{(s_m, e_m] \ ; \ m\in [M]\}$ denote the (enumerated) collection of seeded intervals generated by Algorithm \ref{alg:seededintervals}. In the following, we will use the name ESAC to refer to either Algorithm \ref{alg:cap} or \ref{alg:cap3}. We work on the event $\mathcal{E} = \mathcal{E}_5 \cap \mathcal{E}_6$ as defined in Lemma \ref{lemma20}, for which we know that $\PP(\mathcal{E}) \geq 1-1/n$. The proof goes as follows. In step 1 we show that each changepoint $\eta_j$ will be detected using a seeded interval with certain properties. In step 2, by an inductive argument, we show that ESAC detects all changepoints within the given error-rate.

\textbf{Step 1}. We first claim that, for each $j$ in $[J]$, there exists a seeded interval $(s_{\widetilde{m}},e_{\widetilde{m}}] = (v-l, v+l] \in \mathcal{M}$ such that the following holds
\begin{enumerate}[label=(P\arabic*)]
    \item $ C_1 r(k_j) /(4\phi^2_j) \leq l \leq C_1 r(k_j) /(2\phi^2_j) \vee 1$; \label{it1}
    \item $|\eta_j - v| \leq l/2$;\label{it5}
    \item $s_{\widetilde{m}} \geq \eta_j - (\Delta_j /2 \vee 1)$;\label{it2}
    \item $e_{\widetilde{m}} \leq \eta_j + (\Delta_j /2 \vee 1)$;\label{it3}
    \item $S_{\gamma, (s_{\widetilde{m}}, e_{\widetilde{m}}]}^v \geq0$.\label{it4}
\end{enumerate}

To see this, fix any $j\in[J]$, and let $h = C_1 r(k_j) /(2\phi^2_j)$. Now let $(s_{\widetilde{m}},e_{\widetilde{m}}] = (v-l,v+l]$ denote the seeded interval from Lemma \ref{lemma14}. Then properties \ref{it1} and \ref{it5} follow immediately. Moreover, as ${\phi_j^2 \Delta_j}\geq C_0 r({k}_j)$ (by the signal-to-noise ratio assumption \eqref{SNR} in Theorem \ref{adaptivetheorem}) and $C_0\geq 2C_1$, we have $h \leq \Delta_j/4$. The properties \ref{it2} and \ref{it3} then follow from Lemma \ref{lemma14}. To show the last property \ref{it4}, observe first that
\begin{align}
S^v_{\gamma, (s_{\widetilde{m}},e_{\widetilde{m}}]} \geq \beta_{(s_{\widetilde{m}},e_{\widetilde{m}}]}^v - 8\left\{2\beta_{(s_{\widetilde{m}},e_{\widetilde{m}}]}^v r(k_j)\right\}^{1/2} - \left( \gamma_0 + 106 \right)r(k_j), 
\end{align}
on the event $\mathcal{E}$, where $\beta_{(s_{\widetilde{m}},e_{\widetilde{m}}]}^v= \sum_{i=1}^p T^v_{(s_{\widetilde{m}},e_{\widetilde{m}}]}(\mu_{\roww{i}})^2$. By solving the quadratic inequality, we obtain that $S^v_{\gamma, (s_{\widetilde{m}},e_{\widetilde{m}}]} \geq 0$ whenever
\begin{align}
\beta_{(s_{\widetilde{m}},e_{\widetilde{m}}]}^v &\geq \left\{\gamma_0 + 170 + 8\left(2\gamma_0 + 276\right)^{1/2}\right\} r(k_j)\\
&= C_1 / 32 r(k_j).
\end{align}
Assume without loss of generality that $\eta_j \leq v$ (the converse case is similar). By the definition of the CUSUM, and using that $|\eta_j - v| \leq l/2$, we get that
\begin{align}
\beta_{(s_{\widetilde{m}},e_{\widetilde{m}}]}^v &= \frac{v-s_{\widetilde{m}}}{(e_{\widetilde{m}}-s)(e_{\widetilde{m}}-v)}(e_{\widetilde{m}}-\eta_j)^2 \phi_j^2\\
&\geq \frac{1}{2l}(l/2)^2 \phi_j^2\\
&=l\phi^2_j /8.
\end{align}
Since $l \geq C_1 r(k_j) /(4\phi^2_j)$, we must have that $\beta_{(s_{\widetilde{m}},e_{\widetilde{m}}]}^v \geq C_1 / 32 r(k_j)$, which implies \ref{it4}.

\textbf{Step 2}. We continue the proof as follows. By induction, with some slight abuse of notation, it suffices to consider any integer interval $(s,e] \subseteq (0,n]$ such that
\begin{align}
\eta_{h-1}\leq s < \eta_{h} < \ldots < \eta_{h+q} <e \leq \eta_{q+h+1},
\end{align}
for some $q\geq -1$, and, whenever $q>-1$, 
\begin{align}
s &\leq \eta_h - \Delta_h/2 ;\\
e &\geq \eta_{h+q} + \Delta_{h+q}/2.
\end{align}
Note that $q=-1$ corresponds to there being no changepoint in the open integer interval $(s,e)$. We consider this case first. For any seeded interval $(s_m, e_m] \subseteq (s,e]$ and any $s_m<v<e_m$, we will have that $S^v_{\gamma, (s_m,e_m]}<0$, due to the definition of the event $\mathcal{E}$. Hence no changepoint will be declared by ESAC in this case.

Now consider the case where $q>-1$. Note first that a changepoint will be declared by the ESAC algorithm. Indeed, for the $h$th changepoint we may take $(s_{\widetilde{m}}, e_{\widetilde{m}}]$ as in step 1, for which the properties \ref{it2} and \ref{it3} imply that $(s_{\widetilde{m}}, e_{\widetilde{m}}] \subseteq (s,e]$, due to the inductive hypothesis. By property \ref{it4}, we know that $S^v_{\gamma, (s_{\widetilde{m}},e_{\widetilde{m}}]}\geq0$, and hence a changepoint will be detected in $(s,e)$. This implies that $\mathcal{O}_{(s,e]}$, as defined in the ESAC algorithm, satisfies $\mathcal{O}_{(s,e]} \neq \emptyset$. Now let $m^*$, $v^*$ and $l^*$ be as defined in the ESAC algorithm. Note that $(s_{m^*}, e_{m^*})$ must contain a changepoint, say $\eta_j$, as we by the definition of $\mathcal{E}$ otherwise would have had $S^v_{\gamma, (s_{m^*},e_{m^*}]}<0$ for any $s_{m^*}< v< e_{m^*}$. Further, since ESAC uses the narrowest possible seeded interval to estimate a changepoint, we must have that $l^* =  (e_{m^*}-s_{m^*})$ satisfies $l^* \leq e_{\widetilde{m}} - s_{\widetilde{m}} \leq C_1 r(k_j) /(\phi^2_j) \vee 2$, where $(s_{\widetilde{m}}, e_{\widetilde{m}}]$ is the seeded interval as in the claim for $\eta_j$. Since $s_{m^*} < v^* < e_{m^*}$, it then follows that 
\begin{align}
    |v^* - \eta_j|\leq  \left\{ C_1 r(k_j) /\phi^2_j \vee 2 \right\}- 2  \leq C_1 \frac{r(k_j)}{\phi^2_j}.
\end{align}

It remains to show that the two new segments in the recursive step, $(s,s_{m^*}+1]$ and $(e_{m^*}-1, e]$ satisfy the inductive hypothesis. Without loss of generality consider $(s,s_{m^*}+1]$ (the argument for the other interval is similar), and suppose that $j \geq h+1$ (otherwise there is nothing to show). To show that the inductive hypothesis holds for $(s,s_{m^*}+1]$, it suffices to show that $s_{m^*}+1 \geq \eta_{j-1} + \Delta_{j-1}/2$. As $\eta_j \in (s_{m^*}, e_{m^*})$, we must have $e_{m^*} \geq \eta_j +1$. Hence
\begin{align}
    s_{m^*}+1 &= e_{m^*} +1  - l^*\\
    &\geq \eta_j +2 - \left\{ C_1 r(k_j) /\phi^2_j \vee 2\right\}\\
    &= \eta_{j-1} + (\eta_j - \eta_{j-1}) - \left\{ C_1 r(k_j) /\phi^2_j -2 \vee 0\right\}\\
    &\geq \eta_{j-1} + (\eta_j - \eta_{j-1}) - \Delta_j /2\\
    &\geq \eta_{j-1} + (\eta_j - \eta_{j-1})/2\\
    &\geq \eta_{j-1} + \Delta_{j-1}/2,
\end{align}
where we in the first inequality used that $l^* = (e_{m^*} - s_{m^*}) \leq C_1 r(k_j) /\phi^2_j \vee 2$ and in the second inequality used that the signal-to-noise ratio condition \eqref{SNR} implies $C_1 r(k_j) /\phi^2_j \leq \Delta_j/2$. 
Hence the inductive hypothesis holds for $(s, s_{m^*}+1]$. 
\end{proof} 

\begin{proof}[Proof of Proposition \ref{compprop}]
Let $\mathcal{M}$ denote the set of seeded intervals generated from Algorithm \ref{alg:seededintervals}. Note first that computing and storing the cumulative sum of all rows of $X$ requires $\mathcal{O}(np)$ FLOPs. Once these are stored, the number of FLOPs required to compute $C^v_{(s,e]}(j)$ as in \eqref{c2def} for some $(s,e] \in \mathcal{M}$ and some $s<v<e$ is of order $\mathcal{O}(p)$. Hence, the number of FLOPs required to compute $S^v_{\lambda,(s,e]}$ is of order $p \left |\mathcal{T} \right| = \mathcal{O}\left\{p \log(p\log n)\right\}$. In the best case there are $n-1$ changepoints detected by the ESAC algorithm using all $n-1$ intervals $(s,e] \in \mathcal{M}$ such that $e-s = 2$. In this case, the total number of FLOPs executed before ESAC terminates is of order $\mathcal{O}\left\{np + np\log(p\log n)\right\} = \mathcal{O}\left\{np\log(p\log n)\right\}$. In the worst case there are no changepoints detected by ESAC, in which case $S^v_{\lambda,(s,e]}$ has to be computed over each triple of integers $s,v,e$ such that $s<v<e$ and $(s,e] \in \mathcal{M}$. By Lemma \ref{lemma221}, there are at most $\mathcal{O}(n\log n)$ distinct such triples. Hence the number of FLOPs executed before ESAC terminates in this case is of order $\mathcal{O}\left\{np\log n \log(p\log n)\right\}$.
\end{proof}

\subsection{Implementation details}\label{impdetails}\label{appendixA}
To apply ESAC in practice, a choice must be made regarding the penalty functions $\lambda, \gamma$, estimation of $\sigma$, as well as the parameters $\alpha$ and $K$ controlling the generation of seeded intervals. 
In this subsection we discuss these issues in turn, but first, we define two variants of ESAC as well as our algorithm for generating seeded intervals.
\subsubsection{Slight modifications to ESAC}
Algorithm \ref{alg:cap} constitutes the variant of the ESAC algorithm that features interval trimming. Here, the recursive step in the algorithm (the third and second last lines) differ from those found in \ref{alg:cap2}. When Algorithm \ref{alg:cap} declares a changepoint at location $v^*$, detected in the interval $(s^*,e^*)$, the remaining elements in interval $(s^*,e^*)$ are never again used to detect or estimate changepoints. 

A faster variant of Algorithm \ref{alg:cap} is given by Algorithm \ref{alg:cap3}. Algorithm \ref{alg:cap3} reduces the execution time by modifying step 3 in Algorithm \ref{alg:cap} to only evaluate $S_{\gamma, (s_m, e_m]}^v$ at the mid-point $v_m = (s_m + e_m)/2$ of any seeded interval. Interestingly, the theoretical guarantees given by Theorem \ref{adaptivetheorem} also hold for this variant of ESAC, unlike Algorithm \ref{alg:cap2}. Note that the same modification can naturally be made to Algorithm \ref{alg:cap2} as well. 
In practice, we have experienced that Algorithm \ref{alg:cap3} has much lower power for detecting changepoints compared to \ref{alg:cap2}. We therefore only recommend using Algorithm \ref{alg:cap3} when the consequent reduction in computational cost is necessary.
\begin{algorithm}
\caption{ESAC'$\left( X, (s, e], \mathcal{M}, \mathcal{B}, \gamma, \lambda \right)$.}\label{alg:cap}
\textbf{Input: } Matrix of observations $X \in \RR^{p \times n}$, left open and right closed integer interval $(s,e]$ in which candidate changepoints are searched for, an enumerated collection $\mathcal{M} =  \{(s_m, e_m] \ ; \ m\in [M]\}$ of $M$ half open integer sub intervals of $\{0, \ldots, n\}$, a set of already detected changepoints $\mathcal{B}$, and penalty functions $\gamma(t), \lambda(t)$. \\
\textbf{Output: } Set $\mathcal{B}$ of detected changepoints.

\begin{tabbing}
\qquad \enspace if {$e-s\leq 1$}:\\
\qquad \qquad stop\\
\qquad \enspace set $\mathcal{M}_{(s,e]} = \left\{   m \in [M] \ : \ (s_m, e_m] \subset (s,e]   \right\}$\\
\qquad \enspace set $\mathcal{O}_{(s,e]} = \left\{   m \in \mathcal{M}_{(s,e]} \ : \ \underset{s_m < v <e_m}{\max}  S^{v}_{\gamma, (s_m,e_m]} >0 \right\}$ \label{alltests}\\
\qquad \enspace if {$\mathcal{O}_{(s,e]} = \emptyset$}\\
\qquad \qquad stop\\
\qquad \enspace set $l^*= \underset{m \in \mathcal{O}_{(s,e]}}{\text{min}} |e_m - s_m|$\\
\qquad \enspace set $\mathcal{O}_{l^*} =  \left\{   m \in \mathcal{O}_{(s,e]} \ : \  |e_m - s_m| = l^* \right\}$\\
\qquad \enspace set $m^*= \underset{m \in \mathcal{O}_{l^*}}{\text{argmax}} \underset{s_m < v <e_m}{\max} S^v_{\lambda, (s_m, e_m]}$\\
\qquad \enspace set $v^*= \underset{s_{m^*} <v <e_{m^*}}{\text{argmax}} S^{v}_{\lambda,(s_{m^*},e_{m^*}]}$\\
\qquad \enspace $\mathcal{B} \gets \mathcal{B} \cup \{v^*\}$\\
\qquad \enspace $\mathcal{B}  \gets \text{ESAC'} \left( X,(s, s_{m^*}+1],\mathcal{M}, \mathcal{B}, \gamma, \lambda \right)$\\
\qquad \enspace $\mathcal{B} \gets \text{ESAC'} \left( X,(e_{m^*}-1,e], \mathcal{M}, \mathcal{B}, \gamma, \lambda \right)$\\
\qquad \enspace return $\mathcal{B}$
\end{tabbing}
\end{algorithm}

\begin{algorithm}
\caption{ESAC${''}\left( X, (s, e], \mathcal{M}, \mathcal{B}, \gamma, \lambda \right)$}\label{alg:cap3}
\textbf{Input: } Matrix of observations $X \in \RR^{p \times n}$, left open and right closed integer interval $(s,e]$ in which candidate changepoints are searched for, an enumerated collection $\mathcal{M} =  \{(s_m, e_m] \ ; \ m\in [M]\}$ of $M$ half open integer sub intervals of $\{0, \ldots, n\}$, a set of already detected changepoints $\mathcal{B}$, and penalty parameters $\gamma, \lambda$.  \\
\textbf{Output: } Set $\mathcal{B}$ of detected changepoints.
\begin{tabbing}
\qquad \enspace if {$e-s\leq 1$}\\
\qquad \qquad    stop\\
\qquad \enspace set $\mathcal{M}_{(s,e]} = \left\{   m \in [M] \ : \ (s_m, e_m] \subset (s,e]   \right\}$\\
\qquad \enspace set $v_m = \left \lfloor \frac{s_m + e_m}{2}\right \rfloor$ for all $m = 1, \ldots, M$\\
\qquad \enspace set $\mathcal{O}_{(s,e]} = \left\{   m \in \mathcal{M}_{(s,e]} \ :   S^{v_m}_{\gamma, (s_m,e_m]} >0 \right\}$ \\
\qquad \enspace if {$\mathcal{O}_{(s,e]} = \emptyset$}\\
\qquad \qquad stop\\
\qquad \enspace set $l^*= \underset{m \in \mathcal{O}_{(s,e]}}{\text{min}} |e_m - s_m|$\\
\qquad \enspace set $\mathcal{O}_{l^*} =  \left\{   m \in \mathcal{O}_{(s,e]} \ : \  |e_m - s_m| = l^* \right\}$ \\
\qquad \enspace set $m^*= \underset{m \in \mathcal{O}_{l^*}}{\text{argmax}} \underset{s_m < v <e_m}{\max} S^v_{\lambda, (s_m, e_m]}$\\
\qquad \enspace set $v^*= \underset{s_{m^*} <v <e_{m^*}}{\text{argmax}} S^{v}_{\lambda,(s_{m^*},e_{m^*}]}$ \\
\qquad \enspace $\mathcal{B} \gets \mathcal{B} \cup \{v^*\}$\\
\qquad \enspace $\mathcal{B}  \gets \text{ESAC}{''} \left( X,(s, s_{m^*}+1],\mathcal{M}, \mathcal{B}, \gamma, \lambda \right)$ \\
\qquad \enspace $\mathcal{B} \gets \text{ESAC}{''} \left( X,(e_{m^*}-1,e], \mathcal{M}, \mathcal{B}, \gamma, \lambda \right)$\\
\qquad \enspace \textbf{return} $\mathcal{B}$
\end{tabbing}
\end{algorithm}
%
%
%
\subsubsection{Efficient implementation of ESAC}
The ESAC Algorithms \ref{alg:cap2}, \ref{alg:cap} and \ref{alg:cap3} are based on Narrowest-Over-Threshold selection of changepoints. Once a changepoint is detected in some seeded interval, 
say of length $l^*$, the changepoint location is estimated based only on intervals of length $l^*$. To minimize running time, any version of ESAC should therefore iterate through the seeded intervals $\left\{(s_m, e_m] \ : \ m \in [M]\right\}$ in the (increasing) order of their width. This computational trick gives significant speed improvements whenever changepoints can be detected by short seeded intervals.
\subsubsection{Choice of $\alpha$ and $K$}
The choice of $\alpha$ and $K$ entails a trade-off between computational cost and statistical performance. As either $\alpha^{-1}$ or $K$ increase, more seeded intervals are generated from Algorithm \ref{alg:seededintervals}, increasing both the chance of detecting a changepoint and the running time of ESAC. After some experimentation, we have experienced that $\alpha = 3/2$ and $K=4$ give a decent balance between running time and statistical accuracy.
\subsubsection{Variance re-scaling}
In the theoretical analysis of this paper, the noise level $\sigma$ of each time series is assumed known and common across all $p$ time series. In practice, this is an unrealistic assumption. As is common in the changepoint literature, we suggest estimating the noise level separately for each time series by the (scaled) Median Absolute Deviation (MAD) of first-order differences, as in e.g. \citet{wang_high_2018}. If it is reasonable to assume that each time series has approximately the same noise level, the common noise level $\sigma$ can be estimated for instance by taking a mean or median of the MAD estimates for each time series. Once estimates of the noise levels are obtained, the time series need only to be re-scaled by their estimated noise levels before applying ESAC.
\subsubsection{Analytical choice of penalty functions}
Recall that $\lambda(t)$ and $\gamma(t)$ are the penalty functions used in the penalized score statistic for changepoint localization and detection, respectively. The proofs of Theorems \ref{locprop} and \ref{adaptivetheorem} provide suggestions for analytical choices of these penalty functions. However, we believe the leading constants are overly conservative. To obtain more practical choices of analytical penalizing functions, we have run simulations for combinations of $n$ up to $1000$ and $p$ up to $5000$. We have experienced that replacing $n$ with $n^4$ in $a(t)$ \eqref{atdef} and $r(t)$ \eqref{rdef} gives a slightly better balance between the two terms $t \log \left(\frac{ep \log n}{t^2} \right)$ and $\log n$ in $r(t)$. As default values in our R package, as 
well as in the simulation study, we have replaced $n$ with $n^4$ in $a(t)$ and $r(t)$. 
For changepoint estimation, we recommend using the penalty function
\begin{align}
\widetilde{\lambda}(t) = \begin{cases} \frac{3}{2}\left\{ \left(p \log n^4\right)^{1/2}  + \log n^4\right\} & \text{ if } t\geq(p \log n)^{1/2}, \\
t \log \left( \frac{e p \log  n^4 }{t^2}\right) + \log n^4 & \text{ otherwise.}
\end{cases}\label{tilder}
\end{align}
This recommendation is independent of whether each time series is re-scaled by Median Absolute Deviation (MAD) estimates. The choice of the two leading constants in $\widetilde{\lambda}$ are the result of minimizing the Mean Squared Error (MSE) of the estimator \eqref{singlechangeloc} over a rough grid of $n$, $p$, $\eta$ and $k$, where $n$ has ranged from $200$ to $1000$ and $p$ has ranged from $100$ to $5000$. For changepoint detection one can also use $\gamma(t) = \widetilde{\lambda}(t)$, which in our experience gives a false positive rate of less than $1/n$. If the variance of each time series is re-scaled by MAD estimates, however, we recommend choosing the penalty function $\gamma(t)$ for changepoint detection using Monte Carlo simulation. 
\subsubsection{Empirical choice of penalty functions}
To obtain exact control over the probability of a false changepoint being detected by ESAC, one can choose the penalty function $\gamma(t)$ by Monte Carlo simulation. Consider any false positive probability $\epsilon>0$ and Monte Carlo sample size $N$. A naive choice of empirical penalty function, denoted by $\widehat{\gamma}_{\epsilon}(t)$, is given by the following. Let $\mathcal{M}$ denote the collection of seeded intervals to be used by ESAC. Simulate $N$ data sets $\left(X^{(j)}\right)_{j=1}^N$ following model \eqref{datamodel} with no changepoints, in which each row is re-scaled by MAD estimates if applicable. For each $t \in \mathcal{T}$, let $\widehat{\gamma}_{\epsilon}(t)$ denote the $\lceil N(1-\epsilon)\rceil$ largest value of $\underset{(s,e] \in \mathcal{M}}{\max} \ \underset{s < v<e}{\max} \ S_{0, (s, e]}^v \left(X^{(j)}\right)(t)$ over $j = 1,\ldots, N$, where $S_{0, (s, e]}^v \left(X^{(j)}\right)(t)$ is the sparsity-specific score statistic from \eqref{stdef} computed over the seeded interval $(s,e]$ with input matrix $X^{(j)}$ and with penalty function $0$.

Due to multiple testing, the approximate false positive probability when using the naive penalty function $\widehat{\gamma}_{\epsilon}$ can only be upper bounded by $\left| \mathcal{T}\right| \epsilon$. To adjust for multiple testing, a Bonferroni correction can easily be applied by replacing $\epsilon$ by $\epsilon / \left| \mathcal{T}\right|$ in the definition of $\widehat{\gamma}_{\epsilon}(t)$. In our experience, though, such a Bonferroni correction is too conservative. An alternative approach to handle the multiple testing is to use the empirical penalty function $\widehat{\gamma}^*(t) = r(t) \ \underset{s \in \mathcal{T}}{\max} \ \widehat{\gamma}_{\epsilon}(s) / r(s)$, in which the functional form is specified and only the leading constant is chosen by Monte Carlo simulation. In our experience, this approach also leads to an overly conservative penalty function, as the functional form of $\widehat{\gamma}_{\epsilon}(t)$ does not match the theoretical counterpart $r(t)$ exactly. We therefore recommend to use the following penalty function $\widetilde{\gamma}(t)$, in which we introduce three separate leading constants for different segments of $\mathcal{T}$ (and consequently a Bonferroni correction) for slightly more flexibility. Let $\widetilde{\gamma}(t)$ be defined by
$$
\widetilde{\gamma}(t) = \begin{cases} \widetilde{\gamma}_{1} r(t) ,& \text{ for } t \leq \log{n} \wedge (p\log n)^{1/2} \\
\widetilde{\gamma}_{2} r(t) ,& \text{ for } \log{n} < t \leq (p\log n)^{1/2}\\
\widehat{\gamma}_{\epsilon/3}(p),& \text{ for }  t = p,
\end{cases}
$$
where $\widetilde{\gamma}_{1}$ and $\widetilde{\gamma}_{2}$ are defined by 
\begin{align}
\widetilde{\gamma}_1 &= \underset{t \in \mathcal{T}; t\leq \log n}{\max} \widehat{\gamma}_{\epsilon/3}(t) / r(t),\\
\widetilde{\gamma}_2 &= \underset{t \in \mathcal{T}; \log n < t \leq \surd(p\log n)}{\max} \widehat{\gamma}_{\epsilon/3}(t) / r(t).
\end{align}
The penalty function $\widetilde{\gamma}(t)$ ensures that the approximate probability of a false positive using ESAC is at most $\epsilon$. We remark that the upper boundary of the first segment ($\log n$) is chosen somewhat ad hoc, while the second segment is the remaining region of the sparse regime, and the last segment is the dense regime. The empirical penalty function $\widetilde{\gamma}(t)$ can also be used for changepoint estimation, i.e. setting $\lambda(t) = \widetilde{\gamma}(t)$, although we have experienced that the analytical penalty function $\widetilde{\lambda}(t)$ gives better performance in terms of MSE for Gaussian data. 
\subsubsection{Generation of seeded intervals}
Given some sample size $n \geq 2$ and parameters $\alpha>1$ and $K >1$, Algorithm \ref{alg:seededintervals} generates a set of seeded intervals. 
\begin{algorithm}
\caption{Seeded Interval Generation$(\alpha, K)$}\label{alg:seededintervals}
\textbf{Input:} Parameters $\alpha$ and $K$ controlling the number of generated intervals\\
\textbf{Output:} Set of seeded intervals
\begin{tabbing}
\qquad \enspace $\text{Intervals} \gets \{\}$\\
\qquad \enspace $l \gets 1$\\ 
\qquad \enspace while {$l \leq \frac{n}{2}$}:\\
    \qquad \qquad set $s = \max \left\{ 1, \lfloor \frac{l}{K} \rfloor\right\}$ \\ 
    
    \qquad \qquad for {$i = 0, \ldots,  \frac{n-2l}{s} $}:\\
        \qquad \qquad \qquad $\text{Intervals} \gets \text{Intervals}\cup \{(is, is+2l]\}$\\
        
    \qquad \qquad $\text{Intervals} \gets \text{Intervals}\cup \{(n-2l, n]\}$\\
    \qquad \qquad $l\gets \max \left\{l+1, \left \lfloor \alpha l \right \rfloor\right\}$\\

\qquad \enspace Return Intervals.
\end{tabbing}
\end{algorithm}

\subsection{A Narrowest-over-Threshold variant of Inspect}\label{appendixB}
We have modified the Inspect algorithm given by Algorithm 4 in \citet{wang_high_2018} in the following fashion. Instead of using Wild Binary Segmentation as search procedure, the methodology of \citet{kovacs_seeded_2022} is used. More specifically, the collection of integer sub-intervals is generated by Algorithm \ref{alg:seededintervals} instead of the random draws. Moreover, the location of any detected changepoint is determined using only the narrowest intervals in which a changepoint is detected. We have given this modified version of Inspect the name NOTInspect, which is short for Narrowest-Over-Threshold Inspect. Formally, NOTInspect is defined as follows. For any $0\leq s < e \leq n$, let $H^{(s,e]}$ denote the $p \times (e-s-1)$ matrix in which the $(i,j)$th element is given by 
$$
H^{(s,e]}_{i,j} = T^{s+j}_{(s,e]}(X\roww{i}),
$$
i.e. the CUSUM of the $i$th row of $X$ computed over the interval $(s,e]$ and evaluated at position $s+j$. For ease of notation, let $H^{(s,e]}_v$ denote the $(v-s)$th column of $H^{(s,e]}$. Given $\lambda >0$, let $\widehat{v}^{(s,e]}_{\lambda}$ denote the leading left singular vector of the matrix
$$
\widehat{M}_{\lambda} = \underset{M \in \mathcal{S}_2}{\arg \ \max} \left(  \left\langle H^{(s,e]}, M \right\rangle- \lambda \normm{M}_1 \right),
$$
where $\mathcal{S}_2 = \left\{   M \in \RR^{p \times (e-s-1) } \ : \ \normm{M}_F \leq 1  \right \}$. 

Given an enumerated set $\mathcal{M} = \{(s_m, e_m] \}_{m=1}^M$ of $M$ half open integer sub intervals of ${0, \ldots, n}$, observations $X \in \RR^{p \times n}$, and tuning parameters $\lambda, \xi >0$, the NOTInspect algorithm is initiated by calling NOTInspect$(X, (s,e], \mathcal{M}, \emptyset, \lambda, \xi)$, and defined by Algorithm \ref{alg:inspect}.

\begin{algorithm}
\caption{NOTInspect$\left( X, (s, e], \mathcal{M}, \mathcal{B}, \lambda, \xi \right)$}\label{alg:inspect}
 {\textbf{Input: } Matrix of observations $X \in \RR^{p \times n}$, left open and right closed integer interval $(s,e]$ in which candidate changepoints are searched for, an enumerated collection $\mathcal{M} =  \{(s_m, e_m] \ ; \ m\in [M]\}$ of $M$ half open integer sub intervals of $\{0, \ldots, n\}$, a set of already detected changepoints $\mathcal{B}$, and penalization parameters $\lambda, \xi >0$.}\\
\textbf{Output: } A set $\mathcal{B}$ of detected changepoints.
\begin{tabbing}
\qquad \enspace if {$e-s\leq 1$}:\\
\qquad \qquad stop\\
\qquad \enspace set $\mathcal{M}_{(s,e]} = \left\{   m \ : \ (s_m, e_m] \subset (s,e]   \right\}$\\
\qquad \enspace set $\mathcal{O}_{(s,e]} = \left\{   m \in \mathcal{M}_{(s,e]} \ : \ \underset{s_m < b <e_m}{\max} \left(\widehat{v}^{(s_m, e_m]}_{\lambda}\right)^{\T} H_{b_m}^{(s_m,e_m]} >\xi \right\}$\\
\qquad \enspace if {$\mathcal{O}_{(s,e]} = \emptyset$}:\\
\qquad \qquad stop\\
\qquad \enspace set $l^*= \underset{m \in \mathcal{O}_{(s,e]}}{\text{min}} |e_m - s_m|$\\
\qquad \enspace set $\mathcal{O}_{l^*} = \mathcal{O}_{(s,e]} \bigcap \left\{   m \ : \  |e_m - s_m| = l^* \right\}$\\
\qquad \enspace set $m^*= \underset{m \in \mathcal{O}_{l^*}}{\text{argmax}} \underset{s_m < b <e_m}{\max} \left(\widehat{v}^{(s_m, e_m]}_{\lambda}\right)^{\T} H_{b}^{(s_m,e_m]}$\\
\qquad \enspace set $b^*= \underset{s_{m^*} <b <e_{m^*}}{\text{argmax}} \left(\widehat{v}^{(s_{m^*}, e_{m^*}]}_{\lambda}\right)^{\T} H_{b}^{(s_{m^*},e_{m^*}]}$\\
\qquad \enspace $\mathcal{B} \gets \mathcal{B} \cup \{b^*\}$\\
\qquad \enspace $\mathcal{B}  \gets \text{NOTInspect} \left( X,(s, b^*],\mathcal{M}, \mathcal{B}, \lambda, \xi \right)$\\
\qquad \enspace $\mathcal{B} \gets \text{NOTInspect} \left( X,(b^*,e], \mathcal{M}, \mathcal{B}, \lambda, \xi \right)$\\
\qquad \enspace {return} $\mathcal{B}$
\end{tabbing}
\end{algorithm}
\subsection{Empirical comparison between different variants of ESAC}\label{appendixC}
In the following we compare the empirical performance of different variants of the ESAC algorithm. In all versions, seeded intervals are generated using Algorithm \ref{alg:seededintervals} with parameters $\alpha$ and $K$ specified. The variants and configurations considered are:\n\n
ESAC A: Algorithm \ref{alg:cap3} \textit{without interval trimming} and with $\alpha = 2$, $K=4$;\n
ESAC B: Algorithm \ref{alg:cap2} with $\alpha = 2$, $K=4$;\n
ESAC C: Algorithm \ref{alg:cap2} with $\alpha = 3/2$, $K=4$;\n
ESAC D: Algorithm \ref{alg:cap} with $\alpha = 3/2$, $K=4$;\n
ESAC E: Algorithm \ref{alg:cap2} without Narrowest-Over-Threshold choice of changepoint location and $\alpha =3/2$, $K=4$;\n
ESAC F: Algorithm \ref{alg:cap2} with mid-point estimation and $\alpha =3/2$, $K=4$.\n\n
ESAC A is a mix of Algorithm \ref{alg:cap3} and Algorithm \ref{alg:cap2}, in the sense that it tests for a changepoint at the midpoint of each seeded interval, but does not trim away intervals once a changepoint is detected. With ESAC E, a changepoint location is estimated by considering all seeded intervals in which a changepoint is detected, and not only the narrowest seeded intervals. This is achieved by replacing $\mathcal{O}_{l^*}$ by $ \mathcal{O}_{(s,e]}$ in Algorithm \ref{alg:cap2}. In ESAC F, the estimated changepoint location $v^*$ is replaced by $v^* = \left \lfloor ( e_{m^*} + s_{m^*}) /2 \right \rfloor$.

We have run a simulation with the exact same configuration as in Section \ref{numericmultiple} in the main text. For changepoint detection, we have chosen the empirical penalty function $\widetilde{\lambda}(t)$ as in Section \ref{impdetails} separately for each variant of ESAC. For changepoint estimation we have used the analytical penalty function $\widetilde{\lambda}(t)$ as given in Section \ref{impdetails}. For each variant of ESAC and each configuration of parameters and changepoint regimes, Table \ref{fig:tablemultiESAC} displays the average Hausdorff distance, average absolute estimation error of $J$ and average running time in milliseconds. For each configuration of parameters and changepoint regimes, the minimum (and best) value of each of the performance measures is indicated in boldface.

Comparing ESAC A and B, one observes that testing only for a changepoint at the midpoint of a seeded interval results in a substantial improvement of running time but with a cost to statistical accuracy. The running time of ESAC B is roughly three to four times that of ESAC A, while the average Hausdorff distance and absolute estimation error of $K$ of ESAC B are generally significantly larger than those of ESAC A, independently of the model configuration. This is likely due to ESAC A having lower power in detecting changepoints than ESAC B, as is indicated by ESAC A having higher estimation error of $K$. Comparing ESAC B and C, one observes a similar effect of decreasing $\alpha$ from $2$ to $3/2$. ESAC C has a running time almost twice that of ESAC B, while the average Hausdorff distance over all simulation setups is around half that of ESAC B. Comparing ESAC C and D, one observes that interval trimming substantially reduces statistical performance, with virtually no gain in terms of computational cost. Importantly, the estimation error of $K$ is markedly higher for ESAC D, which indicates that interval trimming reduces power in detecting changepoints.

Comparing ESAC C and E, one observes that using a Narrowest-over-Threshold method to estimate changepoints (as opposed to considering seeded intervals of all widths) has a mixed effect on statistical performance and a positive effect on the computational cost. In terms of Hausdorff distance, ESAC E tends to slightly outperform ESAC C, while the converse is true when considering estimation error of $K$. In terms of running time, ESAC C slightly outperforms ESAC E, especially when there are many changepoints. Lastly, comparing ESAC C and F, one observes that estimating changepoints using the penalized score statistic improves estimation accuracy compared to estimating changepoints by taking a mid-point of a seeded interval. For ESAC C, The average Hausdorff distance over all simulations is around one third that of ESAC F. Somewhat less pronounced is the difference in estimation error of $K$, where ESAC C also outperforms ESAC F.  
%
%
%
%
%
 \begin{table}[h] \centering
\caption{Multiple changepoint estimation with different variants of ESAC}
\label{fig:tablemultiESAC}
\small
\begin{adjustbox}{width=\columnwidth}
\begin{tabular}{@{\extracolsep{1pt}} cccc|cccccc|cccccc|cccccc}
\hline
\multicolumn{4}{c|}{Parameters} & \multicolumn{6}{c|}{Hausdorff distance} &\multicolumn{6}{c|}{$\left | \widehat{J}-J \right |$}  &\multicolumn{6}{c}{Time in miliseconds} \\ \hline 
$n$ & $p$ & Sparsity & K & \text{A} & \text{B} & \text{C} & \text{D} & \text{E} & \text{F}  & \text{A} & \text{B} & \text{C} & \text{D} & \text{E} & \text{F} &\text{A} & \text{B} & \text{C} & \text{D} & \text{E} & \text{F} \\
\hline \
200 & 100 & - & 0  & -  & -  & -  & -  & -  & -  & \textbf{0.000}  & \textbf{0.000}  & \textbf{0.000}  & \textbf{0.000}  & \textbf{0.000}  & \textbf{0.000}  & \textbf{2.494}  & 8.122 & 13.202 & 13.291 & 13.429 & 13.254 \\
200 & 100 & Dense & 2  & 29.557 & 7.248 & 7.197 & 19.046 & \textbf{7.046}  & 12.934 & 0.453 & 0.104 & \textbf{0.088}  & 0.414 & 0.093 & 0.110 & \textbf{2.720}  & 8.088 & 12.539 & 12.329 & 13.826 & 12.549 \\
200 & 100 & Sparse & 2  & 5.172 & 1.615 & 1.658 & 6.885 & \textbf{1.525}  & 5.886 & 0.085 & 0.020 & \textbf{0.012}  & 0.150 & 0.019 & 0.025 & \textbf{2.822}  & 7.838 & 12.347 & 12.036 & 13.794 & 12.580 \\
200 & 100 & Mixed & 2  & 17.593 & 5.016 & \textbf{5.006}  & 13.105 & 5.087 & 9.727 & 0.274 & 0.065 & \textbf{0.054}  & 0.280 & 0.067 & 0.068 & \textbf{2.615}  & 7.857 & 12.385 & 12.130 & 13.766 & 12.474 \\
200 & 100 & Dense & 5  & 24.954 & 10.901 & \textbf{6.476}  & 20.446 & 6.747 & 10.947 & 1.107 & 0.345 & \textbf{0.194}  & 1.246 & 0.225 & 0.254 & \textbf{2.695}  & 7.133 & 11.089 & 10.747 & 14.609 & 11.043 \\
200 & 100 & Sparse & 5  & 7.042 & 3.584 & \textbf{1.585}  & 9.954 & 1.748 & 5.156 & 0.236 & 0.088 & \textbf{0.028}  & 0.483 & 0.058 & 0.052 & \textbf{2.699}  & 7.056 & 10.710 & 10.719 & 14.382 & 10.906 \\
200 & 100 & Mixed & 5  & 17.453 & 8.030 & 4.609 & 16.320 & \textbf{4.565}  & 8.727 & 0.734 & 0.231 & \textbf{0.128}  & 0.882 & 0.137 & 0.157 & \textbf{2.774}  & 7.083 & 10.964 & 10.580 & 14.536 & 10.942 \\
200 & 1000 & - & 0  & -  & -  & -  & -  & -  & -  & 0.002 & 0.001 & \textbf{0.000}  & \textbf{0.000}  & \textbf{0.000}  & \textbf{0.000}  & \textbf{21.873}  & 80.721 & 132.741 & 132.657 & 132.787 & 132.845 \\
200 & 1000 & Dense & 2  & 8.523 & 2.200 & 1.731 & 11.782 & \textbf{1.482}  & 6.582 & 0.144 & 0.029 & \textbf{0.016}  & 0.285 & 0.021 & 0.025 & \textbf{23.498}  & 75.455 & 121.002 & 118.612 & 134.114 & 121.129 \\
200 & 1000 & Sparse & 2  & 1.772 & 1.166 & 0.972 & 4.981 & \textbf{0.751}  & 4.916 & 0.024 & 0.011 & \textbf{0.004}  & 0.121 & 0.009 & 0.008 & \textbf{23.417}  & 73.991 & 119.314 & 116.363 & 133.431 & 118.794 \\
200 & 1000 & Mixed & 2  & 4.570 & 2.151 & 1.779 & 9.092 & \textbf{1.564}  & 6.183 & 0.081 & 0.024 & \textbf{0.015}  & 0.210 & 0.019 & 0.021 & \textbf{23.233}  & 74.204 & 119.315 & 116.722 & 133.449 & 119.462 \\
200 & 1000 & Dense & 5  & 11.205 & 3.319 & 2.201 & 16.493 & \textbf{1.929}  & 6.409 & 0.412 & 0.090 & \textbf{0.043}  & 0.919 & 0.059 & 0.053 & \textbf{23.379}  & 65.516 & 103.870 & 100.549 & 139.538 & 103.803 \\
200 & 1000 & Sparse & 5  & 3.561 & 1.982 & 0.933 & 9.515 & \textbf{0.736}  & 4.507 & 0.092 & 0.045 & \textbf{0.011}  & 0.429 & 0.025 & 0.021 & \textbf{23.647}  & 65.101 & 101.906 & 98.008 & 136.001 & 101.782 \\
200 & 1000 & Mixed & 5  & 8.422 & 3.099 & 1.754 & 14.095 & \textbf{1.373}  & 5.419 & 0.274 & 0.072 & \textbf{0.033}  & 0.682 & 0.040 & 0.052 & \textbf{23.416}  & 65.203 & 102.504 & 99.180 & 138.009 & 102.491 \\
200 & 5000 & - & 0  & -  & -  & -  & -  & -  & -  & \textbf{0.000}  & 0.002 & 0.003 & 0.003 & 0.003 & 0.003 & \textbf{110.148}  & 425.285 & 700.419 & 700.541 & 698.719 & 702.664 \\
200 & 5000 & Dense & 2  & 6.299 & 1.293 & 1.150 & 5.607 & \textbf{0.747}  & 5.082 & 0.107 & 0.010 & \textbf{0.005}  & 0.130 & 0.009 & 0.007 & \textbf{118.414}  & 385.540 & 619.304 & 599.710 & 702.220 & 619.071 \\
200 & 5000 & Sparse & 2  & 1.586 & 0.909 & 0.812 & 2.787 & \textbf{0.406}  & 4.261 & 0.022 & 0.005 & \textbf{0.001}  & 0.049 & 0.003 & 0.009 & \textbf{117.881}  & 380.432 & 607.773 & 591.900 & 702.059 & 608.943 \\
200 & 5000 & Mixed & 2  & 5.353 & 1.105 & 1.081 & 5.124 & \textbf{0.668}  & 5.067 & 0.085 & 0.007 & \textbf{0.005}  & 0.115 & 0.007 & 0.016 & \textbf{119.099}  & 382.842 & 615.231 & 595.623 & 698.241 & 613.451 \\
200 & 5000 & Dense & 5  & 8.858 & 2.516 & 0.836 & 9.285 & \textbf{0.544}  & 4.386 & 0.290 & 0.054 & \textbf{0.006}  & 0.448 & 0.016 & 0.021 & \textbf{118.777}  & 334.106 & 516.330 & 498.218 & 712.661 & 517.062 \\
200 & 5000 & Sparse & 5  & 3.073 & 1.200 & 0.680 & 4.043 & \textbf{0.341}  & 3.737 & 0.080 & 0.019 & \textbf{0.001}  & 0.160 & 0.007 & 0.015 & \textbf{118.780}  & 328.529 & 507.473 & 486.259 & 703.117 & 507.189 \\
200 & 5000 & Mixed & 5  & 6.073 & 1.582 & 0.706 & 7.221 & \textbf{0.453}  & 4.124 & 0.190 & 0.032 & \textbf{0.004}  & 0.321 & 0.014 & 0.019 & \textbf{118.770}  & 330.622 & 511.869 & 489.977 & 705.463 & 511.883 \\
500 & 100 & - & 0  & -  & -  & -  & -  & -  & -  & 0.001 & \textbf{0.000}  & 0.001 & 0.001 & 0.001 & 0.001 & \textbf{6.205}  & 24.287 & 40.742 & 40.820 & 40.817 & 40.840 \\
500 & 100 & Dense & 2  & 67.900 & 29.573 & \textbf{10.152}  & 42.356 & 10.293 & 22.459 & 0.384 & 0.148 & \textbf{0.047}  & 0.379 & 0.062 & 0.053 & \textbf{6.580}  & 23.749 & 38.286 & 37.800 & 41.399 & 38.353 \\
500 & 100 & Sparse & 2  & 21.915 & 8.268 & 2.361 & 9.141 & \textbf{2.005}  & 11.876 & 0.119 & 0.035 & \textbf{0.001}  & 0.075 & 0.007 & 0.008 & \textbf{6.589}  & 23.215 & 37.525 & 36.811 & 41.112 & 37.503 \\
500 & 100 & Mixed & 2  & 49.745 & 17.831 & \textbf{5.829}  & 27.407 & 6.041 & 17.855 & 0.277 & 0.087 & \textbf{0.021}  & 0.240 & 0.032 & 0.036 & \textbf{6.526}  & 23.268 & 37.893 & 37.271 & 41.240 & 37.939 \\
500 & 100 & Dense & 5  & 48.052 & 21.964 & \textbf{11.901}  & 50.218 & 12.038 & 22.616 & 0.820 & 0.272 & \textbf{0.125}  & 1.265 & 0.148 & 0.176 & \textbf{6.665}  & 21.533 & 34.640 & 33.852 & 43.034 & 34.689 \\
500 & 100 & Sparse & 5  & 18.105 & 7.235 & \textbf{2.312}  & 17.084 & 2.506 & 10.364 & 0.266 & 0.071 & \textbf{0.007}  & 0.334 & 0.046 & 0.032 & \textbf{6.652}  & 21.166 & 33.671 & 32.940 & 42.543 & 33.844 \\
500 & 100 & Mixed & 5  & 34.329 & 14.720 & 6.449 & 33.383 & \textbf{6.254}  & 15.993 & 0.567 & 0.174 & \textbf{0.056}  & 0.755 & 0.086 & 0.086 & \textbf{6.708}  & 21.385 & 34.159 & 33.330 & 43.037 & 34.114 \\
500 & 1000 & - & 0  & -  & -  & -  & -  & -  & -  & \textbf{0.000}  & 0.003 & \textbf{0.000}  & \textbf{0.000}  & \textbf{0.000}  & \textbf{0.000}  & \textbf{58.259}  & 241.550 & 408.466 & 410.328 & 408.340 & 411.164 \\
500 & 1000 & Dense & 2  & 35.257 & 6.585 & 4.104 & 24.037 & \textbf{3.594}  & 14.977 & 0.203 & 0.026 & \textbf{0.014}  & 0.254 & 0.018 & 0.021 & \textbf{62.356}  & 231.143 & 378.515 & 371.903 & 410.607 & 379.165 \\
500 & 1000 & Sparse & 2  & 6.619 & 2.473 & 1.603 & 7.812 & \textbf{1.050}  & 10.553 & 0.034 & 0.004 & \textbf{0.000}  & 0.078 & 0.006 & 0.001 & \textbf{62.345}  & 227.051 & 372.265 & 363.301 & 411.828 & 371.272 \\
500 & 1000 & Mixed & 2  & 21.890 & 6.052 & 2.034 & 15.490 & \textbf{1.473}  & 12.472 & 0.117 & 0.023 & \textbf{0.002}  & 0.150 & 0.008 & 0.006 & \textbf{62.723}  & 228.743 & 372.856 & 366.669 & 410.545 & 374.385 \\
500 & 1000 & Dense & 5  & 28.396 & 7.049 & 2.960 & 39.687 & \textbf{2.806}  & 12.167 & 0.438 & 0.066 & \textbf{0.017}  & 0.912 & 0.040 & 0.025 & \textbf{63.259}  & 207.844 & 336.695 & 327.587 & 427.493 & 335.335 \\
500 & 1000 & Sparse & 5  & 8.535 & 3.037 & 1.769 & 14.935 & \textbf{1.211}  & 9.094 & 0.096 & 0.020 & \textbf{0.003}  & 0.278 & 0.025 & 0.011 & \textbf{63.527}  & 204.245 & 328.075 & 317.877 & 416.863 & 326.975 \\
500 & 1000 & Mixed & 5  & 17.612 & 4.813 & 3.147 & 26.652 & \textbf{2.371}  & 11.192 & 0.274 & 0.034 & \textbf{0.017}  & 0.572 & 0.033 & 0.025 & \textbf{63.396}  & 205.635 & 333.017 & 323.492 & 423.150 & 332.866 \\
500 & 5000 & - & 0  & -  & -  & -  & -  & -  & -  & \textbf{0.000}  & \textbf{0.000}  & \textbf{0.000}  & \textbf{0.000}  & \textbf{0.000}  & \textbf{0.000}  & \textbf{316.528}  & 1328.522 & 2242.270 & 2245.298 & 2249.252 & 2236.030 \\
500 & 5000 & Dense & 2  & 16.733 & 5.984 & 2.294 & 13.965 & \textbf{1.390}  & 11.295 & 0.088 & 0.023 & \textbf{0.002}  & 0.132 & 0.007 & 0.003 & \textbf{336.784}  & 1262.177 & 2023.982 & 1994.241 & 2267.927 & 2031.555 \\
500 & 5000 & Sparse & 2  & 4.798 & 1.625 & 1.951 & 3.990 & \textbf{0.859}  & 10.402 & 0.024 & \textbf{0.001}  & 0.002 & 0.022 & 0.003 & 0.008 & \textbf{334.060}  & 1241.024 & 2009.202 & 1957.339 & 2269.248 & 1999.724 \\
500 & 5000 & Mixed & 2  & 14.071 & 5.259 & 1.946 & 8.262 & \textbf{1.092}  & 11.044 & 0.068 & 0.018 & \textbf{0.002}  & 0.068 & 0.009 & 0.009 & \textbf{331.958}  & 1250.821 & 2026.691 & 1978.321 & 2270.342 & 2019.991 \\
500 & 5000 & Dense & 5  & 13.681 & 5.718 & 2.015 & 20.418 & \textbf{1.283}  & 10.137 & 0.182 & 0.054 & \textbf{0.003}  & 0.408 & 0.029 & 0.018 & \textbf{335.042}  & 1118.323 & 1771.210 & 1701.822 & 2302.847 & 1761.599 \\
500 & 5000 & Sparse & 5  & 3.990 & 2.460 & 1.726 & 6.917 & \textbf{0.877}  & 8.494 & 0.038 & 0.012 & \textbf{0.006}  & 0.098 & 0.019 & 0.020 & \textbf{334.404}  & 1085.380 & 1720.665 & 1656.142 & 2260.874 & 1714.863 \\
500 & 5000 & Mixed & 5  & 9.622 & 4.317 & 1.925 & 14.145 & \textbf{0.908}  & 9.686 & 0.120 & 0.036 & \textbf{0.001}  & 0.255 & 0.021 & 0.021 & \textbf{331.365}  & 1095.025 & 1731.004 & 1666.874 & 2258.874 & 1733.001 \\
\hline \multicolumn{4}{c|}{Average}
 & 16.453  & 5.941 & 2.990 & 15.602 & \textbf{2.660} & 9.631 & 0.210 & 0.056 & \textbf{0.023} & 0.324 & 0.034 & 0.036 \\
\hline \\[-1.8ex]
\end{tabular}
\end{adjustbox}
\end{table}

\subsection{Some more simulations}\label{appendixD}
\subsubsection{Extended table from Section \ref{numericmultiple}}
For each method considered and each configuration of parameters (now $n \in \{200,500\}$) and changepoint regimes, Table \ref{fig:tablemulti} displays the average Hausdorff distance and average absolute estimation error of $J$. 
 \begin{table}[h] \centering
\caption{Multiple changepoint estimation}
\label{fig:tablemulti}
\small
\begin{adjustbox}{width=\columnwidth}
\begin{tabular}{@{\extracolsep{1pt}} cccc|cccccc|cccccc}
\hline
\multicolumn{4}{c|}{Parameters} & \multicolumn{6}{c|}{Hausdorff distance} &\multicolumn{6}{c|}{$\left | \widehat{J}-J \right |$}  \\ \hline 
        $n$ & $p$ & Sparsity & J & \text{ESAC} & \text{Pilliat} & \text{Inspect} & \text{SBS} & \text{SUBSET} & \text{DC} & \text{ESAC} & \text{Pilliat} & \text{Inspect} & \text{SBS} & \text{SUBSET} & \text{DC}  \\  \hline
        200 & 100 & - & 0 & - & - & - & - & - & - & \textbf{0.000} & \textbf{0.000} & \textbf{0.000} & 0.035 & 0.001 & 0.043  \\ 
        200 & 100 & Dense & 2 & \textbf{5.273} & 23.045 & 21.610 & 69.918 & 6.516 & 76.367 & \textbf{0.059} & 0.477 & 0.286 & 1.028 & 0.108 & 1.076  \\ 
        200 & 100 & Sparse & 2 & \textbf{1.427} & 12.539 & 7.920 & 49.193 & 1.667 & 14.918 & \textbf{0.009} & 0.245 & 0.103 & 0.695 & 0.035 & 0.260  \\ 
        200 & 100 & Mixed & 2 & 4.598 & 18.566 & 14.355 & 61.153 & \textbf{3.741} & 52.926 & \textbf{0.051} & 0.356 & 0.188 & 0.867 & 0.062 & 0.742  \\ 
        200 & 100 & Dense & 5 & \textbf{5.172} & 23.738 & 16.271 & 67.285 & 6.424 & 55.707 & \textbf{0.137} & 1.530 & 0.540 & 3.198 & 0.330 & 2.759  \\ 
        200 & 100 & Sparse & 5 & \textbf{1.359} & 13.986 & 6.378 & 57.178 & 2.424 & 23.193 & \textbf{0.023} & 0.799 & 0.228 & 2.786 & 0.204 & 1.361  \\ 
        200 & 100 & Mixed & 5 & \textbf{4.042} & 18.768 & 12.559 & 60.433 & 4.901 & 41.305 & \textbf{0.098} & 1.197 & 0.421 & 3.008 & 0.281 & 2.164  \\ 
        200 & 1000 & - & 0 & - & - & - & - & - & - & \textbf{0.000} & \textbf{0.000} & \textbf{0.000} & 0.340 & 0.024 & 0.017  \\ 
        200 & 1000 & Dense & 2 & 1.452 & 13.393 & 9.490 & 55.116 & \textbf{1.079} & 84.983 & \textbf{0.008} & 0.279 & 0.119 & 0.721 & 0.031 & 1.232  \\ 
        200 & 1000 & Sparse & 2 & 0.830 & 9.074 & 9.830 & 49.716 & \textbf{0.751} & 20.376 & \textbf{0.003} & 0.182 & 0.151 & 0.621 & 0.029 & 0.291  \\ 
        200 & 1000 & Mixed & 2 & 1.776 & 11.782 & 9.585 & 50.506 & \textbf{1.480} & 53.226 & \textbf{0.013} & 0.243 & 0.125 & 0.690 & 0.041 & 0.817  \\ 
        200 & 1000 & Dense & 5 & \textbf{1.524} & 15.808 & 9.328 & 57.330 & 1.558 & 60.791 & \textbf{0.025} & 1.018 & 0.261 & 2.960 & 0.173 & 3.057  \\ 
        200 & 1000 & Sparse & 5 & \textbf{0.685} & 11.342 & 9.161 & 54.198 & 1.520 & 25.800 & \textbf{0.002} & 0.639 & 0.357 & 2.684 & 0.167 & 1.428  \\ 
        200 & 1000 & Mixed & 5 & \textbf{1.189} & 13.932 & 9.446 & 58.255 & 1.720 & 45.158 & \textbf{0.017} & 0.813 & 0.332 & 2.864 & 0.191 & 2.377  \\ 
        200 & 5000 & - & 0 & - & - & - & - & - & - & \textbf{0.000} & \textbf{0.000} & 0.002 & 1.957 & \textbf{0.000} & \textbf{0.000}  \\ 
        200 & 5000 & Dense & 2 & 0.965 & 12.988 & 13.254 & 54.753 & \textbf{0.600} & 124.337 & \textbf{0.003} & 0.294 & 0.170 & 0.534 & 0.019 & 1.679  \\ 
        200 & 5000 & Sparse & 2 & 0.763 & 9.655 & 15.806 & 58.961 & \textbf{0.512} & 70.825 & \textbf{0.001} & 0.203 & 0.235 & 0.562 & 0.028 & 1.039  \\ 
        200 & 5000 & Mixed & 2 & 0.977 & 11.364 & 13.762 & 55.915 & \textbf{0.782} & 96.945 & \textbf{0.003} & 0.254 & 0.212 & 0.500 & 0.027 & 1.361  \\ 
        200 & 5000 & Dense & 5 & \textbf{0.891} & 14.586 & 12.967 & 50.328 & 1.373 & 83.639 & \textbf{0.009} & 0.942 & 0.449 & 2.448 & 0.173 & 3.635  \\ 
        200 & 5000 & Sparse & 5 & \textbf{0.534} & 11.697 & 13.501 & 53.998 & 1.291 & 54.328 & \textbf{0.000} & 0.735 & 0.548 & 2.529 & 0.176 & 2.712  \\ 
        200 & 5000 & Mixed & 5 & \textbf{0.776} & 13.568 & 13.252 & 53.718 & 1.413 & 69.941 & \textbf{0.004} & 0.855 & 0.501 & 2.503 & 0.174 & 3.191  \\ 
        500 & 100 & - & 0 & - & - & - & - & - & - & \textbf{0.000} & \textbf{0.000} & \textbf{0.000} & 0.057 & 0.004 & 0.046  \\ 
        500 & 100 & Dense & 2 & 15.310 & 71.171 & 50.125 & 130.610 & \textbf{7.423} & 140.433 & 0.078 & 0.612 & 0.299 & 0.784 & \textbf{0.048} & 0.838  \\ 
        500 & 100 & Sparse & 2 & \textbf{2.739} & 30.053 & 15.911 & 66.668 & 3.201 & 21.826 & \textbf{0.003} & 0.284 & 0.096 & 0.363 & 0.035 & 0.161  \\ 
        500 & 100 & Mixed & 2 & 11.309 & 54.753 & 35.821 & 93.890 & \textbf{5.418} & 88.264 & 0.063 & 0.464 & 0.208 & 0.533 & \textbf{0.038} & 0.530  \\ 
        500 & 100 & Dense & 5 & 20.245 & 66.890 & 50.548 & 120.816 & \textbf{8.226} & 102.263 & 0.229 & 1.939 & 0.726 & 2.543 & \textbf{0.207} & 1.988  \\ 
        500 & 100 & Sparse & 5 & \textbf{3.591} & 36.375 & 20.258 & 70.980 & 4.767 & 19.457 & \textbf{0.021} & 0.923 & 0.301 & 1.605 & 0.191 & 0.513  \\ 
        500 & 100 & Mixed & 5 & 12.290 & 53.954 & 39.507 & 104.507 & \textbf{6.803} & 77.078 & \textbf{0.150} & 1.489 & 0.582 & 2.138 & 0.193 & 1.441  \\ 
        500 & 1000 & - & 0 & - & - & - & - & - & - & \textbf{0.000} & 0.002 & \textbf{0.000} & 0.471 & \textbf{0.000} & 0.025  \\ 
        500 & 1000 & Dense & 2 & \textbf{2.217} & 30.635 & 18.859 & 102.138 & 2.790 & 99.626 & \textbf{0.002} & 0.295 & 0.094 & 0.531 & 0.044 & 0.591  \\ 
        500 & 1000 & Sparse & 2 & 2.052 & 17.323 & 23.773 & 71.741 & \textbf{1.325} & 18.185 & \textbf{0.002} & 0.148 & 0.135 & 0.382 & 0.023 & 0.116  \\ 
        500 & 1000 & Mixed & 2 & 2.181 & 25.431 & 21.361 & 92.648 & \textbf{1.907} & 68.446 & \textbf{0.002} & 0.223 & 0.114 & 0.496 & 0.061 & 0.426  \\ 
        500 & 1000 & Dense & 5 & \textbf{2.893} & 36.449 & 25.510 & 97.098 & 3.660 & 86.127 & \textbf{0.015} & 0.964 & 0.282 & 2.143 & 0.174 & 1.717  \\ 
        500 & 1000 & Sparse & 5 & \textbf{1.808} & 22.483 & 22.461 & 66.221 & 4.089 & 20.893 & \textbf{0.001} & 0.547 & 0.363 & 1.493 & 0.191 & 0.498  \\ 
        500 & 1000 & Mixed & 5 & \textbf{2.526} & 31.025 & 24.077 & 90.958 & 4.682 & 60.749 & \textbf{0.011} & 0.768 & 0.315 & 1.847 & 0.213 & 1.216  \\ 
        500 & 5000 & - & 0 & - & - & - & - & - & - & \textbf{0.000} & 0.002 & 0.006 & 2.508 & 0.025 & \textbf{0.000}  \\ 
        500 & 5000 & Dense & 2 & 2.011 & 34.047 & 21.082 & 139.285 & \textbf{1.647} & 282.956 & \textbf{0.002} & 0.329 & 0.105 & 1.106 & 0.030 & 1.562  \\ 
        500 & 5000 & Sparse & 2 & 2.033 & 16.974 & 32.140 & 144.237 & \textbf{1.408} & 132.128 & \textbf{0.002} & 0.148 & 0.188 & 1.159 & 0.083 & 0.778  \\ 
        500 & 5000 & Mixed & 2 & 2.277 & 26.553 & 24.275 & 141.006 & \textbf{1.990} & 207.933 & \textbf{0.003} & 0.241 & 0.133 & 1.146 & 0.027 & 1.184  \\ 
        500 & 5000 & Dense & 5 & \textbf{1.937} & 42.354 & 24.293 & 84.334 & 3.570 & 187.785 & \textbf{0.006} & 1.182 & 0.292 & 1.133 & 0.186 & 3.405  \\ 
        500 & 5000 & Sparse & 5 & \textbf{1.671} & 21.168 & 29.794 & 79.665 & 2.654 & 104.455 & \textbf{0.002} & 0.560 & 0.500 & 1.073 & 0.150 & 2.029  \\ 
        500 & 5000 & Mixed & 5 & \textbf{1.957} & 31.474 & 25.265 & 84.857 & 3.864 & 143.185 & \textbf{0.006} & 0.860 & 0.380 & 1.154 & 0.190 & 2.772  \\ \hline
        \multicolumn{4}{c|}{Average} & 3.480 & 25.248 & 20.098 & 77.767 & \textbf{3.033} & 81.015 & \textbf{0.025} & 0.549 & 0.246 & 1.385 & 0.104 & 1.263 \\ 
\hline
\end{tabular}
\end{adjustbox}
\end{table}

\subsubsection{Simulations with randomly drawn changes in mean}
Tables \ref{tablesinglelocuneven} and \ref{fig:tablemultiuneven} respectively display the results of re-running the simulations in Sections \ref{numericsingle} and \ref{numericmultiple} with the modification that changes in the mean-vector drawn randomly. More specifically, for each changepoint we have taken the change in mean $\theta$ to satisfy $\theta_{1:k} \propto \left(Z^{\T}, 0_{p-k}^{\T} \right)$ where $Z\sim \N_k(0, 1)$. Apart from this single modification, the simulation setups are identical to the ones in Sections \ref{numericsingle} and \ref{numericmultiple}, including the norms of the changes in mean. 

In the single changepoint case, ESAC displays a slightly larger variability in performance compared to the simulation where changes in mean are evenly spread across the affected coordinates. Averaging over all values of $n$, $p$ and $k$, one observes that the MSE of ESAC has increased slightly in comparison with table \ref{tablesingleloc}. Meanwhile, the opposite is true for the competing methods. Still, Tables \ref{tablesingleloc} and \ref{tablesinglelocuneven} are quite similar, and ESAC has highly competitive performance in both. 
\begin{table}[h] \centering
\caption{Single changepoint estimation with changes in mean randomly drawn}
\label{tablesinglelocuneven}
\small
\begin{adjustbox}{width=\columnwidth}
\begin{tabular}{@{\extracolsep{1pt}} ccccc|ccccc|ccccc}
\hline
\multicolumn{5}{c|}{Parameters} & \multicolumn{5}{c|}{MSE} &\multicolumn{5}{c|}{Time in milliseconds} \\ \hline 
$n$ & $p$ & $k$ &  $\eta$ & $\phi$ & \text{ESAC} & \text{Inspect} & \text{SBS} & \text{SUBSET} & \text{DC} & \text{ESAC} & \text{Inspect} &\text{SBS} & \text{SUBSET}& \text{DC} \\
\hline \
200 & 100 & 1 & 40 & 1.40 & 10.1 & 25.7 & 70.4 & 39.2 & \textbf{9.4}  & \textbf{0.3}  & 2.2 & 11.0 & 2.1 & 13.8 \\
200 & 100 & 5 & 40 & 2.00 & 4.6 & \textbf{2.4}  & 40.0 & 2.6 & 6.6 & \textbf{0.4}  & 2.2 & 11.2 & 2.4 & 14.5 \\
200 & 100 & 24 & 40 & 1.90 & 46.6 & \textbf{20.3}  & 977.3 & 125.1 & 168.4 & \textbf{0.3}  & 2.1 & 10.6 & 1.8 & 17.5 \\
200 & 100 & 100 & 40 & 1.90 & \textbf{53.0}  & 147.2 & 1507.2 & 307.4 & 1614.6 & \textbf{0.3}  & 2.2 & 10.5 & 1.9 & 13.4 \\
200 & 1000 & 1 & 40 & 1.52 & \textbf{5.1}  & 95.2 & 34.9 & 27.5 & 5.2 & \textbf{2.1}  & 44.3 & 96.1 & 13.4 & 136.0 \\
200 & 1000 & 10 & 40 & 2.93 & 1.4 & 0.6 & 2.5 & \textbf{0.5}  & 3.4 & \textbf{2.2}  & 44.2 & 96.0 & 13.9 & 138.2 \\
200 & 1000 & 73 & 40 & 3.37 & 6.1 & \textbf{2.0}  & 510.3 & 4.9 & 15.0 & \textbf{2.2}  & 44.3 & 95.4 & 13.7 & 137.7 \\
200 & 1000 & 1000 & 40 & 3.37 & \textbf{3.7}  & 544.8 & 1546.1 & \textbf{3.7}  & 357.0 & \textbf{2.1}  & 44.5 & 94.8 & 13.4 & 138.9 \\
200 & 5000 & 1 & 40 & 1.60 & 38.9 & 481.5 & \textbf{19.3}  & 47.6 & 165.3 & \textbf{12.4}  & 226.8 & 486.2 & 80.6 & 730.1 \\
200 & 5000 & 18 & 40 & 4.00 & 1.1 & \textbf{0.5}  & 0.9 & 0.6 & 1.4 & \textbf{12.6}  & 226.8 & 483.2 & 81.7 & 731.0 \\
200 & 5000 & 163 & 40 & 5.04 & 3.4 & \textbf{0.7}  & 281.0 & 3.4 & 18.2 & \textbf{12.6}  & 226.4 & 479.2 & 82.3 & 733.8 \\
200 & 5000 & 5000 & 40 & 5.04 & \textbf{3.8}  & 1299.5 & 1548.5 & \textbf{3.8}  & 6.0 & \textbf{13.0}  & 226.6 & 478.3 & 83.1 & 734.2 \\
500 & 100 & 1 & 100 & 0.92 & \textbf{48.8}  & 86.9 & 130.1 & 242.1 & 48.9 & \textbf{0.6}  & 6.0 & 17.9 & 4.6 & 29.7 \\
500 & 100 & 5 & 100 & 1.31 & 14.2 & 13.3 & 49.7 & \textbf{12.5}  & 37.2 & \textbf{0.6}  & 6.0 & 17.9 & 5.1 & 29.0 \\
500 & 100 & 25 & 100 & 1.25 & 223.3 & \textbf{63.7}  & 5044.6 & 746.3 & 671.9 & \textbf{0.6}  & 6.0 & 17.7 & 4.1 & 28.4 \\
500 & 100 & 100 & 100 & 1.25 & \textbf{446.9}  & 727.1 & 9526.7 & 1953.2 & 8663.5 & \textbf{0.6}  & 6.0 & 18.6 & 3.9 & 28.1 \\
500 & 1000 & 1 & 100 & 1.00 & 30.1 & 200.6 & 30.0 & 125.3 & \textbf{29.3}  & \textbf{5.7}  & 275.3 & 160.7 & 35.7 & 288.9 \\
500 & 1000 & 10 & 100 & 1.90 & 4.7 & 4.3 & 11.6 & \textbf{2.7}  & 13.9 & \textbf{5.7}  & 274.5 & 161.5 & 35.2 & 288.5 \\
500 & 1000 & 79 & 100 & 2.22 & 14.6 & \textbf{10.5}  & 1854.5 & 13.3 & 168.5 & \textbf{5.8}  & 275.2 & 161.1 & 35.4 & 288.2 \\
500 & 1000 & 1000 & 100 & 2.22 & \textbf{22.4}  & 2083.8 & 9755.1 & 63.5 & 2677.1 & \textbf{6.0}  & 277.1 & 160.6 & 35.4 & 289.5 \\
500 & 5000 & 1 & 100 & 1.05 & \textbf{26.4}  & 1290.1 & 36.0 & 75.3 & 265.8 & \textbf{32.8}  & 1401.2 & 812.4 & 178.8 & 1634.8 \\
500 & 5000 & 18 & 100 & 2.58 & 2.2 & 1.4 & 7.0 & \textbf{1.2}  & 9.3 & \textbf{31.7}  & 1399.8 & 804.8 & 178.9 & 1564.4 \\
500 & 5000 & 177 & 100 & 3.32 & 13.9 & \textbf{1.9}  & 812.9 & 13.9 & 111.0 & \textbf{32.8}  & 1399.2 & 803.7 & 177.0 & 1563.2 \\
500 & 5000 & 5000 & 100 & 3.32 & \textbf{14.9}  & 6546.6 & 9921.2 & \textbf{14.9}  & 25.2 & \textbf{33.6}  & 1400.3 & 800.8 & 175.1 & 1560.7 \\
\hline \multicolumn{5}{c|}{Average MSE}
 & \textbf{43.331}  & 568.765 & 1821.581 & 159.599 & 628.846 \\
\hline \\[-1.8ex]
\end{tabular}
\end{adjustbox}
\end{table}

 \begin{table}[h] \centering
\caption{Multiple changepoint estimation with randomly drawn changes in mean}
\label{fig:tablemultiuneven}
\small
\begin{adjustbox}{width=\columnwidth}
\begin{tabular}{@{\extracolsep{1pt}} cccc|cccccc|cccccc|cccccc}
\hline
\multicolumn{4}{c|}{Parameters} & \multicolumn{6}{c|}{Hausdorff distance} &\multicolumn{6}{c|}{$\left | \widehat{J}-J \right |$}  &\multicolumn{6}{c}{Time in miliseconds} \\ \hline 
$n$ & $p$ & Sparsity & K & \text{ESAC}  & \text{Pilliat} & \text{Inspect} & \text{SBS} & \text{SUBSET}  & \text{DC} & \text{ESAC}  & \text{Pilliat} & \text{Inspect} & \text{SBS} & \text{SUBSET}  & \text{DC} &\text{ESAC} & \text{Pilliat} & \text{Inspect} & \text{SBS} & \text{SUBSET}  & \text{DC}  \\
\hline \
200 & 100 & - & 0  & -  & -  & -  & -  & -  & -  & \textbf{0.000}  & \textbf{0.000}  & \textbf{0.000}  & 0.036 & 0.001 & 0.056 & \textbf{12.966}  & 30.647 & 74.424 & 216.435 & 210.336 & 267.195 \\
200 & 100 & Dense & 2  & 6.343 & 24.101 & 16.779 & 54.275 & \textbf{5.489}  & 40.699 & \textbf{0.068}  & 0.494 & 0.217 & 0.811 & 0.087 & 0.595 & \textbf{13.286}  & 27.903 & 63.091 & 215.905 & 369.398 & 283.611 \\
200 & 100 & Sparse & 2  & \textbf{1.041}  & 9.866 & 9.135 & 51.509 & 1.341 & 13.402 & \textbf{0.005}  & 0.168 & 0.121 & 0.688 & 0.041 & 0.185 & \textbf{12.863}  & 26.726 & 59.260 & 217.011 & 376.712 & 285.572 \\
200 & 100 & Mixed & 2  & 4.530 & 18.165 & 12.471 & 49.626 & \textbf{3.579}  & 29.501 & \textbf{0.044}  & 0.350 & 0.156 & 0.712 & 0.056 & 0.403 & \textbf{12.998}  & 27.244 & 60.739 & 215.607 & 368.155 & 281.791 \\
200 & 100 & Dense & 5  & \textbf{5.051}  & 23.400 & 13.704 & 60.773 & 7.000 & 36.150 & \textbf{0.134}  & 1.571 & 0.446 & 2.984 & 0.372 & 1.990 & \textbf{11.560}  & 23.256 & 44.433 & 218.781 & 406.915 & 286.411 \\
200 & 100 & Sparse & 5  & \textbf{1.144}  & 11.798 & 7.383 & 54.875 & 2.197 & 21.282 & \textbf{0.014}  & 0.628 & 0.272 & 2.793 & 0.215 & 1.227 & \textbf{11.062}  & 21.775 & 40.321 & 216.358 & 403.685 & 286.231 \\
200 & 100 & Mixed & 5  & \textbf{3.868}  & 19.370 & 11.640 & 57.086 & 4.795 & 32.205 & \textbf{0.102}  & 1.135 & 0.390 & 2.825 & 0.259 & 1.687 & \textbf{11.252}  & 22.338 & 42.175 & 218.172 & 398.624 & 287.708 \\
200 & 1000 & - & 0  & -  & -  & -  & -  & -  & -  & \textbf{0.001}  & 0.002 & \textbf{0.001}  & 0.334 & 0.049 & 0.011 & \textbf{133.635}  & 417.508 & 676.153 & 1383.229 & 1644.789 & 1541.410 \\
200 & 1000 & Dense & 2  & 1.621 & 13.373 & 7.469 & 46.956 & \textbf{0.914}  & 65.967 & \textbf{0.010}  & 0.288 & 0.085 & 0.652 & 0.025 & 0.966 & \textbf{123.439}  & 349.062 & 504.772 & 1396.731 & 2937.718 & 1620.628 \\
200 & 1000 & Sparse & 2  & \textbf{0.580}  & 5.056 & 9.926 & 45.467 & \textbf{0.580}  & 9.819 & \textbf{0.001}  & 0.061 & 0.150 & 0.601 & 0.025 & 0.160 & \textbf{112.348}  & 332.217 & 493.877 & 1405.696 & 2935.402 & 1680.447 \\
200 & 1000 & Mixed & 2  & 1.160 & 9.583 & 9.511 & 48.964 & \textbf{1.000}  & 44.287 & \textbf{0.007}  & 0.171 & 0.118 & 0.625 & 0.034 & 0.629 & \textbf{113.235}  & 340.377 & 500.685 & 1395.207 & 2947.915 & 1646.245 \\
200 & 1000 & Dense & 5  & \textbf{1.676}  & 15.654 & 6.579 & 52.049 & 2.015 & 52.562 & \textbf{0.025}  & 0.967 & 0.181 & 2.737 & 0.184 & 2.694 & \textbf{98.949}  & 283.893 & 339.906 & 1403.437 & 3137.914 & 1689.102 \\
200 & 1000 & Sparse & 5  & \textbf{0.654}  & 6.088 & 10.978 & 56.695 & 1.380 & 20.461 & \textbf{0.004}  & 0.261 & 0.409 & 2.797 & 0.180 & 1.196 & \textbf{92.722}  & 266.748 & 332.063 & 1406.943 & 3189.067 & 1704.795 \\
200 & 1000 & Mixed & 5  & \textbf{0.972}  & 11.230 & 7.933 & 54.369 & 1.532 & 40.742 & \textbf{0.014}  & 0.587 & 0.283 & 2.757 & 0.181 & 2.090 & \textbf{94.930}  & 274.330 & 336.437 & 1396.920 & 3169.231 & 1693.709 \\
200 & 5000 & - & 0  & -  & -  & -  & -  & -  & -  & 0.001 & \textbf{0.000}  & 0.003 & 2.045 & \textbf{0.000}  & \textbf{0.000}  & \textbf{691.022}  & 2483.067 & 3438.853 & 6474.674 & 7839.738 & 45460.281 \\
200 & 5000 & Dense & 2  & \textbf{0.833}  & 12.928 & 9.419 & 53.485 & 0.881 & 125.673 & \textbf{0.002}  & 0.253 & 0.113 & 0.503 & 0.029 & 1.675 & \textbf{608.734}  & 2068.055 & 2597.533 & 6476.473 & 13875.649 & 46029.927 \\
200 & 5000 & Sparse & 2  & 0.680 & 5.005 & 21.564 & 56.211 & \textbf{0.372}  & 20.694 & \textbf{0.002}  & 0.078 & 0.311 & 0.544 & 0.021 & 0.288 & \textbf{576.760}  & 1980.902 & 2657.543 & 6486.496 & 13803.731 & 45776.696 \\
200 & 5000 & Mixed & 2  & 0.819 & 8.515 & 17.548 & 54.106 & \textbf{0.476}  & 76.468 & \textbf{0.005}  & 0.142 & 0.233 & 0.535 & 0.021 & 1.027 & \textbf{588.131}  & 2024.048 & 2643.031 & 6478.838 & 14013.301 & 46452.491 \\
200 & 5000 & Dense & 5  & \textbf{0.838}  & 14.882 & 10.696 & 51.242 & 1.505 & 82.618 & \textbf{0.006}  & 0.965 & 0.332 & 2.432 & 0.191 & 3.589 & \textbf{503.523}  & 1679.471 & 1755.036 & 6466.005 & 14882.984 & 44220.589 \\
200 & 5000 & Sparse & 5  & \textbf{0.564}  & 5.720 & 17.985 & 52.833 & 1.438 & 28.486 & \textbf{0.004}  & 0.268 & 0.716 & 2.542 & 0.167 & 1.557 & \textbf{469.414}  & 1583.551 & 1810.660 & 6480.760 & 14912.785 & 45156.719 \\
200 & 5000 & Mixed & 5  & \textbf{0.915}  & 10.379 & 13.680 & 51.268 & 1.290 & 55.573 & \textbf{0.011}  & 0.549 & 0.531 & 2.485 & 0.173 & 2.667 & \textbf{483.432}  & 1628.287 & 1787.464 & 6447.313 & 14864.080 & 43521.559 \\
500 & 100 & - & 0  & -  & -  & -  & -  & -  & -  & 0.001 & \textbf{0.000}  & \textbf{0.000}  & 0.046 & 0.001 & 0.035 & \textbf{38.700}  & 94.066 & 304.918 & 246.704 & 357.170 & 358.181 \\
500 & 100 & Dense & 2  & 14.852 & 71.176 & 36.840 & 76.220 & \textbf{7.096}  & 50.032 & 0.072 & 0.598 & 0.209 & 0.443 & \textbf{0.043}  & 0.298 & \textbf{40.440}  & 88.531 & 270.099 & 256.006 & 638.682 & 396.802 \\
500 & 100 & Sparse & 2  & \textbf{2.493}  & 18.588 & 19.554 & 59.445 & 3.716 & 16.005 & \textbf{0.003}  & 0.122 & 0.117 & 0.340 & 0.043 & 0.110 & \textbf{38.181}  & 83.009 & 253.752 & 255.744 & 636.948 & 398.321 \\
500 & 100 & Mixed & 2  & 11.450 & 49.265 & 31.611 & 74.956 & \textbf{6.349}  & 38.182 & 0.054 & 0.390 & 0.183 & 0.422 & \textbf{0.046}  & 0.230 & \textbf{39.096}  & 85.391 & 260.994 & 254.795 & 641.740 & 396.331 \\
500 & 100 & Dense & 5  & 19.669 & 65.912 & 41.700 & 86.378 & \textbf{8.573}  & 50.159 & \textbf{0.235}  & 1.918 & 0.552 & 1.913 & 0.245 & 0.936 & \textbf{36.228}  & 76.931 & 206.456 & 258.670 & 679.120 & 419.395 \\
500 & 100 & Sparse & 5  & \textbf{2.497}  & 23.396 & 20.207 & 70.891 & 5.215 & 14.485 & \textbf{0.009}  & 0.518 & 0.296 & 1.595 & 0.217 & 0.325 & \textbf{34.096}  & 71.251 & 188.874 & 260.845 & 681.193 & 423.444 \\
500 & 100 & Mixed & 5  & 12.201 & 48.611 & 32.186 & 78.728 & \textbf{7.454}  & 36.218 & \textbf{0.148}  & 1.241 & 0.460 & 1.762 & 0.238 & 0.754 & \textbf{34.799}  & 73.492 & 198.098 & 257.864 & 690.167 & 418.055 \\
500 & 1000 & - & 0  & -  & -  & -  & -  & -  & -  & \textbf{0.000}  & \textbf{0.000}  & \textbf{0.000}  & 0.492 & 0.016 & 0.017 & \textbf{398.208}  & 1285.509 & 3847.094 & 1662.366 & 2824.899 & 2383.389 \\
500 & 1000 & Dense & 2  & 2.771 & 31.180 & 13.583 & 78.957 & \textbf{2.315}  & 74.251 & \textbf{0.006}  & 0.276 & 0.062 & 0.399 & 0.038 & 0.461 & \textbf{369.555}  & 1118.043 & 2773.639 & 1698.756 & 5014.257 & 2678.245 \\
500 & 1000 & Sparse & 2  & 1.666 & 9.834 & 25.831 & 73.875 & \textbf{1.250}  & 8.877 & \textbf{0.001}  & 0.041 & 0.168 & 0.383 & 0.027 & 0.072 & \textbf{345.645}  & 1072.310 & 2701.346 & 1699.812 & 5020.515 & 2720.316 \\
500 & 1000 & Mixed & 2  & 2.211 & 20.015 & 19.778 & 78.736 & \textbf{2.062}  & 45.206 & \textbf{0.003}  & 0.152 & 0.107 & 0.392 & 0.039 & 0.286 & \textbf{352.894}  & 1094.544 & 2712.138 & 1698.072 & 4973.244 & 2699.230 \\
500 & 1000 & Dense & 5  & \textbf{3.369}  & 36.761 & 17.330 & 80.512 & 4.553 & 64.838 & \textbf{0.021}  & 1.004 & 0.200 & 1.751 & 0.221 & 1.272 & \textbf{318.164}  & 964.578 & 1829.702 & 1719.044 & 5337.658 & 2886.657 \\
500 & 1000 & Sparse & 5  & \textbf{1.404}  & 9.821 & 25.099 & 74.378 & 3.735 & 12.836 & \textbf{0.001}  & 0.135 & 0.397 & 1.512 & 0.209 & 0.314 & \textbf{300.161}  & 905.299 & 1768.004 & 1727.178 & 5324.184 & 2912.666 \\
500 & 1000 & Mixed & 5  & \textbf{2.193}  & 24.535 & 19.081 & 74.682 & 3.797 & 47.089 & \textbf{0.009}  & 0.511 & 0.280 & 1.585 & 0.184 & 0.971 & \textbf{308.808}  & 933.998 & 1801.129 & 1717.245 & 5316.684 & 2874.535 \\
500 & 5000 & - & 0  & -  & -  & -  & -  & -  & -  & 0.001 & 0.001 & 0.005 & 2.548 & 0.001 & \textbf{0.000}  & \textbf{2190.188}  & 7671.521 & 21419.785 & 7984.972 & 13603.143 & 80867.898 \\
500 & 5000 & Dense & 2  & 2.521 & 32.488 & 15.256 & 138.242 & \textbf{1.923}  & 276.838 & \textbf{0.004}  & 0.319 & 0.066 & 1.124 & 0.025 & 1.536 & \textbf{1986.577}  & 6730.429 & 14858.876 & 7973.077 & 23835.000 & 73071.458 \\
500 & 5000 & Sparse & 2  & 1.292 & 5.778 & 44.258 & 130.900 & \textbf{0.847}  & 19.421 & \textbf{0.001}  & 0.010 & 0.273 & 1.030 & 0.019 & 0.110 & \textbf{1871.097}  & 6313.414 & 15406.145 & 7977.799 & 23782.197 & 77183.655 \\
500 & 5000 & Mixed & 2  & 2.227 & 18.370 & 26.340 & 142.507 & \textbf{1.243}  & 155.575 & \textbf{0.003}  & 0.132 & 0.153 & 1.089 & 0.019 & 0.917 & \textbf{1934.793}  & 6529.652 & 15159.541 & 7965.174 & 23783.834 & 75432.805 \\
500 & 5000 & Dense & 5  & \textbf{1.884}  & 39.501 & 16.271 & 80.997 & 3.996 & 176.776 & \textbf{0.004}  & 1.178 & 0.185 & 1.091 & 0.221 & 3.332 & \textbf{1720.410}  & 5776.899 & 9325.082 & 7991.455 & 24947.670 & 73536.119 \\
500 & 5000 & Sparse & 5  & \textbf{1.299}  & 5.986 & 37.767 & 78.567 & 3.635 & 22.392 & \textbf{0.002}  & 0.045 & 0.621 & 1.127 & 0.203 & 0.596 & \textbf{1590.057}  & 5281.750 & 9756.378 & 8003.694 & 25304.561 & 71966.096 \\
500 & 5000 & Mixed & 5  & \textbf{2.164}  & 22.478 & 27.356 & 81.147 & 3.183 & 116.785 & \textbf{0.005}  & 0.438 & 0.407 & 1.092 & 0.166 & 2.236 & \textbf{1646.648}  & 5509.990 & 9507.831 & 8003.054 & 24760.199 & 69574.033 \\
\hline \multicolumn{4}{c|}{Average}
 & 3.374  & 21.078 & 19.012 & 69.775 & \textbf{3.020} & 56.182 & \textbf{0.025} & 0.428 & 0.234 & 1.299 & 0.108 & 0.940 \\
\hline \\[-1.8ex]
\end{tabular}
\end{adjustbox}
\end{table}

\subsubsection{Single changepoint detection}
Here we investigate the power of each method when testing for the presence of a single changepoint. It is assumed known that there is at most one changepoint in the simulated data, and thus no multiple changepoint search method like Binary Segmentation or Seeded Binary Segmentation is used for any of the methods. Instead, we have for each method computed the corresponding test statistic for a single changepoint on the whole generated data set $X$, using e.g. $S^v_{\gamma, (0,n]}$ in \eqref{stdef} for ESAC. Our simulations are run with the same setup as in Section \ref{numericsingle}, with the exception of a slightly lower signal strength to avoid $0\%$ testing error. We adjust $\phi$ such that $\Delta\phi^2 = {n}\normm{\theta}_2^2/5 = 1.8^2 r(k)$ for each combination of $n,p$ and $k$. 

Similar to the version of ESAC given in Algorithm \ref{alg:cap3}, the Pilliat method only tests for a changepoint in the midpoint of any seeded interval $(s,e)$. This time saving trick does not affect the theoretical guarantees of neither ESAC nor Pilliat in the multiple changepoint situation because intervals for both methods are generated such that any changepoint will be close to a midpoint of some interval $(s,e)$. In this section, however, we are concerned with testing for a changepoint over a single interval (i.e. $(s,e) = (0,n)$), in which case testing only for a changepoint in the midpoint can lead to great efficiency losses whenever the true changepoint is far from the midpoint. To obtain fair and meaningful power comparisons with the remaining methods in the simulation study, we have modified the test statistic from the Pilliat algorithm to test for a changepoint in all time points $v = 1, \ldots, n-1$.

For any testing procedure, there is a trade-off between Type I and Type II errors. In order to have precise control over the Type I error of each method, we have run the competing methods with empirically chosen penalty parameters. Each method is calibrated to have Type I error at most $1\%$ based on $N=1000$ Monte Carlo simulations. The methods ESAC and Pilliat, unlike the remaining methods, combine several test statistics to test for a changepoint, resulting in a multiple testing situation. For ESAC we have adjusted for the multiple testing by using the empirical penalty function $\widetilde{\gamma}$ as defined in Section \ref{impdetails}. 
Similarly, we have for the Pilliat algorithm chosen thresholds for two of its three constituent tests by Monte Carlo simulating the leading constant in the theoretical thresholds and applied a Bonferroni correction. For the last test statistic used in the Pilliat algorithm (the Berk Jones statistic), we have used the theoretical threshold provided in the paper. 
For Inspect, we have chosen the detection threshold $\xi$ to be the $10$th largest sparse projection $\underset{0 < b <n}{\max} \left(\widehat{v}^{(0, n]}_{\lambda}\right)^{\T} T_{b}^{(0,n]}$, where $\lambda = \left\{\log \left( p \log n\right)/2\right\}^{1/2}$ (see Appendix \ref{appendixB}), over $N=1000$ data sets with no changepoints. For SUBSET we have used the function for choosing the penalty parameter $\beta$ provided by the author, with the remaining penalty parameters at their recommended values, also using the $10$th largest value out of $N=1000$ Monte Carlo samples. For Sparsified Binary Segmentation we have chosen the threshold $\pi_T$ in the same way as in Section \ref{numericsingle}, also using the $10$th largest among $N=1000$ Monte Carlo samples. For the Double CUSUM algorithm we have used the input parameter $\phi =-1$ and chosen the threshold value to be the $10$th largest double CUSUM statistic over $N=1000$ Monte Carlo simulated data sets without any changepoints.

For each method considered and each configuration of parameters, Table \ref{tablesingletest} displays the average detection rate and average running time in milliseconds. For each configuration of parameters, the best value of the detection rate and the running time is indicated in boldface (when there are no changepoints, boldface indicates the detection rate closest to 1\% from below). In terms of statistical power, Table \ref{tablesingletest} demonstrates that ESAC, Pilliat and SUBSET are the only methods with competitive power across all sparsity regimes and combinations of $n$ and $p$. Pilliat has the highest power in seven out of the $24$ different combinations of parameters with a changepoint, while the same number is three for ESAC and one for SUBSET. Averaging over the 24 combinations of parameters, Pilliat and ESAC have the highest over-all power. The Pilliat algorithm has a slight edge over ESAC, and SUBSET in third place. In comparison, Inspect has high detection power only for $k = \lceil p^{1/3} \rceil$, and with performance seemingly deteriorating when $p$ grows. Double CUSUM has excellent power for detecting dense changepoints, but fails to detect sparse changepoints, especially when $k=1$ or $p$ is large. Sparsified Binary Segmentation has high power for sparse changepoints (especially when $k=1$), but fails completely to detect dense changepoints. In terms of running time, ESAC is again the clear winner, with Pilliat as the runner-up. We remark again that SUBSET is the only method not implemented in C or C++, giving the other methods an advantage when comparing running times. We also remark that the running time of the noise level scaling by MAD estimates is not included in the running times of ESAC, Inspect, Pilliat and SUBSET, as the running time of the scaling dominates the running time of ESAC, SUBSET and Pilliat. The running time of the MAD scaling is however included in the running times of the Double CUSUM and Sparsified Binary Segmentation algorithms, as the implementations of these algorithms do not offer an option to disable the MAD scaling.

It is interesting to note that the power of ESAC, Pilliat and SUBSET seems to grow with $n$. This might be due to the Signal-to-Noise ratio of the simulated changepoints being proportional to the detection boundary for multiple changepoints, which grows faster with $n$ than the minimax testing rate for a single changepoint, see \citet{liu_minimax_2021}.
\begin{table}[h] \centering
\caption{Single changepoint detection}
\label{tablesingletest}
\small
\begin{adjustbox}{width=\columnwidth,center}
\begin{tabular}{@{\extracolsep{1pt}} ccccc|cccccc|cccccc}
\hline
\multicolumn{5}{c|}{Parameters} & \multicolumn{6}{c|}{Detection rate} &\multicolumn{6}{c|}{Time in milliseconds} \\ \hline 
$n$ & $p$ & $k$ &  $\eta$ & $\phi$ & \text{ESAC} & \text{Pilliat} & \text{Inspect} & \text{SBS} & \text{SUBSET} & \text{DC} & \text{ESAC}  & \text{Pilliat} & \text{Inspect} &\text{SBS} & \text{SUBSET}& \text{DC} \\
\hline \
200 & 100 & - & - & - & 0.016 & 0.013 & \textbf{0.008} & 0.015 & 0.017 & 0.007  & \textbf{0.2}  & 0.8 & 2.2 & 11.2 & 2.1 & 9.5 \\
200 & 100 & 1 & $n/5$ & 1.12 & 0.849 & 0.815 & 0.232 & \textbf{0.892}  & 0.886 & 0.082 & \textbf{0.1}  & 0.3 & 2.1 & 9.8 & 1.7 & 8.6 \\
200 & 100 & 5 & $n/5$ & 1.60 & 0.952 & \textbf{0.965}  & 0.911 & 0.690 & 0.962 & 0.775 & \textbf{0.1}  & 0.2 & 2.0 & 9.4 & 1.6 & 8.6 \\
200 & 100 & 24 & $n/5$ & 1.52 & 0.696 & 0.749 & 0.666 & 0.075 & 0.567 & \textbf{0.813}  & \textbf{0.1}  & 0.4 & 2.0 & 12.6 & 1.6 & 8.5 \\
200 & 100 & 100 & $n/5$ & 1.52 & 0.675 & 0.740 & 0.550 & 0.038 & 0.543 & \textbf{0.819}  & \textbf{0.1}  & 0.4 & 2.1 & 9.4 & 1.7 & 8.8 \\
200 & 1000 & - & - & - & 0.008 & 0.009 & \textbf{0.010} & 0.006  & \textbf{0.010} & 0.008 & \textbf{1.8}  & 10.6 & 42.8 & 88.2 & 12.5 & 87.5 \\
200 & 1000 & 1 & $n/5$ & 1.22 & 0.809 & 0.796 & 0.003 & \textbf{0.853}  & 0.841 & 0.018 & \textbf{1.1}  & 4.0 & 42.5 & 88.4 & 13.8 & 89.0 \\
200 & 1000 & 10 & $n/5$ & 2.35 & 0.994 & \textbf{0.995}  & 0.498 & 0.649 & 0.987 & 0.299 & \textbf{0.9}  & 2.2 & 42.4 & 88.1 & 12.7 & 90.3 \\
200 & 1000 & 73 & $n/5$ & 2.70 & 0.823 & 0.880 & 0.645 & 0.046 & 0.788 & \textbf{0.900}  & \textbf{1.1}  & 3.3 & 42.4 & 88.0 & 13.0 & 89.6 \\
200 & 1000 & 1000 & $n/5$ & 2.70 & 0.805 & 0.868 & 0.390 & 0.014 & 0.783 & \textbf{0.918}  & \textbf{1.1}  & 3.4 & 42.5 & 87.7 & 13.3 & 89.1 \\
200 & 5000 & - & - & - & 0.007 & \textbf{0.008} & 0.003  & 0.012 & 0.012 & 0.012 & \textbf{11.6}  & 65.4 & 216.7 & 449.8 & 78.0 & 458.9 \\
200 & 5000 & 1 & $n/5$ & 1.28 & 0.782 & 0.775 & 0.000 & \textbf{0.856}  & 0.797 & 0.018 & \textbf{7.1}  & 26.0 & 215.9 & 447.3 & 77.7 & 455.9 \\
200 & 5000 & 18 & $n/5$ & 3.20 & 0.996 & \textbf{1.000}  & 0.011 & 0.698 & 0.997 & 0.182 & \textbf{5.6}  & 17.4 & 215.0 & 446.0 & 78.0 & 454.6 \\
200 & 5000 & 163 & $n/5$ & 4.03 & 0.911 & \textbf{0.925}  & 0.249 & 0.038 & 0.898 & 0.911 & \textbf{6.7}  & 18.5 & 214.9 & 441.5 & 79.1 & 457.1 \\
200 & 5000 & 5000 & $n/5$ & 4.03 & 0.897 & 0.893 & 0.119 & 0.013 & 0.868 & \textbf{0.934}  & \textbf{8.8}  & 20.8 & 214.2 & 440.8 & 78.6 & 457.9 \\
500 & 100 & - & - & - & 0.007 & 0.005  & 0.007 & 0.015 & 0.015 & \textbf{0.008} & \textbf{0.5}  & 2.1 & 5.6 & 16.0 & 4.2 & 16.0 \\
500 & 100 & 1 & $n/5$ & 0.74 & 0.944 & 0.897 & 0.288 & \textbf{0.973}  & 0.968 & 0.069 & \textbf{0.2}  & 0.6 & 5.4 & 16.5 & 3.8 & 15.7 \\
500 & 100 & 5 & $n/5$ & 1.04 & 0.978 & \textbf{0.988}  & 0.973 & 0.882 & 0.987 & 0.821 & \textbf{0.2}  & 0.5 & 5.4 & 16.7 & 3.6 & 15.5 \\
500 & 100 & 25 & $n/5$ & 1.00 & 0.809 & 0.785 & 0.755 & 0.120 & 0.718 & \textbf{0.852}  & \textbf{0.3}  & 0.8 & 5.4 & 15.9 & 3.6 & 15.5 \\
500 & 100 & 100 & $n/5$ & 1.00 & 0.796 & 0.776 & 0.627 & 0.062 & 0.699 & \textbf{0.853}  & \textbf{0.3}  & 0.8 & 5.4 & 16.0 & 3.7 & 15.8 \\
500 & 1000 & - & - & - & 0.004  & 0.004  & 0.005 & 0.014 & \textbf{0.010} & 0.008 & \textbf{4.4}  & 26.4 & 261.9 & 147.0 & 32.8 & 163.2 \\
500 & 1000 & 1 & $n/5$ & 0.80 & 0.924 & 0.876 & 0.000 & \textbf{0.963}  & 0.942 & 0.026 & \textbf{2.5}  & 8.8 & 259.7 & 147.4 & 32.4 & 163.4 \\
500 & 1000 & 10 & $n/5$ & 1.52 & 0.994 & \textbf{0.998}  & 0.551 & 0.860 & 0.994 & 0.357 & \textbf{2.3}  & 5.7 & 259.0 & 148.1 & 32.4 & 164.7 \\
500 & 1000 & 79 & $n/5$ & 1.78 & \textbf{0.936}  & 0.926 & 0.711 & 0.073 & 0.868 & 0.925 & \textbf{2.8}  & 7.6 & 260.5 & 148.6 & 32.5 & 165.2 \\
500 & 1000 & 1000 & $n/5$ & 1.78 & 0.927 & 0.920 & 0.466 & 0.023 & 0.877 & \textbf{0.965}  & \textbf{2.5}  & 8.1 & 260.4 & 149.1 & 32.3 & 165.6 \\
500 & 5000 & - & - & - & 0.006 & 0.002  & \textbf{0.010} & 0.008 & 0.006 & 0.011 & \textbf{25.9}  & 158.2 & 1343.5 & 769.4 & 182.0 & 955.7 \\
500 & 5000 & 1 & $n/5$ & 0.84 & 0.939 & 0.890 & 0.000 & \textbf{0.958}  & 0.947 & 0.017 & \textbf{13.7}  & 46.6 & 1341.9 & 762.7 & 184.3 & 908.5 \\
500 & 5000 & 18 & $n/5$ & 2.06 & \textbf{1.000}  & \textbf{1.000}  & 0.010 & 0.907 & \textbf{1.000}  & 0.204 & \textbf{12.0}  & 28.3 & 1341.4 & 763.1 & 185.4 & 908.1 \\
500 & 5000 & 177 & $n/5$ & 2.66 & \textbf{0.964}  & 0.959 & 0.455 & 0.045 & 0.951 & 0.956 & \textbf{12.8}  & 37.7 & 1342.3 & 762.1 & 184.1 & 908.7 \\
500 & 5000 & 5000 & $n/5$ & 2.66 & 0.953 & 0.956 & 0.243 & 0.009 & 0.956 & \textbf{0.982}  & \textbf{13.0}  & 37.5 & 1342.8 & 757.8 & 179.5 & 907.7 \\
\hline \multicolumn{5}{c|}{Average detection rate}
 & 0.890 & \textbf{0.891}  & 0.390 & 0.447 & 0.868 & 0.571 \\
\hline \\[-1.8ex]
\end{tabular}
\end{adjustbox}
\end{table}

\subsection{Auxiliary Lemmas}\label{appendixE}
\begin{lemma}\label{lemma10}
For any $a\geq 0$, define $\nu_a = \EE\left( Z^2 \ \mid \ |Z|\geq a\right)$ where $Z \sim \N(0,1)$. Then
$$
a^2 + 1 \leq \nu_a \leq a^2+2.
$$
\end{lemma}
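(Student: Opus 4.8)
The plan is to reduce the two-sided bound to the classical Mills-ratio inequalities for the standard normal, which I would then prove from scratch by a short integration argument. Write $\varphi$ for the standard normal density and $\overline{\Phi}(a) = \PP(Z \geq a)$ for its survival function.

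First I would dispose of the boundary case $a = 0$, where $\nu_0 = \EE(Z^2) = 1$ lies in $[1,2] = [a^2+1,\,a^2+2]$. For $a > 0$, by symmetry of $Z$ one has $\nu_a = \EE\big(Z^2 \mathbbm{1}\{|Z| \geq a\}\big)\big/\PP(|Z| \geq a) = \big(\int_a^\infty t^2 \varphi(t)\,dt\big)\big/\overline{\Phi}(a)$. Using $t\varphi(t) = -\varphi'(t)$ and integrating by parts, $\int_a^\infty t\cdot(t\varphi(t))\,dt = \big[-t\varphi(t)\big]_a^\infty + \int_a^\infty \varphi(t)\,dt = a\varphi(a) + \overline{\Phi}(a)$. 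Hence $\nu_a = 1 + a\varphi(a)/\overline{\Phi}(a)$, so the statement is equivalent to the sandwich $a^2\,\overline{\Phi}(a) \leq a\varphi(a) \leq (a^2+1)\,\overline{\Phi}(a)$ for $a > 0$.

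For the left inequality (which yields $\nu_a \geq a^2 + 1$) I would use $\varphi(t) \leq (t/a)\varphi(t)$ on $[a,\infty)$ together with $\int_a^\infty t\varphi(t)\,dt = \varphi(a)$, giving $\overline{\Phi}(a) \leq \varphi(a)/a$. For the right inequality (which yields $\nu_a \leq a^2 + 2$) I would observe that $\tfrac{d}{dt}\big(-\varphi(t)/t\big) = (1 + t^{-2})\varphi(t)$, so $\int_a^\infty (1 + t^{-2})\varphi(t)\,dt = \varphi(a)/a$, whence $\overline{\Phi}(a) = \varphi(a)/a - \int_a^\infty t^{-2}\varphi(t)\,dt \geq \varphi(a)/a - a^{-2}\,\overline{\Phi}(a)$; rearranging gives $(a^2+1)\overline{\Phi}(a) \geq a\varphi(a)$. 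Substituting both inequalities into $\nu_a = 1 + a\varphi(a)/\overline{\Phi}(a)$ completes the proof.

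I do not expect a genuine obstacle here; the only step needing a little care is the right Mills bound, where one must spot the antiderivative of $(1 + t^{-2})\varphi(t)$ (equivalently, push the integration by parts one step further than for the left bound). If one prefers to avoid that, an alternative is to set $f(a) = (a^2+1)\overline{\Phi}(a) - a\varphi(a)$, note $f(a) \to 0$ as $a \to \infty$, and compute $f'(a) = 2a\,\overline{\Phi}(a) - 2\varphi(a) \leq 0$ using the already established left bound, so $f \geq 0$ on $[0,\infty)$.
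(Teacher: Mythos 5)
Your proof is correct, and every step checks out: the reduction $\nu_a = 1 + a\varphi(a)/\overline{\Phi}(a)$ via integration by parts, the left Mills bound $\overline{\Phi}(a) \leq \varphi(a)/a$, and the right bound $(a^2+1)\overline{\Phi}(a) \geq a\varphi(a)$ obtained from the antiderivative of $(1+t^{-2})\varphi(t)$ are all valid, as is the alternative monotonicity argument for the latter. The paper proceeds from the same identity but outsources both key inequalities: the lower bound $\nu_a \geq a^2+1$ is obtained exactly as you do, except that the Mills inequality $\varphi(a)/\overline{\Phi}(a) > a$ is cited from Sampford (1953) rather than proved, and the upper bound $\nu_a \leq a^2 + 2$ is simply quoted from Lemma 4 of \citet{liu_minimax_2021} with no argument at all. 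So the difference is one of self-containedness rather than strategy: the paper's version is two lines at the cost of two external references, while yours makes the two-sided Mills sandwich $a^2\,\overline{\Phi}(a) \leq a\varphi(a) \leq (a^2+1)\,\overline{\Phi}(a)$ explicit and elementary, which in particular supplies a from-scratch proof of the upper bound that the paper never displays. One small point of care you already handle correctly is the boundary case $a=0$, where the ratio $\varphi(a)/a$ is unavailable and one must fall back on $\nu_0 = \EE(Z^2) = 1$.
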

\begin{proof}
The second inequality follows from Lemma 4 in \citet{liu_minimax_2021}. For the first inequality, let $\overline{\Phi}(x)= \int_{x}^{\infty} \phi(t) dt$, where $\phi(\cdot)$ denotes the density function of a standard normal distribution. If $a>0$, we have that
\begin{align}
\nu_a - 1- a^2 &= a \frac{\phi(a)}{\overline{\Phi}(a)} - a^2\\
&=a\left\{ \frac{\phi(a)}{\overline{\Phi}(a)} - a\right\}\\
&\geq 0,
\end{align}
using that ${\phi(a)} / {\overline{\Phi}(a)} > a$ for all $a>0$ \citep[see e.g.][]{sampford_inequalities_1953}. For $a=0$, we have that $\nu_a = \EE(Z^2) = 1$, and the claim follows.
\end{proof}

The following Lemma is due to \citet{liu_minimax_2021}. 
\begin{lemma}[\citealt{liu_minimax_2021}, Lemma 5]\label{lemma6}
Let $Z_i \iid \N(0,1)$ for $i \in [p]$, where $p \in \NN$. Let $a\geq 0$ and define $\nu_a = \EE\left( Z^2 \ \mid \ |Z|\geq a\right)$. Then for all $x>0$, 
$$
\PP \left[ \sum_{i=1}^p (Z_i^2 - \nu_a) \ind{\left( |Z_i| \geq a\right)} \geq 9 \left \{  \left( p e^{-a^2/2}x\right)^{1/2} + x \right\}\right] \leq e^{-x}.
$$
\end{lemma}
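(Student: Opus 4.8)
Since the statement is quoted verbatim as Lemma~5 of \citet{liu_minimax_2021}, the plan is to cite it directly; for completeness I sketch a self-contained argument. Set $Y_i = (Z_i^2 - \nu_a)\ind(|Z_i| \ge a)$ for $i \in [p]$. These are i.i.d., and since $\nu_a = \EE[Z^2\mid |Z|\ge a] = \EE[Z^2\ind(|Z|\ge a)]/\PP(|Z|\ge a)$ by definition, each $Y_i$ has mean zero. The claim is then a one-sided Chernoff bound for $\sum_i Y_i$, and the only real content is a sharp upper bound on the moment generating function $\EE[e^{\lambda Y_1}]$ for $\lambda$ in a fixed interval $[0,\lambda_0]$ with $\lambda_0 < 1/2$.

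The first step is to compute the MGF exactly. Conditioning on $\{|Z_1| \ge a\}$ and using the elementary identity $\int_{|z|\ge a} e^{\lambda z^2}\varphi(z)\,dz = (1-2\lambda)^{-1/2}\,2\overline{\Phi}(a\sqrt{1-2\lambda})$ for $\lambda < 1/2$, where $\varphi$ is the standard normal density and $\overline{\Phi}(x) = \int_x^\infty \varphi$, one obtains
$$
\EE[e^{\lambda Y_1}] = 1 - 2\overline{\Phi}(a) + \frac{e^{-\lambda \nu_a}}{\sqrt{1-2\lambda}}\,2\overline{\Phi}\bigl(a\sqrt{1-2\lambda}\bigr) = 1 - 2\overline{\Phi}(a)\bigl(1 - A(\lambda)\bigr),
$$
with $A(\lambda) := (1-2\lambda)^{-1/2}\,e^{-\lambda\nu_a}\,\overline{\Phi}(a\sqrt{1-2\lambda})/\overline{\Phi}(a)$.

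The crux is to show $A(\lambda) \le 1 + c_1\lambda^2$ on $[0,\lambda_0]$ for a universal constant $c_1$, which gives $\EE[e^{\lambda Y_1}] \le 1 + 2\overline{\Phi}(a)c_1\lambda^2 \le \exp\{2\overline{\Phi}(a)c_1\lambda^2\}$, a bound with effective variance of order $\overline{\Phi}(a) \le \tfrac12 e^{-a^2/2}$. Writing $\log A(\lambda) = -\tfrac12\log(1-2\lambda) - \lambda\nu_a + \int_{a\sqrt{1-2\lambda}}^{a} R(u)\,du$ with $R = \varphi/\overline{\Phi}$ the reciprocal Mills ratio, the exact identity $\nu_a = aR(a)+1$ makes the $a^2$-order terms cancel and forces $A(0) = 1$, $A'(0) = 0$; the residual is then controlled using Lemma~\ref{lemma10} (which, unpacked, says exactly $u < R(u) < u + 1/u$ for all $u>0$) together with the convexity estimate $\tfrac{d^2}{d\lambda^2}[-\lambda-\tfrac12\log(1-2\lambda)] = 2(1-2\lambda)^{-2}$. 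I expect this second-order bookkeeping to be the main obstacle: the MGF bound must be genuinely quadratic in $\lambda$ with \emph{no} additive constant, since a crude use of $2\overline{\Phi}(x)\le e^{-x^2/2}$ alone leaves an $O(e^{-a^2/2})$ term in the exponent that would spoil the clean form of the conclusion.

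Finally, by independence $\EE[\exp\{\lambda\sum_i Y_i\}] \le \exp\{c_1 p\,e^{-a^2/2}\lambda^2\}$ (using $2\overline{\Phi}(a)\le e^{-a^2/2}$), and one runs the standard optimization: with $M := p\,e^{-a^2/2}$, minimising $-\lambda t + c_1 M\lambda^2$ over $\lambda\in[0,\lambda_0]$ yields the two-regime bound $\PP(\sum_i Y_i \ge t) \le \exp\{-c_2\min(t^2/M, t)\}$, and setting the right-hand side equal to $e^{-x}$ and solving for $t$, carrying the explicit constants through from the MGF step, produces the stated threshold $t = 9\{(p e^{-a^2/2}x)^{1/2} + x\}$.
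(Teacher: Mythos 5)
The paper gives no proof of this statement at all: it is imported verbatim from \citet{liu_minimax_2021} (their Lemma 5), so your primary plan of citing it directly is exactly what the paper does. Your supplementary sketch is also essentially sound — the mean-zero check, the exact identity $\EE e^{\lambda Y_1} = 1-2\overline{\Phi}(a)\{1-A(\lambda)\}$, and the first-order cancellation via $\nu_a = 1 + a\varphi(a)/\overline{\Phi}(a)$ are all correct, with the only loose end being the constant bookkeeping needed to land exactly on the factor $9$; for comparison, when the paper proves the companion lower-tail bound (Lemma \ref{lemma5}) it avoids the exact MGF computation and instead uses the elementary pointwise bound on $e^{-y}-1+y$ to get $\EE e^{-\lambda X}\leq 1+6\lambda^2 e^{-a^2/2}$, which is a simpler (if cruder) route to the same sub-gamma tail.
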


The following Lemma is analogous to Lemma \ref{lemma6}, and gives a corresponding lower bound.
\begin{lemma}\label{lemma5}
Let $Z_i \iid \N(0,1)$ for $i \in [p]$, where $p \in \NN$. Let $a\geq 1$ and define $\nu_a = \EE\left( Z_1^2 \ \mid \ |Z_1|\geq a\right)$. Then for all $x>0$, 
$$
\PP \left[  \sum_{i=1}^p (Z_i^2 - \nu_a) \ind{\left( |Z_i| \geq a\right)} \leq - 5 \left \{  \left( p e^{-a^2/2}x\right)^{1/2} + x \right\}\right) \leq e^{-x}.
$$
\end{lemma}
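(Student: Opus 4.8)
The plan is to mirror the proof of Lemma \ref{lemma6} (i.e., Lemma 5 of \citet{liu_minimax_2021}), but now establishing a left-tail bound. First I would center the summands: write $Y_i = (Z_i^2 - \nu_a)\ind{\left(|Z_i| \geq a\right)}$, and observe that by the definition $\nu_a = \EE(Z_1^2 \mid |Z_1| \geq a)$ we have $\EE Y_i = \EE\left[(Z_i^2 - \nu_a)\ind{\left(|Z_i|\geq a\right)}\right] = 0$, since $\EE[Z_i^2\ind{\left(|Z_i|\geq a\right)}] = \nu_a\,\PP(|Z_i|\geq a)$ and $\EE[\nu_a \ind{\left(|Z_i|\geq a\right)}] = \nu_a\,\PP(|Z_i|\geq a)$. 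So $\sum_i Y_i$ is a sum of i.i.d.\ mean-zero random variables, and a Bernstein-type inequality applied to $-\sum_i Y_i$ will give the desired lower tail. The factor $e^{-a^2/2}$ should enter through $\PP(|Z_1|\geq a) \leq e^{-a^2/2}$ (a standard Gaussian tail bound valid for $a \geq 1$, or even $a\geq 0$), which controls the probability that each indicator is nonzero and hence the variance of each $Y_i$.

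The key quantities to control are $\Var(Y_1)$ and the one-sided moment growth (or an exponential moment) of $-Y_1$. For the variance, since $|Y_1| = |Z_1^2 - \nu_a|\ind{\left(|Z_1|\geq a\right)}$, one bounds $\Var(Y_1) \leq \EE[Y_1^2] = \EE\left[(Z_1^2-\nu_a)^2 \ind{\left(|Z_1|\geq a\right)}\right]$. Using $\nu_a \leq a^2 + 2$ from Lemma \ref{lemma10} and standard Gaussian tail estimates for $\EE[Z_1^4 \ind{\left(|Z_1|\geq a\right)}]$, this is of order $e^{-a^2/2}\cdot \mathrm{poly}(a)$, and the polynomial factor can be absorbed since $a^2 e^{-a^2/2}$ is bounded by a constant multiple of $e^{-a^2/4}$, or one keeps it as is and it still yields the stated form. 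For the lower tail, the crucial simplification is that $-Y_1 = (\nu_a - Z_1^2)\ind{\left(|Z_1|\geq a\right)} \leq \nu_a \ind{\left(|Z_1|\geq a\right)} \leq \nu_a$ is bounded above deterministically by $\nu_a \leq a^2+2$ — so the lower tail of $\sum Y_i$ is the upper tail of a sum of i.i.d.\ \emph{bounded} mean-zero variables, for which the (one-sided) Bernstein inequality gives, for all $x > 0$,
\[
\PP\left(-\sum_{i=1}^p Y_i \geq \sqrt{2\sigma_p^2 x} + \tfrac{1}{3}(a^2+2)x\right) \leq e^{-x},
\]
where $\sigma_p^2 = p\,\Var(Y_1) \leq C p\, e^{-a^2/2}$. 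Substituting the variance bound and coalescing the polynomial-in-$a$ factors into the leading constant $5$ (which is why the constant here is $5$ rather than $9$ — the boundedness gives a cleaner sub-exponential regime than the one-sided heavy-tailed estimate needed for the upper tail in Lemma \ref{lemma6}) yields exactly the claimed inequality.

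The main obstacle I anticipate is bookkeeping the constants so that everything fits under the single universal constant $5$, in particular verifying that $\sqrt{2 C p e^{-a^2/2} x} \leq 5(p e^{-a^2/2} x)^{1/2}$ and that the Bernstein linear term $\tfrac{1}{3}(a^2+2)x$ is dominated by $5x$ after accounting for the $e^{-a^2/2}$ weighting — i.e.\ one needs a bound of the shape $(a^2+2) e^{-a^2/2} \leq c$ for the linear term to collapse into the $x$ term with constant $5$, which holds for $a \geq 1$ (indeed $(a^2+2)e^{-a^2/2}$ is maximized near $a = 1$ with a value well below $5$). A secondary technical point is choosing the right concentration inequality: rather than invoke a named Bernstein bound, it may be cleanest to bound the moment generating function of $-Y_1$ directly via $\EE e^{\lambda(-Y_1)} \leq 1 + \tfrac{\lambda^2}{2}\Var(Y_1) e^{\lambda(a^2+2)}$ (using $-Y_1 \leq a^2+2$), optimize over $\lambda$, and thereby reproduce the stated tail — this keeps the argument self-contained and makes the constant $5$ transparent. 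I would present the proof in that self-contained MGF form.
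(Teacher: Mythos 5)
Your overall strategy (center the summands, exploit that the negative part is bounded, and run a one-sided Bernstein/MGF argument in which the Gaussian tail $\PP(|Z_1|\geq a)\lesssim e^{-a^2/2}$ feeds the variance term) is the same in spirit as the paper's proof, which also bounds $\EE e^{-\lambda Y_1}$ directly and then applies a Chernoff bound. However, there is a genuine gap in the step where you control the linear (sub-exponential) part. You bound the negative excursion by $-Y_1=(\nu_a-Z_1^2)\ind\{|Z_1|\geq a\}\leq \nu_a\leq a^2+2$, and then claim the resulting Bernstein linear term $\tfrac{1}{3}(a^2+2)x$ is dominated by $5x$ "after accounting for the $e^{-a^2/2}$ weighting", invoking $(a^2+2)e^{-a^2/2}\leq c$. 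This is not correct: in Bernstein's inequality the linear term carries the almost-sure bound $M$ on the summands with no exponential weighting, so with $M=a^2+2$ the deviation term is of order $a^2x$, which exceeds $5x$ as soon as $a$ is moderately large. The same defect appears in your self-contained MGF variant $\EE e^{-\lambda Y_1}\leq 1+\tfrac{\lambda^2}{2}\Var(Y_1)e^{\lambda(a^2+2)}$: to keep $e^{\lambda(a^2+2)}$ harmless you must take $\lambda\lesssim (a^2+2)^{-1}$, which inflates the linear regime to order $a^2x$; whereas taking $\lambda$ of constant order lets $e^{\lambda(a^2+2)}$ eat the $e^{-a^2/2}$ factor in $\Var(Y_1)$, destroying exactly the $\left(pe^{-a^2/2}x\right)^{1/2}$ scaling the lemma is asserting.

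The fix is the sharper deterministic bound that the paper uses: on the event $\{|Z_1|\geq a\}$ one has $Z_1^2\geq a^2$, hence $-Y_1\leq \nu_a-a^2\leq 2$ by Lemma \ref{lemma10} (and $-Y_1=0$ off the event), so the summands are bounded below by $-2$, uniformly in $a$. With $M=2$ your Bernstein argument goes through verbatim: the linear term is $O(x)$ with a small universal constant, and the variance bound $\Var(Y_1)\leq C e^{-a^2/2}$ with $C$ around $6$--$7$ (which requires the explicit computation, since the stated form needs $\sqrt{2C}\leq 5$; "keeping a poly$(a)$ factor" would not yield the stated form) gives the $\left(pe^{-a^2/2}x\right)^{1/2}$ term. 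This is essentially the paper's proof: it bounds $\EE(e^{-\lambda X})$ for $\lambda\in(0,1/2]$ using $e^{-y}-1+y\leq y^2$ for $y\geq -1$, observes that the event $\{X<-1/\lambda\}$ is empty precisely because $X\geq a^2-\nu_a\geq -2$, obtains $\EE(e^{-\lambda X})\leq \exp\left(6\lambda^2e^{-a^2/2}\right)$, and finishes with a Chernoff bound and the choice $t=5\left\{\left(pe^{-a^2/2}x\right)^{1/2}+x\right\}$.
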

\begin{proof}
The proof is similar to the proof of Lemma 5 in \citet{liu_minimax_2021}. Let $X = \left( Z^2 - \nu_a \right) \ind{\left( |Z| \geq a\right)}$, where $Z \sim \N(0,1)$. Let $ \lambda \in (0,\frac{1}{2}]$. Then, as $\EE(X) = 0$, we have that
\begin{align}
\EE \left( e^{-\lambda X} \right) &= 1 + \EE \left( e^{-\lambda X} - 1 + \lambda X\right),
\end{align}
By the deterministic bound
$$
e^{-y} -1 + y \leq \begin{cases} 
y^2, & \text{ if } y>0,\\
y^2, & \text{ if } -1 \leq y \leq 0,\\
e^{-y},& \text{ if } y \leq -1,
\end{cases}
$$
we obtain that
\begin{align}
\EE \left(e^{- \lambda X}\right) \leq & 1+ \lambda ^2 \EE \left \{  X^2 \ind{\left( X>0 \right)}   \right\} + \lambda ^2 \EE \left \{ X^2 \ind{\left( - \frac{1}{\lambda} \leq X \leq 0 \right)}   \right\} \notag \\
+ & \EE \left\{ e^{-\lambda X} \ind {\left(  X < -\frac{1}{\lambda}\right)}\right\}.
\end{align}
We bound each term separately. Let $p(x)$ denote the density function of the $\chi_1^2$ distribution. For the second term, we have that
\begin{align}
\EE \left \{  X^2 \ind {\left( X>0 \right)}   \right\} &= \int_{\nu_a}^{\infty} (x-\nu_a)^2 p(x) dx\\
&= \int_{\nu_a}^{\infty} (x-\nu_a)^2 \frac{1}{(2\pi x)^{1/2}} e^{-x/2} dx\\
&\leq \frac{16}{(2\pi \nu_a)^{1/2}} e^{-\nu_a/2},
\end{align}
using that $1+ a^2 \leq \nu_a \leq a^2+2$ (Lemma \ref{lemma10}) and $a \geq 1$. For the third term, using that $X \geq a^2 - \nu_a \geq -2$ whenever $X\leq 0$, we have that
\begin{align}
\EE \left \{  X^2 \ind {\left( - \frac{1}{\lambda} \leq X \leq 0 \right)}   \right\} &\leq \EE \left \{ 2^2 \ind {\left( - \frac{1}{\lambda} \leq X \leq 0 \right)}   \right\}\\
&\leq 4 \EE \left \{ \ind {\left( |Z| \geq a\right)}   \right\}\\
&\leq 8  e^{-a^2/2} / (2\pi a^2)^{1/2}\\
&\leq 8 e^{-a^2/2} / (2\pi)^{1/2},
\end{align}
where we in the penultimate step used the standard bound $\PP(Z >a) \leq e^{-a^2/2} /(2\pi a^2)^{1/2}$ for all $a>0$.
For the last term, as $\lambda \leq 1/2$, we have that $\PP(X < -\frac{1}{\lambda}) \leq \PP(X <-2) = 0$, because $X \geq a^2 - \nu_a \geq -2$. Therefore, $\EE \left\{ e^{-\lambda X} \ind {\left(  X < - 1/\lambda \right)}\right\}=0$. Hence, 
\begin{align}
\EE \left( e^{- \lambda X} \right) &\leq 1 + \lambda^2 \left\{ \frac{8}{(2\pi)^{1/2}} + \frac{16 e^{-\frac{1}{2}}}{2 \surd{\pi}}\right\} e^{-a^2/2}\\
&\leq 1 + 6 \lambda^2  e^{-a^2/2}\\
&\leq \exp \left( 6 \lambda^2  e^{-a^2/2}\right).
\end{align}
By a Chernoff Bound we obtain that, for any $t>0$, 
\begin{align}
\PP \left\{  \sum_{i=1}^n (Z_i^2 - \nu_a) \ind {\left( |Z_i| \geq a\right)} < - t \right\} &= \PP \left\{ - \sum_{i=1}^n (Z_i^2 - \nu_a) \ind {\left( |Z_i| \geq a\right)}> t \right\}\\
&\leq \underset{0<\lambda \leq \frac{1}{2}}{\inf} e^{-\lambda t} \left \{ \EE \left( e^{-\lambda X} \right) \right\}^p\\
&\leq  \underset{0<\lambda \leq \frac{1}{2}}{\inf} \exp\left( -\lambda t + 6 \lambda^2 p e^{-a^2/2}\right)\\
&\leq \exp\left\{ - \left(   \frac{t^2 e^{a^2/2}}{24 p} \wedge \frac{t}{4}  \right)\right\}. 
\end{align}
Now take $t = 5 \left\{ \left( pe^{-a^2/2} x\right)^{1/2} + x \right\}$ to obtain the result.
\end{proof}

The following Lemma is due to \citet{birge_alternative_2001}.
\begin{lemma}[\citealt{birge_alternative_2001}, Lemma 8.1]\label{lemma12}
Let $Y \sim \chi_p^2(\Psi)$ have a non-central Chi Square distribution with $p$ degrees of freedom and non-centrality parameter $\Psi\geq0$. Then, for any $x>0$, we have that
\begin{align}
\PP\left[     Y\geq p + \Psi  + 2 \left\{x(p + 2\Psi)\right\}^{1/2} +2x      \right] &\leq e^{-x},
\intertext{and,}
\PP\left[     Y\leq p + \Psi  - 2\left\{x(p + 2\Psi)\right\}^{1/2}       \right] &\leq e^{-x},
\end{align}
\end{lemma}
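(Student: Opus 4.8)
The plan is to prove both inequalities via the Cram\'er--Chernoff method applied to the moment generating function of $Y$, which is available in closed form. First I would write $Y \eqd \sum_{i=1}^p (Z_i + \mu_i)^2$ with $Z_1, \ldots, Z_p \iid \N(0,1)$ and $\sum_{i=1}^p \mu_i^2 = \Psi$, and note the log-moment-generating-function identity: for every $t < 1/2$,
\begin{align}
\log \EE\left[ e^{tY} \right] = -\frac{p}{2}\log(1-2t) + \frac{t\Psi}{1-2t}.
\end{align}
Throughout it is convenient to set $v := p + 2\Psi$, so that $2v$ is the variance of $Y$.

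For the upper tail I would centre and use two elementary scalar inequalities valid for $0 \le t < 1/2$, namely $-\tfrac12\log(1-2t) - t \le \tfrac{t^2}{1-2t}$ (immediate from $-\log(1-s)-s \le \tfrac{s^2}{2(1-s)}$) and $\tfrac{t\Psi}{1-2t} - t\Psi = \tfrac{2\Psi t^2}{1-2t}$, to obtain
\begin{align}
\log\EE\left[ e^{t(Y - p - \Psi)} \right] \le \frac{(p+2\Psi)\,t^2}{1-2t} = \frac{v t^2}{1-2t}.
\end{align}
A Chernoff bound then gives $\PP(Y \ge p + \Psi + u) \le \exp(-\psi(u))$ for every $u \ge 0$, with $\psi(u) := \sup_{0\le t < 1/2}\{\, tu - v t^2/(1-2t)\,\}$. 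It remains to evaluate this supremum: the first-order condition in $t$ reads $u(1-2t)^2 = 2vt(1-t)$, so the maximiser satisfies $1-2t = \sqrt{v/(v+2u)}$, and substituting back gives the closed form $\psi(u) = \tfrac14\big(\sqrt{v+2u} - \sqrt v\big)^2$. Solving $\psi(u) = x$ for $u$ yields exactly $u = 2\sqrt{vx} + 2x = 2\{x(p+2\Psi)\}^{1/2} + 2x$, which is the first stated inequality.

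For the lower tail I would repeat the argument with $t = -s$ for $s > 0$. The essential simplification is that $1/(1+2s) \le 1$, so no Bernstein-type denominator survives: from $2s - \log(1+2s) \le 2s^2$ and $2\Psi s^2/(1+2s) \le 2\Psi s^2$ one obtains the sub-Gaussian bound $\log\EE[e^{-s(Y-p-\Psi)}] \le v s^2$, and optimising $s$ in the Chernoff bound gives $\PP(Y \le p + \Psi - u) \le \exp(-u^2/(4v))$. Taking $u = 2\sqrt{vx} = 2\{x(p+2\Psi)\}^{1/2}$ yields the second inequality.

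\textbf{Main obstacle.} The only genuinely fiddly step is the upper-tail Legendre-transform computation: since the log-MGF blows up as $t \uparrow 1/2$, one cannot use the clean sub-Gaussian bound that suffices for the lower tail, and one must carry out the Bernstein-type optimisation carefully --- in particular checking that the maximiser lies in $[0,1/2)$ and that the algebraic simplification of $\psi$ reproduces the constants with nothing lost. The MGF identity, the two scalar inequalities, and the entire lower-tail argument are routine.
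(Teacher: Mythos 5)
Your proof is correct, and it is essentially the standard argument behind the cited result: the paper itself gives no proof of this lemma, deferring entirely to Birg\'e (2001, Lemma 8.1), whose proof (like Laurent--Massart's) rests on exactly the Cram\'er--Chernoff bound with $\log\EE\,e^{t(Y-p-\Psi)}\le (p+2\Psi)t^2/(1-2t)$ that you derive, followed by the same Legendre-transform computation for the upper tail and the sub-Gaussian bound for the lower tail. Nothing further is needed.
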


\begin{lemma}\label{lemma15}
Consider the model from Section \ref{secproblemdesc}, with one and only one changepoint $\eta$, and suppose $n\geq 3$ and $\sigma = 1$. Let $\mathcal{K} = \left\{   i \ : \ \mu_{i, \eta+1} - \mu_{i, \eta} \neq 0 \right\}$ denote the set of coordinates for which there is a change in mean, let $r(t)$ be defined as in \eqref{rdef},  and let $h(t)$ be defined as in \eqref{hdef}. Let the CUSUM transformation $T^v_{(s,e]}(\cdot)$ be defined as in \eqref{c2def}, and for ease of notation, let $T^v(\cdot) =T^v_{(0,n]}(\cdot)$.  Let $k = \normm{\mu_{\eta+1} - \mu_{\eta}}_0$, and define $\beta_v = \sum_{i \in \mathcal{K}} \left\{ T^{\eta}(\mu\roww{i})^2 - T^v (\mu\roww{i})^2 \right\}$. Define the events
\begin{align}
\mathcal{E}_1 &= \left\{    \forall 0<v < n, \ \sum_{i \in \mathcal{K}} \left\{ C^{\eta}(i)^2 - C^{v}(i)^2 \right\} \geq \beta_v   - 2\left( 2\beta_v \log n \right)^{1/2} - 16 r(k) \right\},\\
  \mathcal{E}_2 &= \left\{   \forall 0<v<n, \forall t \in \mathcal{T},  \sum_{i \in [p] \setminus \mathcal{K}}  \left\{ {C}^{v} (i)^2  - \nu_{a(t)} \right \} \ind {\left\{\left|{C}^{v}(i)\right| > a(t) \right\}} \leq  63 r(t) \right\}\\
 \mathcal{E}_3 &= \left\{ \forall 0<v<n, \forall t \in \mathcal{T},  \sum_{i \in [p] \setminus \mathcal{K}} \left\{ {C}^{v} (i)^2  - \nu_{a({t})} \right \} \ind {\left\{\left|{C}^{v}(i)\right| > a({t}) \right\}}  \geq - 35 r({t}) \right\}\\
\mathcal{E}_4 &= \left\{   \forall 0<v<n, \forall t  \in \mathcal{T}, t < (p \log n)^{1/2} \ ; \right. \notag\\
 & \quad \quad  \sum_{i =1 }^p \left[  \left\{ {C}^{v} (i)^2  - \nu_{a(t)} \right \} \ind {\left\{\left|{C}^{v}(i)\right| > a(t) \right\}} - C^v(i)^2  +1 \right] \leq \left.   \vphantom{\sum_{i \in \mathcal{K}^c } \left( T^{\eta} (X\roww{i})^2  - \nu_{c(k)} \right ) \ind_{\left\{\left|T^{\eta}(X\roww{i})\right| > c(k) \right\}} - \left( T^{v} (X\roww{i})^2  - \nu_{a(t)} \right ) \ind_{\left\{\left|T^{v}(X\roww{i})\right| > a(t) \right\}}} 5 h(p) +63r(t) \right\}
\end{align}
Then $\PP \left( \mathcal{E}_1 \cap \mathcal{E}_2 \cap \mathcal{E}_3 \cap \mathcal{E}_4 \right) \geq 1-\frac{1}{n}$.
\end{lemma}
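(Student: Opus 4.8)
The plan is to show $\PP(\mathcal{E}_j\C)\le 1/(4n)$ for each $j\in\{1,2,3,4\}$ and finish by a union bound. Throughout I would write $C^v(i)=T^v(\mu\roww{i})+T^v(W\roww{i})$; since $T^v$ is the inner product with a unit vector $\psi^v\in\RR^n$, the noise parts $T^v(W\roww{i})$ are standard normal, independent across $i$, so for fixed $v$ the variables $\{C^v(i):i\notin\mathcal{K}\}$ are i.i.d.\ $\N(0,1)$. The events $\mathcal{E}_2$ and $\mathcal{E}_3$ are then immediate: for fixed $v$ and $t$ with $a(t)>0$, $\sum_{i\notin\mathcal{K}}\{C^v(i)^2-\nu_{a(t)}\}\ind\{|C^v(i)|\ge a(t)\}$ is exactly the statistic bounded by Lemmas~\ref{lemma6} and~\ref{lemma5}; applying these with deviation parameter $x\asymp r(t)$ and using that $a^2(t)=4\log(ep\log n/t^2)$ gives $p\,e^{-a^2(t)/2}=t^4/(e^2 p\log^2 n)$, one checks $9\{(p\,e^{-a^2(t)/2}x)^{1/2}+x\}\le 63\,r(t)$ and $5\{(p\,e^{-a^2(t)/2}x)^{1/2}+x\}\le 35\,r(t)$, while $e^{-x}\le e^{-cr(t)}\le e^{-c\log n}$ summed over the $\le (n-1)|\mathcal{T}|$ pairs $(v,t)$ is $\le 1/(4n)$ for $c$ large, using $|\mathcal{T}|=O(\log(p\log n))$. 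The dense candidate $t=p$ (where $a(t)=0$) is handled the same way via Lemma~\ref{lemma12} applied to the centred $\chi^2_{p-|\mathcal{K}|}$, since there $r(p)=(p\log n)^{1/2}$.

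For $\mathcal{E}_1$ the key algebraic fact is that, for a single changepoint, $T^v(\mu\roww{i})=c_v\theta_i$ with $c_v$ a scalar satisfying $c_v=\rho\, c_\eta$, where $\rho=\langle\psi^\eta,\psi^v\rangle$; hence $\Psi_v:=\sum_{i\in\mathcal{K}}T^v(\mu\roww{i})^2=\rho^2\Psi_\eta$ and $\beta_v=(1-\rho^2)\Psi_\eta$. Decomposing the noise CUSUM at $v$ as $\rho$ times that at $\eta$ plus an independent piece, $C^v=\rho\,C^\eta+(1-\rho^2)^{1/2}\widetilde H$ (viewing $(C^v(i))_{i\in\mathcal{K}}$ and $(C^\eta(i))_{i\in\mathcal{K}}$ as vectors in $\RR^{|\mathcal{K}|}$, with $\widetilde H\sim\N(0,I_{|\mathcal{K}|})$ independent of $C^\eta$), one obtains
\begin{align*}
\sum_{i\in\mathcal{K}}\{C^\eta(i)^2-C^v(i)^2\}=(1-\rho^2)\big(\normm{C^\eta}_2^2-\normm{\widetilde H}_2^2\big)-2\rho(1-\rho^2)^{1/2}\langle C^\eta,\widetilde H\rangle,
\end{align*}
with $\normm{C^\eta}_2^2\sim\chi^2_{|\mathcal{K}|}(\Psi_\eta)$, $\normm{\widetilde H}_2^2\sim\chi^2_{|\mathcal{K}|}$ and $\langle C^\eta,\widetilde H\rangle\mid C^\eta\sim\N(0,\normm{C^\eta}_2^2)$. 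I would bound $\normm{C^\eta}_2^2$ and $\normm{\widetilde H}_2^2$ by Lemma~\ref{lemma12} and $\langle C^\eta,\widetilde H\rangle$ by a Gaussian tail, all at level $x\asymp\log n$; the prefactor $(1-\rho^2)\le 1$ converts every term $(x\Psi_\eta)^{1/2}$ into the harmless $(x(1-\rho^2)\Psi_\eta)^{1/2}=(x\beta_v)^{1/2}$, while $(1-\rho^2)\Psi_\eta=\beta_v$ reproduces the leading term. This gives $\sum_{i\in\mathcal{K}}\{C^\eta(i)^2-C^v(i)^2\}\ge\beta_v-O((\beta_v\log n)^{1/2})-O((k\log n)^{1/2}+\log n)$; since $r(k)\ge\log n$ and $r(k)\ge k$ on the sparse range, the second error is absorbed into $16\,r(k)$, and the union over the $n-1$ values of $v$ costs only a constant factor in $x$.

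For $\mathcal{E}_4$ I would use the pointwise identity $\{x^2-\nu_a\}\ind\{|x|>a\}-x^2+1=(1-x^2)+(x^2-\nu_a)\ind\{|x|>a\}$ and split over $\mathcal{K}\C$ and $\mathcal{K}$. The $\mathcal{K}\C$ part is controlled cleanly: the thresholded piece by Lemma~\ref{lemma6} (bound $\lesssim(p\log n)^{1/2}$, as above), and $\sum_{i\notin\mathcal{K}}(1-C^v(i)^2)$ by Lemma~\ref{lemma12} (its upper tail is the lower tail of a $\chi^2_{p-|\mathcal{K}|}$, also $\lesssim(p\log n)^{1/2}$), producing the $5h(p)$ term. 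For the $\mathcal{K}$ part the crux is the analytic inequality $\EE_{Y\sim\N(b,1)}\big[\{Y^2-\nu_a\}\ind\{|Y|>a\}-Y^2+1\big]\le 0$ for all $b\in\RR$, $a\ge 0$, which rests on the defining property $\nu_a=\EE[Z^2\mid|Z|\ge a]$ (the second derivative in $b$ at $b=0$ equals $\PP(|Z|>a)\,\Var[Z^2\mid|Z|>a]-2\le 0$, and one extends this to all $b$); hence the signal coordinates contribute non-positively in mean, and it remains to bound their fluctuation. For this I would again use the two-part identity: $\sum_{i\in\mathcal{K}}(1-C^v(i)^2)=|\mathcal{K}|-\chi^2_{|\mathcal{K}|}(\Psi_v)$ is controlled by Lemma~\ref{lemma12}, the point being that $-\Psi_v+2\{x(|\mathcal{K}|+2\Psi_v)\}^{1/2}$ is bounded in $\Psi_v\ge 0$ by $O((|\mathcal{K}|x)^{1/2}+x)$ so the unbounded $\Psi_v$ never surfaces, while the thresholded piece is handled by a one-sided Bernstein bound exploiting that its summands are bounded above by $1$.

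I expect $\mathcal{E}_4$ to be the main obstacle. The delicate point is the $\mathcal{K}$-coordinate term: it requires simultaneously the sign inequality for $\nu_a$, the cancellation that removes the unbounded $\Psi_v$, and enough care that the sub-exponential tails of the thresholded summands (which carry the scale $a^2(t)$) do not inflate the bound past $63\,r(t)$. Beyond that, the remaining effort for all four events is the routine but lengthy bookkeeping needed to make the explicit constants $16$, $63$, $35$, $5$ (and the precise coefficient $2$ in the $(\beta_v\log n)^{1/2}$ term of $\mathcal{E}_1$) come out correctly from the deviation parameters.
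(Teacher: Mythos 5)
Your plan for $\mathcal{E}_1$--$\mathcal{E}_3$ follows the same route as the paper (Lemmas \ref{lemma6}, \ref{lemma5}, \ref{lemma12} plus union bounds), but your treatment of $\mathcal{E}_4$ — which you correctly identify as the main obstacle — has a genuine gap. The paper handles the signal coordinates in one stroke via Lemma \ref{lemma16} together with Theorem 1.A.3 of Shaked and Shanthikumar: the per-coordinate quantity $\{Y^2-\nu_a\}\ind\{|Y|>a\}-Y^2+1$ is stochastically \emph{decreasing} in the mean's magnitude, so the full sum over all $p$ coordinates is stochastically dominated by the pure-noise version, and $\mathcal{E}_4$ reduces to the computations already done for $\mathcal{E}_2$ plus a central $\chi^2_p$ lower tail giving $5h(p)$. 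Your substitute — the per-coordinate mean inequality $\EE[A_i-B_i]\le 0$ plus separate fluctuation bounds on the two pieces over $\mathcal{K}$ — does not work as sketched. First, the thresholded piece $\sum_{i\in\mathcal{K}}\{C^v(i)^2-\nu_{a(t)}\}\ind\{|C^v(i)|>a(t)\}$ has summands that are \emph{not} bounded above by $1$ (only the combined summand $A_i-B_i$ is), and its mean is of order $\Psi_v$, so once you have spent the $-\Psi_v$ cancellation inside the $\chi^2$ piece (via the maximization over $\Psi_v\ge0$) there is nothing left to offset it. Second, even if you keep the bounded combined summands $A_i-B_i\in[1-\nu_{a(t)},1]$ together, an argument that uses only ``mean $\le 0$'' plus a configuration-free concentration bound cannot certify the budget $5h(p)+63r(t)$: take $k\asymp p$ signal coordinates each of magnitude comparable to $a(t)$ with $t$ small; then the sum has standard deviation of order $a(t)p^{1/2}$, so any such bound is at best of order $a(t)(p\log n)^{1/2}\asymp (p\log n\,\log(ep\log n))^{1/2}$, exceeding $5h(p)+63r(t)$ by a $\sqrt{\log}$ factor. (The event still holds there because the mean is strongly negative, roughly $-pa^2(t)$; exploiting that requires tracking the mean as a function of the signal, which is exactly what the stochastic-domination device makes unnecessary.) Your proposed proof of the mean inequality itself (second derivative at $b=0$, ``extend to all $b$'') is also only a heuristic; the monotonicity statement of Lemma \ref{lemma16} is what actually delivers it, and more.

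Two smaller points, which are repairable within your framework but are not mere bookkeeping if the stated constants are to come out. In $\mathcal{E}_2$/$\mathcal{E}_3$, taking the deviation parameter $x\asymp r(t)$ does not give $\{pe^{-a^2(t)/2}x\}^{1/2}\lesssim r(t)$ near $t\approx(p\log n)^{1/2}$ when $p\gg\log n$ (there $pe^{-a^2(t)/2}\asymp p$ while $r(t)\asymp(p\log n)^{1/2}$); the paper uses $x_t\asymp\min\{p\log^2 n/t^2,\,r(t)\}$, and anything of order $\log n$ also works. In $\mathcal{E}_1$, your decomposition $C^v=\rho C^\eta+(1-\rho^2)^{1/2}\widetilde H$ produces two separate fluctuation terms at the $(\beta_v\log n)^{1/2}$ scale (one from the noncentral $\chi^2$ lower tail, one from $\langle C^\eta,\widetilde H\rangle$), each already of size about $4(\beta_v\log n)^{1/2}$ once $x\gtrsim 2\log n$ is forced by the union over $v$; their sum cannot be brought down to the stated $2(2\beta_v\log n)^{1/2}$, and the excess cannot be absorbed into $16r(k)$ without degrading the leading $\beta_v$ term. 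The paper avoids this by collapsing the entire mean--noise interaction into a single exact $\N(0,\beta_v)$ variable (via $T^v(y)=\langle y,\Psi^v\rangle$ and Lemma \ref{lemma4NOT}), which is the step you would need to import to recover the constants as stated.
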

\begin{proof}

By a union bound it suffices to consider each event separately.

\textbf{Step 1}.
We first show that $\PP\left(\mathcal{E}_1^c \right) \leq \frac{1}{3n}$. As the CUSUM is a linear operation and $X = \mu + W$, we have for any $0<v<n$ and $i \in [p]$ that $C^v(i) = T^v(\mu\roww{i})  + T^v(W\roww{i})$. Hence, for any $v$, we have that 
\begin{align}
\sum_{i \in \mathcal{K}} \big\{{C}^{\eta}(i)^2 - {C}^v (i)^2\big\} &= 
\beta_v + \sum_{i \in \mathcal{K}} \big\{T^{\eta}(W\roww{i})^2 - T^v (W\roww{i})^2\big\}\\
&+ 2 \sum_{i \in \mathcal{K}} \big\{T^{\eta}(W\roww{i})T^{\eta}(\mu\roww{i}) - T^{v}(W\roww{i})T^{v}(\mu\roww{i})\big\}. 
\end{align}
We construct high-probability bounds on the two last terms separately. For the first term, note that since $W_{i,j}\iid \N(0,1)$, for any fixed $v$ we have $T^v(W\roww{i}) {\sim} \N(0,1)$ independently for all $i \in [p]$. Hence $\sum_{i \in \mathcal{K}} T^{\eta}(W\roww{i})^2 \sim \chi^2_k$ and $\sum_{i \in \mathcal{K}} T^{v}(W\roww{i})^2 \sim \chi^2_k$. By Lemma \ref{lemma12} and a union bound we therefore have that
\begin{align}
\PP  \left[ \sum_{i \in \mathcal{K}}  \left\{ T^{\eta}(W\roww{i})^2 - T^v (W\roww{i})^2  \right\} \leq - 4\{\log (9n^2)k\}^{1/2} -2 \log(9n^2) \right] \leq \frac{2}{9n^2}.
 \end{align}
Using that $n\geq 3$ and the definition of $r(k)$, we obtain that
\begin{align}
\PP  \left[ \sum_{i \in \mathcal{K}}  \left\{ T^{\eta}(W\roww{i})^2 - T^v (W\roww{i})^2   \right\} \leq - 16r(k) \right] \leq \frac{2}{9n^2}\label{hpbound1}.
 \end{align}
 To see this, consider first the case $k<({p\log n})^{1/2}$. Then $k \leq r(k)$ and $\log n \leq r(k)$, so $4\left\{ \log (9n^2) k\right\}^{1/2}\leq 4 \left\{4 \log(n) k\right\}^{1/2}\leq  8r(k)$ and $2 \log (9n^2)\leq 8 \log n \leq 8r(k)$. For the case $k\geq (p\log n)^{1/2}$, we must have that $p \geq \log n$, and hence $4\left\{\log(9n^2)k\right\}^{1/2} \leq 8 \left(p \log n\right)^{1/2} = 8r(k)$ and $2 \log (9n^2)\leq 8 \log(n) \leq 8 (p\log n)^{1/2} = 8r(k)$.

For the second term, we make use of the fact that the CUSUM transformation $T^v(y)$ of any vector $y$ can be expressed as an inner product. More precisely, define the $n$-dimensional vector $\Psi^v \in \RR^n$ to have $l$th element given by
\begin{align}
\Psi^v(l)= \begin{cases}
\ \left(\frac{n-v}{nv}\right)^{1/2} & \text{ for } l = 1, \ldots, v,\\
-\left(\frac{v}{n(n-v)}\right)^{1/2} & \text{ for } l = v+1\ldots, n. \\
\end{cases}\label{psidef}
\end{align}
Then for any vector $y \in \RR^n$, we have that
\begin{align}
T^v(y) &= \langle y, \Psi^{v} \rangle,
\end{align}
see \citet{baranowski_narrowest-over-threshold_2019}. Hence, for any $i \in \mathcal{K}$,
 \begin{align}
T^{\eta}(\mu_{\cdot,i})T^{\eta}(W_{\cdot,i}) - T^{v}(\mu_{\cdot,i})T^{v}(W_{\cdot,i}) &=\langle \mu_{\roww{i}}, \Psi^{\eta} \rangle\langle W_{\roww{i}}, \Psi^{\eta} \rangle - \langle \mu_{\roww{i}}, \Psi^{v} \rangle\langle W_{\roww{i}}, \Psi^{v} \rangle \\
&= \left\langle W_{\roww{i}}, \langle \mu_{\roww{i}}, \Psi^{\eta} \rangle \Psi^{\eta} \right\rangle - \left \langle W_{\roww{i}}, \langle \mu_{\roww{i}}, \Psi^{v} \rangle\Psi^{v} \right \rangle\\
&= \left \langle W_{\roww{i}}, \langle \mu_{\roww{i}}, \Psi^{\eta} \rangle \Psi^{\eta}  -  \langle \mu_{\roww{i}}, \Psi^{v} \rangle\Psi^{v} \right\rangle.
\end{align}
As $W\roww{i}\overset{\text{ind}}{\sim} \N_n\left(0, I\right)$ for all $i \in \mathcal{K}$, we get that 
\begin{align}
    T^{\eta}(\mu_{\cdot,i})T^{\eta}(W_{\cdot,i}) - T^{v}(\mu_{\cdot,i})T^{v}(W_{\cdot,i}) \overset{\text{ind}}{\sim} \N \left( 0, \normm{\langle \mu_{\roww{i}}, \Psi^{\eta} \rangle \Psi^{\eta}  -  \langle \mu_{\roww{i}}, \Psi^{v} \rangle\Psi^{v} }_2^2 \right),
\end{align}
for $i \in \mathcal{K}$. 
By Lemma \ref{lemma4NOT}, we have that $\normm{\langle \mu_{\roww{i}}, \Psi^{\eta} \rangle \Psi^{\eta}  -  \langle \mu_{\roww{i}}, \Psi^{v} \rangle\Psi^{v} }_2^2 = T^{\eta}(\mu\roww{i})^2 - T^v (\mu\roww{i})^2$. We therefore have that
\begin{align}
 \sum_{i \in \mathcal{K}} \left\{ T^{\eta}(X\roww{i})T^{\eta}(\mu\roww{i}) - T^{v}(X\roww{i})T^{v}(\mu\roww{i}) \right\} \sim \N\left( 0 , \beta_v\right).
\end{align}
By the standard Gaussian tail bound $\PP(Z>t)\leq e^{-t^2/2}$ for $Z \sim \N(0,1)$ and $t>0$, we obtain
\begin{align}
\PP \left[ \sum_{i \in \mathcal{K}}  \left\{ T^{\eta}(X\roww{i})T^{\eta}(\mu\roww{i}) - T^{v}(X\roww{i})T^{v}(\mu\roww{i}) \right\} < - 2(2\beta_v \log n )^{1/2} \right] \leq \frac{1}{9n^2} \label{l15e2},
\end{align}
again using that $n\geq 3$. Combining \eqref{hpbound1} and \eqref{l15e2} by a union bound, we have for any $0<v<n$ that
\begin{align}
\PP \left[    \ \sum_{i \in \mathcal{K}} \left\{ C^{\eta}(i)^2 - C^{v}(i)^2 \right\} \geq \beta_v   - 2 ( 2 \beta_v \log n)^{1/2} - 16 r(k) \right] \leq \frac{1}{3n^2}.
\end{align}
By another union bound (over $v$), we obtain that $\PP(\mathcal{E}_1^c)\leq \frac{1}{3n}$.\n

\textbf{Step 2}. We now show that $\PP\left( \mathcal{E}_2^c \right) \leq {1}/{6n}$. Fix $0<v<n$ and any $t \in \mathcal{T}$ such that $t \leq (p\log n)^{1/2}$. Fix $x_t>0$, to be determined later. As ${C}^{v} (i) \iid \N(0,1)$ for all $i \in \mathcal{K}^c$, we have by Lemma \ref{lemma6} that
\begin{align}
 \sum_{i \in \mathcal{K}^c } \left\{ {C}^{v} (i)^2  - \nu_{a(t)} \right \} \ind {\left\{\left|{C}^{v}(i)\right| > a(t) \right\}} &\leq 9 \left [  \left\{ p e^{-a(t)^2/2} x_t\right\}^{1/2} + x_t \right] \label{l15e3}
\end{align}
with probability at least $1-e^{-x_t}$. By a union bound, \eqref{l15e3} holds for all $0<v<n$ and $t\in \mathcal{T}$ such that $t \leq (p\log n)^{1/2}$, with probability at least $1 - n \sum_{t\in \mathcal{T} \setminus\{p\}} e^{-x_t}$. Now set $x_t = 6 \left\{  {p \log^2(n)}/{t^2} \wedge r(t)  \right\}$. Then
\begin{align}
\sum_{t\in \mathcal{T} \setminus\{p\}} e^{-x_t} &\leq \sum_{t\in \mathcal{T}\setminus\{p\}}   \exp\left\{-6\frac{p \log^2(n)}{t^2}\right\} + \sum_{t\in \mathcal{T}\setminus\{p\}} \exp\left\{ -6r(t)\right\}.
\end{align}
For the first sum, we have
\begin{align}
\sum_{t\in \mathcal{T}\setminus\{p\}}   \exp\left\{-6\frac{p \log^2(n)}{t^2}\right\} &\leq \sum_{k=0}^{\infty} \exp\left\{-6{\log(n)} 4^{k}\right\} \\
&= \sum_{k=0}^{\infty}\left( \frac{1}{n^6} \right)^{4^k}\\
&\leq \frac{1}{n^6} + \frac{1}{n^6} \sum_{k=1}^{\infty}\left( \frac{1}{n^6} \right)^{3k}\\
&=\frac{1}{n^6} \left( 1 + \frac{1}{n^{18}-1}\right).
\end{align}
For the second sum, noting that $6 r(t) = 6 \left\{ t \log \left( {ep\log n}/{t^2}\right) \vee \log n\right\} \geq 3 t \log \left( {ep\log n}/{t^2}\right) + 3\log n$, we have
\begin{align}
\sum_{t\in \mathcal{T}\setminus\{p\}} \exp\left\{ -6r(t)\right\} &\leq \frac{1}{n^3}  
\sum_{t\in \mathcal{T}\setminus\{p\}}  \left(   \frac{t^2}{ep\log n}   \right)^{3t}\\
&\leq \frac{1}{n^3 e^3} \left( 1 + \sum_{k=1}^{\infty} 4^{-3k} \right).
\end{align}
With our choice of $x_t$, using that $a^2(t) = 4\log\left( {ep\log n}/{t^2}  \right)$, we have that
\begin{align}
9 \left [  \left\{ p e^{-a^2(t)/2} x_t\right\}^{1/2} + x_t \right] &= 9 \left [ \left\{ p \frac{t^4}{e^2 p^2 \log^2 n} x_t\right\}^{1/2} + x_t \right] \\
&\leq 9 \left \{ \frac{t\surd{6}}{e}  + 6 r(t)\right\}\\
&\leq 63 r(t),
\end{align}
where we used that $x_t \leq 6r(t)$ and $x_t \leq {6p \log^2(n)}/{t^2}$, as well as the fact that $t \leq r(t)$ whenever $t \leq (p \log n)^{1/2}$. Hence, using that $n\geq 3$,
\begin{align}
&\PP \left[ \exists 0<v<n, \exists t \in \mathcal{T}, t \leq (p\log n)^{1/2} \ ; \ \sum_{i \in \mathcal{K}^c } \left\{ {C}^{v} (i)^2  - \nu_{a(t)} \right \} \ind {\left\{\left|{C}^{v}(i)\right| > a(t) \right\}} > 63 r(t) \right] \notag\\
\leq & \frac{1}{n^6} \left( 1 + \frac{1}{n^{18}-1}\right) + \frac{1}{n^3 e^3} \left( 1 + \sum_{k=1}^{\infty} 4^{-3k} \right) \\
\leq & \frac{1}{18n^2}\label{hpbound2}.
\end{align}
Now consider the case where $t = p$. If $p \leq (p\log n)^{1/2}$, then $a(p)>0$ and similarly as above we have that
\begin{align}
&\PP \left[ \exists 0<v<n,  \ ; \ \sum_{i \in \mathcal{K}^c } \left\{ {C}^{v} (i)^2  - \nu_{a(p)} \right \} \ind {\left\{\left|{C}^{v}(i)\right| > a(p) \right\}} > 63 r(p) \right]\\
\leq & \frac{1}{18n^2} \label{hpbound3}.
\end{align}
If we instead have $p > (p \log n)^{1/2}$, in which case $a(p) = 0$ and $\nu_{a(p)} = 1$, then for any $0<v<n$, we have
\begin{align}
\sum_{i \in \mathcal{K}^c } \left\{ {C}^{v} (i)^2  - \nu_{c(p)} \right \} \ind {\left\{\left|{C}^{v}(i)\right| > c(p) \right\}} &= \sum_{i \in \mathcal{K}^c } \left\{ {C}^{v} (i)^2 \right\}- p + k.  
\end{align}
As $\sum_{i \in \mathcal{K}^c } {C}^{v} (i)^2   \sim \chi^2_{p-k}$, we obtain from Lemma \ref{lemma12} that
\begin{align}
\sum_{i \in \mathcal{K}^c } \left\{  {C}^{v} (i)^2 \right\} -p +k & >  2 \left\{p \log(12n^2)\right\}^{1/2} + 2\log(12n^2),
\end{align}
with probability at most ${1}/{(12n^2)}$. Using that $n\geq 3$ and $r(p) \geq \log n$, we obtain by a union bound that
\begin{align}
\PP \left[\exists 0<v<n \ ; \ \sum_{i \in \mathcal{K}^c } \left\{{C}^{v} (i)^2  - \nu_{c(p)} \right \} \ind {\left\{\left|{C}^{v}(i)\right| > c(p) \right\}} >  15 r(t) \right] \leq \frac{1}{12n} \label{l15e4},
\end{align}
Combining \eqref{hpbound2}, \eqref{hpbound3} and \eqref{l15e4} by a union bound, we have that $\PP(\mathcal{E}_2^c) \leq {1}/{(6n)}$.



\textbf{Step 3}. We show that $\PP\left(   \mathcal{E}_3^c \right) \leq {1}/{6n}$. Fix $0<v<n$ and any $t \in \mathcal{T}$ such that $t \leq (p\log n)^{1/2}$. Fix $x_t>0$, to be determined later. As ${C}^{v} (i) \iid \N(0,1)$ for all $i \in \mathcal{K}^c$, we have by Lemma \ref{lemma5} that
\begin{align}
 \sum_{i \in \mathcal{K}^c } \left\{ {C}^{v} (i)^2  - \nu_{a(t)} \right \} \ind {\left\{\left|{C}^{v}(i)\right| > a(t) \right\}} &\geq -5 \left [  \left\{ p e^{-a(t)^2/2} x_t \right\}^{1/2} + x_t \right] \label{l15e32}
\end{align}
with probability at least $1-e^{-x_t}$. By a union bound, \eqref{l15e32} holds for all $0<v<n$ and $t\in \mathcal{T}$ such that $t \leq (p\log n)^{1/2}$, with probability at least $1 - n \sum_{t\in \mathcal{T} \setminus\{p\}} e^{-x_t}$. Now set $x_t = 6 \left\{  {p \log^2(n)}/{t^2} \wedge r(t)  \right\}$. Similar to Step 2, we obtain that
\begin{align}
&\PP \left[ \exists 0<v<n, \exists t \in \mathcal{T}, t \leq (p\log n)^{1/2} \ ; \ \sum_{i \in \mathcal{K}^c } \left\{ {C}^{v} (i)^2  - \nu_{a(t)} \right \} \ind {\left\{\left|{C}^{v}(i)\right| > a(t) \right\}} < - 35 r(t) \right]\notag\\
\leq & \frac{1}{18n^2}\label{hpbound5}.
\end{align}
Also similar to Step 2, we have that 
\begin{align}
&\PP \left [ \exists 0<v<n \ ; \ \sum_{i \in \mathcal{K}^c } \left\{ {C}^{v} (i)^2  - \nu_{a(p)} \right \} \ind {\left\{\left|{C}^{v}(i)\right| > a(p) \right\}} < - 35 r(p) \right]\notag\\
\leq & \frac{1}{12n^2}\label{hpbound6}.
\end{align}
It then follows that $\PP\left(\mathcal{E}_3^c\right) \leq {1}/{(6n)}$ by a union bound.

\textbf{Step 4}. Lastly we show that $\PP\left( \mathcal{E}_4^c \right) \leq {1}/{(3n)}$. Fix any $0<v<n$ and $t  \in \mathcal{T}$ such that $t < (p \log n)^{1/2}$. By Lemma \ref{lemma16} and Theorem 1.A.3(b) in \citet{shaked_stochastic_2007}, we have that
\begin{align}
&\quad \quad \sum_{i =1 }^p \left[\left\{ {C}^{v} (i)^2  - \nu_{a(t)} \right \} \ind {\left\{\left|{C}^{v}(i)\right| > a(t) \right\}} - {C}^{v} (i)^2  +1 \right] \\
&\leqst \sum_{i =i }^p \left[  \left\{ Y_i^2  - \nu_{a(t)} \right \} \ind {\left\{\left|Y_i\right| > a(t) \right\}} - Y_i^2  +1 \right],
\end{align}
where $Y_i \iid \N(0,1)$ for $i \in [p]$. Hence, 
\begin{align}
\PP\left(\mathcal{E}_4^c\right) &\leq \sum_{0<v<n} \quad \sum_{t \in \mathcal{T}\setminus\{p\}} \PP\left[ \sum_{i =1 }^p  \left\{ Y_i^2  - \nu_{a(t)} \right \} \ind {\left\{\left|Y_i\right| > a(t) \right\}} \geq 63 r(t) \right] \\
&+ \sum_{0<v<n} \quad \sum_{t \in \mathcal{T}\setminus\{p\}} \PP\left\{ \sum_{i =1 }^p  \left\{ Y_i^2  - 1\right \} \leq - 5 h(p) \right\}\\
&\leq \frac{1}{18n}  + n \log_2(p) \PP\left\{ \sum_{i =1 }^p  \left( Y_i^2  - 1\right ) \leq - 5 h(p) \right\},
\end{align}
where we for the first sum used the same arguments as in Step 2. For the second sum, we have by Lemma \ref{lemma12} that
\begin{align}
\PP \left[    \sum_{i=1}^p (Y_i^2 -1) \leq -2 p^{1/2} \left\{\log(6n^2) + \log \log_2 p \right\}^{1/2}    \right] &\leq \frac{1}{6n^2 \log_2 p }.
\end{align}
Now, 
\begin{align}
    2 p^{1/2} \left\{\log(6n^2) + \log \log_2 p \right\}^{1/2} &\leq 2 \left[ 6 \left\{ \log n \vee \log\log(ep)\right\} \right]^{1/2}\\
    &\leq 5 h(p).
\end{align}Hence $\PP\left(\mathcal{E}_4^c\right) \leq {1}/{(18n)} +{1}/{(6n)} \leq {1}/{(3n)}$, and the proof is complete.
\end{proof}

The following Lemma gives high-probability control over the score statistic $S^v_{\gamma,(s,e]}$ used as a test statistic.
\begin{lemma}\label{lemma20}
Consider the model from Section \ref{secproblemdesc}, and assume $\sigma=1$. Let $r(t)$ be defined as in \eqref{rdef}.
 For any integer $v$ such that $s<v<e$, let $T^b_{(s,e]}(\cdot)$ be defined as in \eqref{c2def}, and define 
\begin{align}
\beta_{(s,e]}^v=  \sum_{i=1}^p T^v_{(s,e]}(\mu_{\roww{i}})^2,
\intertext{and}
k_{(s,e]} = \sum_{i=1}^p \ind \left\{ T^v_{(s,e]}(\mu_{\roww{i}})^2 = 0\right\}.
\end{align}
Note that if $\beta_{(s,e]}^v= 0$ for some $v$, then the open integer interval $(s,e)$ contains no changepoint. Define the events
\begin{align}
\mathcal{E}_5 :&= \left\{          
\forall 0\leq s <v<e\leq n ,\ \beta_{(s,e]}^v =0 \  ; \ S^v_{\gamma,(s,e]} < 0
\right\},\\
\mathcal{E}_6 :&=  \left\{
\forall 0\leq s <v<e\leq n \ ; \ S^v_{\gamma, (s,e]} \geq \beta_{(s,e]} - 8\left\{2\beta_{(s,e]}^v r(k_{(s,e]})\right\}^{1/2} - \left( \gamma + 106 \right)r(k_{(s,e]})
\right\}.
\end{align}
If $\gamma \geq 82$, then $\PP\left(\mathcal{E}_5 \cap \mathcal{E}_6 \right) \geq 1-{1}/{n}$. 
\end{lemma}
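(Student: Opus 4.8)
# Proof proposal for Lemma 20

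The plan is to control the test statistic $S^v_{\gamma,(s,e]}$ over all admissible triples $(s,v,e)$ by reducing everything, via the linearity of the CUSUM transformation, to concentration of thresholded, mean-centered $\chi^2$-type sums, and then to union-bound. Throughout I would write, for a fixed interval $(s,e]$ and candidate $v$, $C^v_{(s,e]}(i) = T^v_{(s,e]}(\mu_{\roww{i}}) + T^v_{(s,e]}(W_{\roww{i}})$, and note that for each fixed $(s,v,e)$ the noise CUSUMs $T^v_{(s,e]}(W_{\roww{i}})$ are i.i.d.\ $\N(0,1)$ across $i\in[p]$. Write $\mathcal{K}_{(s,e]} = \{ i : T^v_{(s,e]}(\mu_{\roww{i}}) \neq 0\}$, so $|\mathcal{K}_{(s,e]}| = p - k_{(s,e]}$; on the complement the observed CUSUM is pure noise.

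\textbf{Event $\mathcal{E}_5$ (null intervals).} When $\beta^v_{(s,e]}=0$, every coordinate CUSUM $C^v_{(s,e]}(i)$ is a standard normal, so for each fixed $t\in\mathcal{T}$ the sparsity-specific score's thresholded-sum part is exactly the quantity controlled by Lemma~\ref{lemma6} (in the sparse/thresholded regime $a(t)>0$) or by Lemma~\ref{lemma12} applied to a central $\chi^2_p$ (in the dense regime $a(t)=0$). Choosing the deviation level $x_t$ proportional to $\{p\log^2 n / t^2\}\wedge r(t)$ exactly as in Step~2 of the proof of Lemma~\ref{lemma15}, one gets that the thresholded sum is at most (a constant times) $r(t)$, hence strictly below $\gamma_0 r(t)$ once $\gamma_0 \geq 82$, i.e.\ $S^v_{\gamma,(s,e]}(t)<0$ for every $t$, hence $S^v_{\gamma,(s,e]}<0$. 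A union bound over $t\in\mathcal{T}$ and over the $\mathcal{O}(n^2)$ triples $(s,v,e)$ — the $n^{-6}$-type tail from the $x_t$ choice easily absorbs the $n^2\log p$ factors — gives $\PP(\mathcal{E}_5^c)\leq 1/(2n)$, say. This is essentially the "extending Lemma~\ref{lemma20}" remark in Section~\ref{singletest} made rigorous.

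\textbf{Event $\mathcal{E}_6$ (lower bound with signal).} Here I would fix $(s,v,e)$ and bound $S^v_{\gamma,(s,e]}$ from below by $S^v_{\gamma,(s,e]}(\overline{k})$ where $\overline{k}$ is the smallest element of $\mathcal{T}$ that is $\geq k_{(s,e]}^{\mathrm{nz}} := p - k_{(s,e]}$ (the number of coordinates carrying signal), mirroring the argument in the proof of Theorem~\ref{locprop}. Using $x^2 - \nu_{a(t)} \leq (x^2-\nu_{a(t)})\ind\{|x|>a(t)\} \leq x^2$ coordinate-wise, split the sum into $i\in\mathcal{K}_{(s,e]}$ and its complement. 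On $\mathcal{K}_{(s,e]}$: expand $C^v(i)^2 = T^v(\mu_{\roww i})^2 + 2 T^v(\mu_{\roww i})T^v(W_{\roww i}) + T^v(W_{\roww i})^2$; the first piece sums to $\beta^v_{(s,e]}$, the cross term is mean-zero Gaussian with variance $\beta^v_{(s,e]}$ (Gaussian tail bound gives the $-(2\beta^v_{(s,e]}\log n)^{1/2}$-type deviation, as in Step~1 of Lemma~\ref{lemma15}), and the $\chi^2_{k^{\mathrm{nz}}}$ piece is handled by Lemma~\ref{lemma12}; all of these are $\lesssim r(k_{(s,e]})$ after the union bound, using $\nu_{a(\overline k)} \leq a^2(\overline k)+2$ and $\overline k\, a^2(\overline k) \lesssim r(\overline k) \leq 2 r(k_{(s,e]})$. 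On $[p]\setminus\mathcal{K}_{(s,e]}$: pure noise, bounded below by $-35 r(\overline k)$ type quantity via Lemma~\ref{lemma5}, exactly the $\mathcal{E}_3$ computation. Collecting constants and using $\gamma_0 r(\overline k) \leq 2\gamma_0 r(k_{(s,e]})$ plus $\log n \leq r(\cdot)$ throughout yields the stated $\beta^v_{(s,e]} - 8\{2\beta^v_{(s,e]} r(k_{(s,e]})\}^{1/2} - (\gamma_0+106) r(k_{(s,e]})$ after a final union bound over $(s,v,e)$, giving $\PP(\mathcal{E}_6^c)\leq 1/(2n)$.

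\textbf{Main obstacle.} The delicate point is not any single concentration inequality but the bookkeeping: ensuring that the per-triple failure probabilities, after summing the $x_t$-tails over $t\in\mathcal{T}$ (which requires the geometric-series estimates of Step~2 in Lemma~\ref{lemma15}) and multiplying by the $\mathcal{O}(n^2)$ count of triples $(s,v,e)$, still total at most $1/n$ — this is what forces the deviation levels $x_t \asymp \{p\log^2 n/t^2\}\wedge r(t)$ rather than the $\log n$-scale thresholds of \citet{liu_minimax_2021}, and it is why $\gamma_0 \geq 82$ rather than a smaller constant is needed. A secondary subtlety is that in event $\mathcal{E}_6$ the relevant "sparsity" is $k_{(s,e]}$-complement $= p - k_{(s,e]}$ (coordinates with nonzero CUSUM of the \emph{mean}, which may exceed the true change-sparsity $k_j$ when $v$ is misaligned with $\eta_j$), so one must be careful that the penalty/threshold bookkeeping is phrased in terms of this quantity; monotonicity $r(\overline k)\leq 2 r(k_{(s,e]})$ and the fact that $r$ is (roughly) increasing keep this under control. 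Once the union-bound accounting is set up, both halves are close variants of computations already carried out in the proofs of Theorem~\ref{locprop} and Lemma~\ref{lemma15}.
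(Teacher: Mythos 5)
Your overall strategy is the same as the paper's. For $\mathcal{E}_5$ you reduce to i.i.d.\ standard normal CUSUMs and apply Lemma~\ref{lemma6} (Lemma~\ref{lemma12} in the dense case $a(t)=0$) with deviation levels $x_t$ of the form $\{p\log^2 n/t^2\}\wedge r(t)$, then union bound over $t\in\mathcal{T}$ and over all triples $(s,v,e)$; for $\mathcal{E}_6$ you lower-bound $S^v_{\gamma,(s,e]}$ by the sparsity-specific score at the smallest grid point of $\mathcal{T}$ above the number of coordinates with nonzero mean-CUSUM, split into signal and noise coordinates, and use chi-square concentration for the former and Lemma~\ref{lemma5} for the latter. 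The only structural difference is that on the signal coordinates you expand the square and control the cross term by a Gaussian tail bound (as in Step~1 of Lemma~\ref{lemma15}), whereas the paper applies the noncentral chi-square bound of Lemma~\ref{lemma12} to $\sum_{i\leq k_{(s,e]}}C^v_{(s,e]}(i)^2$ in one shot; the two routes are interchangeable (note only that the cross term has variance $4\beta^v_{(s,e]}$, not $\beta^v_{(s,e]}$). You are also right to flag that the ``$=0$'' in the displayed definition of $k_{(s,e]}$ must be read as ``$\neq 0$''; the paper's proof indeed uses $k_{(s,e]}$ as the number of signal coordinates.

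One piece of the bookkeeping is off, and it is exactly the piece you identify as the crux. The number of admissible triples $0\leq s<v<e\leq n$ is of order $n^3$, not $n^2$, so you cannot take $x_t$ ``exactly as in Step~2 of the proof of Lemma~\ref{lemma15}'', i.e.\ $x_t=6\{p\log^2 n/t^2\wedge r(t)\}$: that choice only gives $\sum_{t\in\mathcal{T}\setminus\{p\}}e^{-x_t}\lesssim n^{-3}$, and multiplying by $n^3$ triples leaves a constant rather than $O(1/n)$ — your $\mathcal{O}(n^2)$ miscount is what makes the Lemma~\ref{lemma15} levels look sufficient. The paper inflates the levels to $x_t=8\{p\log^2 n/t^2\wedge r(t)\}$ (and uses $\log(4n^4)$-scale deviations in the chi-square steps), giving $\sum_t e^{-x_t}\lesssim n^{-4}$ so that the union bound over $n^3$ triples closes at $1/(2n)$ per event; the constant $82$ (rather than the $63$ of Lemma~\ref{lemma15}) is precisely what this larger $x_t$ produces when fed through Lemma~\ref{lemma6}. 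This is a fixable constant adjustment rather than a conceptual error, but as written your accounting does not deliver the stated $1-1/n$ guarantee.
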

\begin{proof}\n
\textbf{Step 1.} We first show that $\PP\left(\mathcal{E}_5^c\right) \leq 1/(2n)$. 
Consider any integer triple of $s,e,v$ such that $0\leq s <v<e\leq n$ and $\beta_{(s,e]}^v = 0$. Fix any $t \in \mathcal{T}\setminus\{p\}$ (the case $t=p$ is handled later), and fix $x_t>0$, to be specified later. As $\beta_{(s,e]}^v = 0$, the open integer interval $(s,e)$ contains no changepoint, and thus ${C}^{v}_{(s,e]} (i) \iid \N(0,1)$ for all $i \in [p]$. By Lemma \ref{lemma6} we have that
\begin{align}
 \sum_{i =1}^p \left\{ {C}^{v}_{(s,e]} (i)^2  - \nu_{a(t)} \right \} \ind {\left\{\left|{C}^{v}_{(s,e]}(i)\right| > a(t) \right\}} &\geq 9 \left [  \left\{ p e^{-a(t)^2/2} x_t\right\}^{1/2} + x_t \right] \label{l20e3}
\end{align}
occurs with probability at most $e^{-x_t}$. Note that there are at most $n^3$ unique choices of the triple $(s,e,v)$. By a union bound, \eqref{l20e3} holds for some $0\leq s <v < e \leq n$ and some $t\in \mathcal{T}\setminus\{p\}$ with probability at most $n^3 \sum_{t\in \mathcal{T} \setminus\{p\}} e^{-x_t}$. Now set $x_t = 8 \left\{  \frac{p \log^2(n)}{t^2} \wedge r(t)  \right\}$ for all $t$. Then,
\begin{align}
\sum_{t\in \mathcal{T}\setminus\{p\}} e^{-x_t} &\leq \sum_{t\in \mathcal{T}\setminus\{p\}}   \exp\left\{-8\frac{p \log^2(n)}{t^2}\right\} + \sum_{t\in \mathcal{T}\setminus\{p\}} \exp\left\{ -8r(t)\right\}.
\end{align}
For the first sum, we have
\begin{align}
\sum_{t\in \mathcal{T}\setminus\{p\}}   \exp\left\{-8\frac{p \log^2(n)}{t^2}\right\} &\leq \sum_{k=0}^{\infty} \exp\left\{-8{\log(n)} 4^{k}\right\} \\
&= \sum_{k=0}^{\infty}\left( \frac{1}{n^8} \right)^{4^k}\\
&\leq \frac{1}{n^8} + \frac{1}{n^8} \sum_{k=1}^{\infty}\left( \frac{1}{n^8} \right)^{3k}\\
&=\frac{1}{n^8} \left( 1 + \frac{1}{n^{24}-1}\right).
\end{align}
For the second sum, noting that $8 r(t) = 8 \left\{ t \log \left( \frac{ep\log n}{t^2}\right) \vee \log n\right\} \geq 4 t \log \left( \frac{ep\log n}{t^2}\right) + 4\log n$, we have
\begin{align}
\sum_{t\in \mathcal{T}\setminus\{p\}} \exp\left\{ -8r(t)\right\} &\leq \frac{1}{n^4}  
\sum_{t\in \mathcal{T}\setminus\{p\}}  \left(   \frac{t^2}{ep\log n}   \right)^{4t}\\
&\leq \frac{1}{n^4 e^4} \left( 1 + \sum_{k=1}^{\infty} 4^{-4k} \right).
\end{align}
Hence, using that $n\geq 2$,
\begin{align}
n^3 \sum_{t\in \mathcal{T} \setminus\{p\}} e^{-x_t} \leq & \frac{1}{n^5} \left( 1 + \frac{1}{n^{24}-1}\right) + \frac{1}{n e^4} \left( 1 + \sum_{k=1}^{\infty} 4^{-4k} \right) \\
\leq & \frac{1}{10n}.
\end{align}
With this choice of $x_t$, using that $a^2(t) = 4\log\left( \frac{ep\log n}{t^2}  \right)$, we have that
\begin{align}
9 \left [  \left\{ p e^{-c^2(t)/2} x_t\right\}^{1/2} + x_t \right] &= 9 \left [  \left\{ p \frac{t^4}{e^2 p^2 \log^2 n} x_t\right\}^{1/2} + x_t \right] \\
&\leq 9 \left \{ \frac{t\surd{8}}{e}  + 8 r(t)\right\}\\
&\leq 82 r(t),
\end{align}
where we used that $x_t \leq 8r(t)$ and $x_t \leq {8p \log^2(n)}/{t^2}$, as well as the fact that $t \leq r(t)$ whenever $t \leq (p \log n)^{1/2}$. Hence, 
\begin{align}
&\PP \left[ \exists 0\leq s<v<e\leq n, \exists t \in \mathcal{T}\setminus\{p\} \ ; \ \sum_{i =1 }^p \left\{ {C}^{v}_{(s,e]} (i)^2  - \nu_{a(t)} \right \} \ind{\left\{|{C}^{v}_{(s,e]}(i)| > a(t) \right\}} \geq 82 r(t) \right]\notag\\
\leq & \frac{1}{10n}\label{hp20bound2}.
\end{align}
Now consider the case where $t = p$. If $p \leq ({p\log n})^{1/2}$, then similarly as above we have that
\begin{align}
&\PP \left[ \exists 0\leq s<v<e\leq n,  \ ; \ \sum_{i =1 }^p \left\{ {C}^{v}_{(s,e]} (i)^2  - \nu_{a(p)} \right \} \ind{\left\{|{C}^{v}_{(s,e]}(i)| > a(p) \right\}} \geq 82 r(p) \right]\notag\\
\leq & \frac{1}{10n} \label{hp20bound3}.
\end{align}
If we instead have $p > (p \log n)^{1/2}$ (in which case $a(p) = 0$ and $\nu_{a(p)} = 1$), then for any $0\leq s<v<e\leq n$, we have
\begin{align}
\sum_{i =1 }^p \left\{ {C}^{v}_{(s,e]} (i)^2  - \nu_{c(p)} \right \} \ind{\left\{|{C}^{v}_{(s,e]}(i)| > c(p) \right\}} &= \sum_{i =1 }^p  {C}^{v}_{(s,e]} (i)^2 - p .  
\end{align}
As $\sum_{i =1 }^p {C}^{v}_{(s,e]} (i)^2   \sim \chi^2_{p}$, we obtain from Lemma \ref{lemma12} that
\begin{align}
\sum_{i =1 }^p \left\{  {C}^{v}_{(s,e]} (i)^2 \right\} -p  & \geq  2 \left\{p \log(4n^4)\right\}^{1/2} + 2\log(4n^4),
\end{align}
occurs with probability at most ${1}/{(4n^4)}$. Using that $n\geq 2$ and $r(p) \geq \log n$, we obtain by a union bound that
\begin{align}
&\PP \left[\exists 0\leq s <v<e \leq n \ ; \ \sum_{i =1 }^p \left\{ {C}^{v}_{(s,e]} (i)^2  - \nu_{c(p)} \right \} \ind{\left\{|{C}^{v}_{(s,e]}(i)| > c(p) \right\}} \geq  17 r(t) \right]\notag\\ &\leq \frac{1}{4n} \label{l20e4},
\end{align}
Combining \eqref{hp20bound2}, \eqref{hp20bound3} and \eqref{l20e4} by a union bound, and using that $\gamma \geq 82$, we get that $\PP(\mathcal{E}_5^c) \leq 1/(2n)$.\n

\textbf{Step 2}. Now we show that $\PP\left( \mathcal{E}_6^c \right) \leq 1/(2n)$. Consider any $0\leq s<v<e\leq n$. Without loss of generality, assume that $T^v_{(s,e]}(\mu\roww{1})^2 \geq T^v_{(s,e]}(\mu\roww{2})^2 \geq \ldots T^v_{(s,e]}(\mu\roww{p})^2$. Let $z$ denote the smallest integer in $\mathcal{T}$ no smaller than $k_{(s,e]}$, where we suppress the dependence of $z$ on $s,e$ in the notation. For $z\leq (p\log n)^{1/2}$, observe that
\begin{align}
&\sum_{i =1 }^p \left\{ {C}^{v}_{(s,e]} (i)^2  - \nu_{a(z)} \right \} \ind{\left\{|{C}^{v}_{(s,e]}(i)| > a(z) \right\}} \\
\geq &\sum_{i=1}^{k_{(s,e]}} \left\{ {C}^{v}_{(s,e]} (i)^2  - \nu_{a(z)} \right \} + \sum_{i =k_{(s,e]}+1 }^p \left\{ {C}^{v}_{(s,e]} (i)^2  - \nu_{a(z)} \right \} \ind{\left\{|{C}^{v}_{(s,e]}(i)| > a(z) \right\}}\label{thezum}.
\end{align}
We lower bound the two sums separately. For each $i\in [p]$ we have that ${C}^{v}_{(s,e]} = T^v_{(s,e]}(\mu\roww{i}) + T^v_{(s,e]}(W\roww{i}) \overset{\text{ind}}{\sim} \N (  T^v_{(s,e]}(\mu\roww{i}) \ , \ 1   )$. Let $x_t = 8 \left\{  {p \log^2(n)}/{t^2} \wedge r(t)  \right\}$ for all $t \in [p]$, as in Step 1. For the first sum, noting that $\sum_{i=1}^{k_{(s,e]}} {C}^{v}_{(s,e]} (i)^2 \sim \chi_{k_{(s,e]}}^2 (\beta^v_{(s,e]})$ (a non-central Chi Square distribution with $k_{(s,e]}$ degrees of freedom and non-centrality parameter $\beta^v_{(s,e]}$), we have by Lemma \ref{lemma12} that
\begin{align}
&\PP\left[   \sum_{i=1}^{k_{(s,e]}} \left\{ {C}^{v}_{(s,e]} (i)^2 -\nu_{a(z)} \right\}< k_{(s,e]} - k_{(s,e]}\nu_{a(z)} + \beta^v_{(s,e]} -2 \left\{x_{z}\left(z + 2 \beta^v_{(s,e]}\right)\right\}^{1/2}  \right] \notag\\
&\leq  e^{-x_{z}}.
\end{align}
Note that, since $k_{(s,e]}\leq z \leq (p\log n)^{1/2}$, we have $z \leq r(z)$ and $k_{(s,e]}\leq r(k_{(s,e]})$. Moreover, by Lemma \ref{lemma10}, we have
$\nu_{a(z)}^2 \leq 2+ a^2(z)\leq 2+a^2(k_{(s,e]})$, where we for the last inequality used that $z\geq k_{(s,e]}$ and that $t \mapsto a^2(t)$ is decreasing. Since $z \leq 2 k_{(s,e]}$, it also holds that $r(z)\leq 2 r(k_{(s,e]})$. Hence, 
\begin{align}
\PP\left[   \sum_{i=1}^{k_{(s,e]}} \left\{ {C}^{v}_{(s,e]} (i)^2 -\nu_{a(z)} \right\}< \beta^v_{(s,e]} -8 \left\{2\beta^v_{(s,e]} r(k_{(s,e]})\right\} - 14 r(k_{(s,e]})  \right] \leq e^{-x_z}.
\end{align}
For the second sum, we obtain from Lemma \ref{lemma5} that
\begin{align}
&\PP \left[  \sum_{i =k_{(s,e]}+1 }^p \left\{ {C}^{v}_{(s,e]} (i)^2  - \nu_{a(t)} \right \} \ind{\left\{|{C}^{v}_{(s,e]}(i)| > a(z) \right\}} \leq - 5 \left [  \left\{ p e^{-b^2(z)/2}x_z\right\}^{1/2} + x_z \right]\right]\notag\\ &\leq  e^{-x_t}\label{lll134}.
\end{align}
By the definition of $x_z$, we have that 
\begin{align}
5 \left [  \left\{ p e^{-b^2(z)/2}x_z\right\}^{1/2} + x_z \right]&\leq 46 r(z)\\
&\leq 92 r(k_{(s,e]}).
\end{align} By a union bound over the two sums in \eqref{thezum}, we have that
\begin{align}
&\sum_{i =1 }^p \left\{ {C}^{v}_{(s,e]} (i)^2  - \nu_{a(z)} \right \} \ind{\left\{|{C}^{v}_{(s,e]}(i)| > a(z) \right\}} \notag\\ 
&<\beta^v_{(s,e]} -8 \left\{2\beta^v_{(s,e]}(z) r(k_{(s,e]})\right\}^{1/2} - 106 r(k_{(s,e]}) \label{thebound1}
\end{align}
occurs with probability at most $2e^{-x_t}$. 

Now suppose that $z> (p \log n)^{1/2}$. Then,
\begin{align}
\sum_{i =1 }^p \left\{ {C}^{v}_{(s,e]} (i)^2  - \nu_{a(z)} \right \} \ind{\left\{|{C}^{v}_{(s,e]}(i)| > a(z) \right\}} &= \sum_{i =1 }^p \left\{ {C}^{v}_{(s,e]} (i)^2  \right \}  -p.
\end{align}
Using that $\sum_{i =1 }^p  {C}^{v}_{(s,e]} (i)^2 \sim \chi_p^2(\beta^v_{(s,e]})$, we have by Lemma \ref{lemma12} that
\begin{align}
\sum_{i =1 }^p \left\{ {C}^{v}_{(s,e]} (i)^2  \right \} -p < \beta^v_{(s,e]} - 2 \left\{\log (4n^4)(p + 2 \beta^v_{(s,e]})\right\}^{1/2}
\end{align}
occurs with probability at most $1/(4n^4)$. In particular, since $\log n\leq r(t)$ for all $t$, we have $r(z) = r(\surd({p\log n})) \leq 2r(k_{(s,e]})$ and $n\geq2$, this implies that \eqref{thebound1} occurs probability at most $1/(4n^4)$ whenever $z \geq (p\log n)^{1/2}$. By a union bound over $0\leq s <v<e\leq n$, we obtain that 
\begin{align}
\PP\left(\mathcal{E}_6^c\right) &\leq n^3 \left(\frac{1}{4n^4} + 2\sum_{t \in \mathcal{T} \setminus\{p\}} e^{-x_t}\right)\\
&\leq \frac{1}{4n} + \frac{1}{5n}\\
&\leq \frac{1}{2n},\end{align}
where we used the same approach as in Step 1 to bound $\sum_{t \in \mathcal{T}\setminus\{p\}} e^{-x_t}$.
The proof is complete. 
\end{proof}

\begin{lemma}\label{lemma14}
Let $\mathcal{M}$ denote the collection of seeded intervals generated by Algorithm \ref{alg:seededintervals} with parameters $\alpha \in (1,2]$ and $K\geq 2$.
Then for all real numbers $h>0$ such that $h\leq n/2$, and all integers $\eta$ such that $3h/2 \vee 1\leq \eta \leq n- \left( 3h/2 \vee 1\right)$, there exists integers $l\geq1$ and $v$ such that the following holds.
\begin{enumerate}[label=(P\arabic*)]
\item $(v-l, v+l] \in \mathcal{M}$\label{l14p1};
\item $h/2 \leq l\leq h\vee 1$\label{l14p2};
\item $|v-\eta| \leq l/K\leq l/2$\label{l14p3}.
\end{enumerate}
In particular, $(v-l, v+l] \subseteq (\eta - (3/2 h \vee 1), \eta+(3/2 h \vee 1)]$.
\end{lemma}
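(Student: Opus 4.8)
The plan is to unwind the construction in Algorithm \ref{alg:seededintervals} and locate a suitable interval by tracking the sequence of lengths $l_1 = 1, l_{j+1} = \max\{l_j + 1, \lfloor \alpha l_j \rfloor\}$. First I would establish the elementary fact that this sequence, while doubly-open-ended, cannot "skip" the dyadic-type scale we need: since $\alpha \le 2$, consecutive values satisfy $l_{j+1} \le 2 l_j$ (and $l_{j+1} \ge l_j + 1$), so for any target $h \ge 1$ with $h \le n/2$ there exists an index $j$ with $h/2 \le l_j \le h \vee 1$. This is exactly property \ref{l14p2}. (One has to be a little careful at the bottom end, $h < 2$, which is why the statement carries the $\vee 1$; here $l=1$ works directly.)

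Next, fix that $l = l_j$ and consider the intervals generated at that length, namely $(is, is + 2l]$ for $s = \max\{1, \lfloor l/K\rfloor\}$ and $i = 0, 1, \ldots$, together with the terminal interval $(n - 2l, n]$. Their centers are $is + l$, which form an arithmetic progression with common difference $s \le l/K \le l/2$ (using $K \ge 2$). Since $\eta$ lies in $[3h/2 \vee 1, n - (3h/2 \vee 1)]$ and $l \le h$, one checks that $\eta - l$ and $\eta + l$ both lie in $[0, n]$, so the point $\eta$ is within distance $s/2 \le l/(2K)$... more precisely, within distance $s \le l/K$ of some center $v = is + l$ of a generated interval (or, near the right endpoint, of the center of the terminal interval $(n-2l,n]$). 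This gives a $v$ with $(v - l, v + l] \in \mathcal{M}$ and $|v - \eta| \le l/K \le l/2$, i.e. properties \ref{l14p1} and \ref{l14p3}. The final containment $(v-l,v+l] \subseteq (\eta - (3h/2 \vee 1), \eta + (3h/2 \vee 1)]$ then follows from the triangle inequality: $|v \pm l - \eta| \le |v - \eta| + l \le l/2 + l = 3l/2 \le 3h/2$ when $l = l_j \le h$, and the $\vee 1$ handles the degenerate case.

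The main obstacle I expect is bookkeeping at the boundary: the "for $i = 0, \ldots, (n-2l)/s$" loop plus the appended terminal interval $(n-2l, n]$ means the centers do not quite tile $[l, n-l]$ uniformly, and one must verify that the hypothesis $\eta \le n - (3h/2 \vee 1)$ (rather than merely $\eta \le n - l$) is enough slack to guarantee $\eta$ is caught by either a regular interval or the terminal one, and symmetrically at the left end where $s = \max\{1, \lfloor l/K \rfloor\}$ may be forced up to $1$. A clean way around this is to argue by cases on whether $\eta + l \le n$ falls inside the last regular center or in the gap covered by the terminal interval, using that the gap between the last regular center and the center $n - l$ of the terminal interval is at most $s \le l/K$. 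Everything else is routine arithmetic with the recursion $l_{j+1} = \max\{l_j+1, \lfloor \alpha l_j\rfloor\}$ and the bounds $1 < \alpha \le 2$, $K \ge 2$.
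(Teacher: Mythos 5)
Your proposal is correct and follows essentially the same route as the paper's proof: track the recursive length sequence $l_{j+1}=\max\{l_j+1,\lfloor\alpha l_j\rfloor\}$, use $l_{j+1}\leq 2l_j$ to find a length in $[h/2,\,h\vee 1]$, then exploit the arithmetic progression of interval centers with spacing $\max\{1,\lfloor l/K\rfloor\}$ to get $|v-\eta|\leq \lfloor l/K\rfloor\leq l/2$, and conclude the containment by the same triangle-inequality computation. The only difference is that you spell out the boundary bookkeeping (terminal interval, step forced to $1$) that the paper asserts implicitly, which is a fair and harmless elaboration.
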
 
\begin{proof}
Define the recursive sequence $(l_j)_{j \in \NN}$ by $l_1 = 1$, and $l_{j+1} = \max \left\{l_j+1,  \lfloor \alpha l_j\rfloor \right\}$ for $j\in \NN$. Let $H = \max \{ j \in \NN \ : \ l_j \leq n/2\}$. Formally, the set $\mathcal{M}$ of seeded intervals generated by Algorithm \ref{alg:seededintervals} is given by
\begin{align}
\mathcal{S} &= \bigcup_{l \in \{l_1, \ldots, l_H\}}  \mathcal{I}_l,
\intertext{where}
\mathcal{I}_l &= \left\{  (n-2l,n]   \right\}  \cup \bigcup_{i=0}^{ \left \lfloor \frac{n-2l}{s_l}\right \rfloor} \left\{(i s_l \ , \ i s_l +2l ]\right\},\\
s_l &= \max \left\{ 1, \left\lfloor \frac{l}{K} \right\rfloor\right\}.
\end{align}
Note that, for any $j \in [H-1]$, it holds that $l_{j+1}/l_j \leq \max\{2, \alpha\}=2$. Hence, there must exist an integer $j \in [H]$ such that $h/2 \leq l_j \leq h\vee 1$. Moreover, by the definition of $\mathcal{I}_{l_j}$, there must exist an integer $v$ such that $|v - \eta| \leq \lfloor l_j/K \rfloor \leq l_j/2$ and $(v-l_j, v+l_j] \in \mathcal{I}_{l_j}$. This proves the first three claims. For the last claim, note that
\begin{align}
v - l_j \ &= v - \eta + \eta - l_j\\
&\geq - \lfloor l_j/K \rfloor + \eta - l_j \\
&\geq \eta - (3/2h \vee 1).
\end{align}
Similarly, we have that $v +l_j \leq \eta + (3/2 h \vee 1)$.
\end{proof}

\begin{lemma}\label{lemma16}
Let $Y \sim \N(\theta,1)$, $\theta \in \RR$, $a>0$ and $\nu_a = \EE\left(  Y^2 \ \mid \ |Y|\geq a  \right)$. 
Let $ A= \left(    Y^2 - \nu_a \right) \ind {\left(\left|Y \right| \geq a \right)}$ and $B= Y^2  -1$. Then $A-B$ is stochastically decreasing in $|\theta|$.
\end{lemma}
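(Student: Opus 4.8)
The plan is to prove that $A - B$ is stochastically decreasing in $|\theta|$ by exhibiting an explicit coupling. First I would observe that, by symmetry, the distribution of $A - B$ depends on $\theta$ only through $|\theta|$, so it suffices to show that for $0 \le \theta_1 \le \theta_2$ we have $A - B$ evaluated at $\theta_2$ is stochastically smaller than at $\theta_1$. Writing out $A - B = (Y^2 - \nu_a)\ind(|Y| \ge a) - (Y^2 - 1)$, I would simplify on the two regions: on $\{|Y| \ge a\}$ this equals $1 - \nu_a$ (a negative constant, since $\nu_a \ge a^2 + 1 > 1$ by Lemma \ref{lemma10}), and on $\{|Y| < a\}$ it equals $1 - Y^2$. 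Hence
\begin{align}
A - B = (1 - \nu_a)\ind(|Y| \ge a) + (1 - Y^2)\ind(|Y| < a) = g(Y),
\end{align}
where $g(y) = 1 - \nu_a$ for $|y| \ge a$ and $g(y) = 1 - y^2$ for $|y| < a$. The key structural fact is that $g$ is a deterministic function of $Y$ that is symmetric, unimodal with maximum at $y = 0$, nonincreasing in $|y|$ on $[0,a]$, and then constant (at its minimum value $1 - \nu_a$) for $|y| \ge a$; one should check continuity at $|y| = a$, which holds because $\nu_a > a^2$ so $1 - a^2 > 1 - \nu_a$ — actually $g$ has a downward jump at $|y| = a$, but it remains nonincreasing in $|y|$, which is all that matters.

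Next I would use the standard fact that $|Y|$ with $Y \sim \N(\theta, 1)$ is stochastically increasing in $|\theta|$ — i.e. folded normal distributions are stochastically ordered in the absolute value of the mean. This can be proved directly: for $\theta \ge 0$ and $t \ge 0$, $\PP(|Y| \le t) = \Phi(t - \theta) - \Phi(-t - \theta)$, and differentiating in $\theta$ gives $-\phi(t-\theta) + \phi(t+\theta) \le 0$ for $\theta \ge 0$, $t \ge 0$. So $\theta \mapsto \PP(|Y| \le t)$ is nonincreasing on $[0,\infty)$, meaning $|Y|$ is stochastically increasing in $\theta \ge 0$, equivalently in $|\theta|$. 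Then, since $A - B = g(|Y|)$ where (abusing notation) $g$ is viewed as a nonincreasing function of the nonnegative argument $|Y|$, composing a nonincreasing function with a stochastically increasing family yields a stochastically decreasing family: concretely, if $|Y_2| \geqst |Y_1|$ then $g(|Y_2|) \leqst g(|Y_1|)$. This gives exactly $A - B$ stochastically decreasing in $|\theta|$, completing the proof.

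The main obstacle, and the step I would be most careful about, is verifying the monotonicity structure of $g$ rigorously — in particular handling the jump at $|y| = a$ and confirming $g$ is genuinely nonincreasing in $|y|$ over all of $[0,\infty)$ (on $[0,a)$ it is $1 - y^2$, strictly decreasing; at $a$ it drops to $1 - \nu_a$; for $y > a$ it is constant), which requires the inequality $1 - a^2 \ge 1 - \nu_a$, i.e. $\nu_a \ge a^2$, supplied by Lemma \ref{lemma10}. The only other point needing care is stating precisely the lemma "composition of a nonincreasing real function with a stochastically larger random variable produces a stochastically smaller random variable," which is elementary (for any $t$, $\{g(x) \le t\}$ is an up-set in $x$ when $g$ is nonincreasing, and apply the definition of stochastic dominance) but should be cited, e.g. via Theorem 1.A.3 in \citet{shaked_stochastic_2007} as already used elsewhere in the paper. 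Everything else is routine algebra.
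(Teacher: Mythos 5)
Your proof is correct and follows essentially the same route as the paper: both write the difference as a deterministic monotone function of a statistic that is stochastically monotone in $|\theta|$ (you use $|Y|$ and show $g$ is nonincreasing; the paper uses $Y^2$, noncentral $\chi^2_1(\theta^2)$, and the increasing map $f$), with the key inequality $\nu_a \geq a^2$ and Theorem 1.A.3 of \citet{shaked_stochastic_2007} playing the same roles. The only difference is cosmetic (sign flip and proving the folded-normal stochastic ordering directly rather than invoking the noncentral chi-square ordering), so nothing further is needed.
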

\begin{proof}
It is equivalent to show that $B-A$ is stochastically increasing in $|\theta|$. Note first that $Y^2$ has a Chi Square distribution with non-centrality parameter $\theta^2$, which is stochastically increasing in $|\theta|$. Further, we have that
$B-A = f(Y^2),$
where the function $f$ is given by
\begin{align}
f(x) = \begin{cases} x - 1 , &\text{ if } x < a^2\\
\nu_a - 1,  &\text{ otherwise.}\end{cases}
\end{align}
Since $\nu_a \geq a^2$, $f$ is an increasing function. By \citet[Theorem 1.A.3(a)]{shaked_stochastic_2007} , $B-A$ must be stochastically increasing in $|\theta|$.
\end{proof}

\begin{lemma}\label{lemma221}
Let $\mathcal{M}$ denote the set of candidate intervals generated from Algorithm \ref{alg:seededintervals} with fixed input parameters $\alpha >1$, $K>1$ and $n \in \NN$. Then the number of distinct triples of integers $s,e,v$ such that $(s,e] \in \mathcal{M}$ and $s<v<e$ is of order $\mathcal{O}(n \log n)$.
\end{lemma}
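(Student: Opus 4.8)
The plan is to count directly. First I would recall the explicit description of $\mathcal{M}$ already set up in the proof of Lemma \ref{lemma14}: write $l_1 = 1$ and $l_{j+1} = \max\{l_j + 1, \lfloor \alpha l_j\rfloor\}$, let $H$ be the largest index with $l_H \le n/2$, and note $\mathcal{M} = \bigcup_{j=1}^{H} \mathcal{I}_{l_j}$, where $\mathcal{I}_l$ consists of $(n-2l, n]$ together with the intervals $(i s_l, i s_l + 2l]$ for $i = 0, 1, \ldots, \lfloor (n-2l)/s_l\rfloor$, with $s_l = \max\{1, \lfloor l/K\rfloor\}$. Every interval in $\mathcal{I}_l$ has length exactly $2l$, hence contains exactly $2l - 1$ integers strictly between its endpoints. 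So the number of admissible triples is at most $\sum_{j=1}^{H} |\mathcal{I}_{l_j}|\,(2 l_j - 1)$, and it suffices to bound $|\mathcal{I}_l|\,(2l-1)$ by $O(n)$ uniformly over levels (constant allowed to depend on the fixed $K$) and to show $H = O(\log n)$.

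For the per-level bound, note $|\mathcal{I}_l| \le (n-2l)/s_l + 2 \le n/s_l + 2$. If $l < 2K$ then $2l - 1 < 4K$ and $s_l \ge 1$, so $|\mathcal{I}_l|\,(2l-1) \le 4K(n+2) = O(n)$; moreover, since the $l_j$ are strictly increasing positive integers, there are at most $2K$ such levels. If $l \ge 2K$ then $s_l = \lfloor l/K\rfloor \ge l/K - 1 \ge l/(2K)$, hence $n/s_l \le 2Kn/l$ and $|\mathcal{I}_l|\,(2l-1) \le (2Kn/l + 2)(2l) = 4Kn + 4l \le 4Kn + 2n = O(n)$, using $l \le n/2$. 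Either way each level contributes $O(n)$ triples.

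For the number of levels, once $l_j \ge 2/(\alpha - 1)$ one has $(\alpha - 1)l_j \ge 2$, so $\lfloor \alpha l_j\rfloor \ge \alpha l_j - 1 \ge \tfrac{1+\alpha}{2} l_j \ge l_j + 1$ and therefore $l_{j+1} = \lfloor\alpha l_j\rfloor \ge \tfrac{1+\alpha}{2} l_j$; since $\tfrac{1+\alpha}{2} > 1$ and $l_H \le n/2$, only $O(\log n)$ levels (constant depending on $\alpha$) occur beyond this threshold, while the number of levels below it is at most $\lceil 2/(\alpha-1)\rceil = O(1)$ because $l_{j+1} \ge l_j + 1$ always. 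Hence $H = O(\log n)$, and the total number of triples is $\sum_{j=1}^{H} O(n) = O(nH) = O(n\log n)$, as claimed.

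There is no real obstacle here; this is the same complexity bookkeeping underlying seeded binary segmentation \citep{kovacs_seeded_2022}. The one point to get right is the trade-off that the grid spacing $s_l \asymp l/K$ at level $l$ makes the number of length-$2l$ intervals drop like $n/l$, which exactly cancels the $2l$ interior points contributed by each such interval, so that each of the $O(\log n)$ levels costs only $O(n)$.
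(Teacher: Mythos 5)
Your proof is correct and follows essentially the same route as the paper's: count $2l-1$ interior points per interval, use $s_l\asymp l/K$ to show each length level contributes $O(Kn)$ triples, and bound the number of levels by $O(\log n)$ via the geometric growth of $(l_j)$. Your treatment of the number of levels (splitting at $l_j \ge 2/(\alpha-1)$) is slightly more explicit than the paper's one-line bound $H \le \lceil 1/(\alpha-1)\rceil + \log_\alpha n$, but the argument is the same.
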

\begin{proof}
Let $\alpha$ and $K$ be given, and define the recursive sequence $(l_j)_{j \in \NN}$ by $l_1 = 1$, and $l_{j+1} = \max \left(l_j+1,  \lfloor \alpha l_j\rfloor \right)$ for $j\in \NN$. Let $H = \sup \{ j \in \NN \ : \ l_j \leq n/2\}$. Formally, the set $\mathcal{M}$ of seeded intervals generated by Algorithm \ref{alg:seededintervals} is given by
\begin{align}
\mathcal{S} &= \bigcup_{l \in \{l_1, \ldots, l_H\}}  \mathcal{I}_l,
\intertext{where}
\mathcal{I}_l &= \left\{  (n-2l,n]   \right\}  \cup \bigcup_{i=0}^{ \left \lfloor \frac{n-2l}{s_l}\right \rfloor} \left\{(i s_l \ , \ i s_l +2l ]\right\},\intertext{and}
s_l &= \max \left\{ 1, \left\lfloor \frac{l}{K} \right\rfloor\right\}.
\end{align}
For any $(s,e] \in \mathcal{I}_l$, there are precisely $2l -1< 2l$ integers $v$ such that $s<e<v$. Hence, the number $N$ of distinct triples of integers $s,e,v$ such that $(s,e] \in \mathcal{M}$ and $s<v<e$ therefore satisfies
\begin{align}
N & < \sum_{l \in \{l_1, \ldots, l_J\}} 2l |\mathcal{I}_l|.
\end{align}
For all $l<K$, we have that $|\mathcal{I}_l| \leq n$, and so $2l |\mathcal{I}_l|\leq 2ln< 2Kn$. For $l \geq K$ we have that $\left\lfloor l/K \right\rfloor \geq l/(2K)$, and so $2l|\mathcal{I}_l| \leq  4Kn$.
Therefore, 
\begin{align}
N &< \sum_{l \in \{l_1, \ldots, l_J\}} 4Kn \\
&= 4H Kn.
\intertext{Noting that $H \leq \lceil \frac{1}{\alpha -1} \rceil  + \log_{\alpha}n$, we thus get}
N&< 4\left( \lceil \frac{1}{\alpha -1} \rceil  + \log_{\alpha}n \right) Kn\\
&= \mathcal{O}(n\log n),
\end{align}
which gives the desired result.
\end{proof}

In the following we restate some useful Lemmas from \citet{baranowski_narrowest-over-threshold_2019}.
\begin{lemma}[\citealt{baranowski_narrowest-over-threshold_2019}, Lemma 2]\label{lemma2NOT}\n
Consider the model from Section \ref{secproblemdesc}, assuming that $p=1$. Let the CUSUM transformation $T^v_{(s,e]}(\cdot)$ be defined as in \eqref{c2def}. Suppose $s<e$ are such that $\eta_{j-1} \leq s < \eta_j <e \leq \eta_{j+1}$ for some $j \in [J]$. Then, 
\begin{align}\underset{s<v<e}{\max} \ T^{v}_{(s,e]}(\mu)^2 = T^{\eta_j}_{(s,e]}(\mu)^2 \ \ \begin{cases}
\geq \frac{1}{2} {\delta} \theta_j^2\\
\leq {\delta} \theta_j^2.\end{cases}\end{align}
\end{lemma}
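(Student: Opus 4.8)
The plan is to exploit the fact that on the interval $(s,e]$ the mean sequence is piecewise constant with a single jump. Since $\eta_{j-1}\le s<\eta_j<e\le\eta_{j+1}$, the only changepoint of $\mu$ strictly inside $(s,e)$ is $\eta_j$, so $\mu_t$ equals one value for $t\in\{s+1,\dots,\eta_j\}$ and a value differing from it by exactly $\theta_j$ for $t\in\{\eta_j+1,\dots,e\}$. First I would observe that the CUSUM contrast $T^v_{(s,e]}$ annihilates constant sequences: substituting $y_t\equiv c$ into \eqref{c2def} produces a difference of two equal terms, hence $0$. Consequently $T^v_{(s,e]}(\mu)=T^v_{(s,e]}(\widetilde{\mu})$, where $\widetilde{\mu}_t=0$ for $t\le\eta_j$ and $\widetilde{\mu}_t=\theta_j$ for $t>\eta_j$, so it suffices to analyse $T^v_{(s,e]}(\widetilde{\mu})$.

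Next I would compute $T^v_{(s,e]}(\widetilde{\mu})$ explicitly, splitting into the cases $s<v\le\eta_j$ and $\eta_j\le v<e$. For $s<v\le\eta_j$ one has $\sum_{t=s+1}^v\widetilde{\mu}_t=0$ and $\sum_{t=v+1}^e\widetilde{\mu}_t=(e-\eta_j)\theta_j$, so $T^v_{(s,e]}(\widetilde{\mu})^2=\theta_j^2(e-\eta_j)^2(v-s)/\{(e-s)(e-v)\}$. For $\eta_j\le v<e$ one has $\sum_{t=s+1}^v\widetilde{\mu}_t=(v-\eta_j)\theta_j$ and $\sum_{t=v+1}^e\widetilde{\mu}_t=(e-v)\theta_j$, and a short cancellation yields $T^v_{(s,e]}(\widetilde{\mu})^2=\theta_j^2(\eta_j-s)^2(e-v)/\{(e-s)(v-s)\}$. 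On $(s,e)$ the map $v\mapsto(v-s)/(e-v)$ is strictly increasing and $v\mapsto(e-v)/(v-s)$ is strictly decreasing, so each branch is maximized at its endpoint $v=\eta_j$; the two branches agree there, which establishes
\[
\max_{s<v<e}T^v_{(s,e]}(\mu)^2=T^{\eta_j}_{(s,e]}(\mu)^2=\frac{(\eta_j-s)(e-\eta_j)}{e-s}\,\theta_j^2 .
\]

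Finally I would bound the leading constant. Write $a=\eta_j-s$ and $b=e-\eta_j$, both at least $1$ since $s,\eta_j,e$ are integers with $s<\eta_j<e$, so that $e-s=a+b$ and $\delta:=\min(a,b)$. From $a+b\ge\max(a,b)$ we get $ab/(a+b)\le ab/\max(a,b)=\min(a,b)=\delta$, and from $a+b\le2\max(a,b)$ we get $ab/(a+b)\ge ab/\{2\max(a,b)\}=\delta/2$. Combining these with the displayed identity gives $\tfrac12\delta\theta_j^2\le T^{\eta_j}_{(s,e]}(\mu)^2\le\delta\theta_j^2$, which is the asserted two-sided bound.

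I do not expect a genuine obstacle: the whole argument is a direct computation once one uses the invariance of the CUSUM contrast under adding constants. The only points needing minor care are verifying the strict monotonicity of the two rational functions of $v$ — this is what guarantees the maximizer is $\eta_j$ rather than an endpoint of $(s,e)$ — and keeping the convention $\delta=\min(\eta_j-s,\,e-\eta_j)$ explicit, since that quantity is not spelled out in the statement as restated here.
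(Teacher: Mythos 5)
Your proof is correct: the CUSUM's invariance under adding constants, the explicit two-branch computation giving $T^{\eta_j}_{(s,e]}(\mu)^2=(\eta_j-s)(e-\eta_j)\theta_j^2/(e-s)$ with the maximum at $v=\eta_j$ by monotonicity, and the elementary bounds $\delta/2\le ab/(a+b)\le\delta$ for $\delta=\min(\eta_j-s,\,e-\eta_j)$ all check out. The paper itself gives no proof here — Lemma \ref{lemma2NOT} is restated from \citet{baranowski_narrowest-over-threshold_2019} — and your direct computation is essentially the standard argument used there, so there is nothing further to reconcile.
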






Given an $n \in \NN$ and any integer $0<v<n$, define the $n$-dimensional vector $\Psi^v \in \RR^n$ to have $l$th element given by
\begin{align}
\Psi^v(l)= \begin{cases}
\ \left({\frac{n-v}{nv}}\right)^{1/2}, & \text{ for } l = 1, \ldots, v,\\
-\left({\frac{v}{n(n-v)}}\right)^{1/2}, & \text{ for } l = v+1\ldots, n. \\
\end{cases}\label{psi2def}
\end{align}

\begin{lemma}[\citealt{baranowski_narrowest-over-threshold_2019}, Lemma 4]\label{lemma4NOT}
Consider the model from Section \ref{secproblemdesc}, assuming that $p=1$. Let the CUSUM transformation $T^v_{(s,e]}(\cdot)$ be defined as in \eqref{c2def}. Pick any interval $(s,e] \subset (0,n]$ such that the open integer interval $(s, e)$ contains precisely one changepoint $\eta_j$. Pick any integer $v$ such that $s<v<e$. Define $\rho = |\eta_j - v|$,  $\delta_{\text{L}} = \eta_j - s$, and $\delta_\text{R} = e - \eta_j$. Then, 
\begin{align} \normm{ \Psi^{\eta_j}_{(s,e]} \langle \mu, \Psi^{\eta_j}_{(s,e]} \rangle  - \Psi^v_{(s,e]} \langle \mu, \Psi^v_{(s,e]}\rangle }_2 =   T^{\eta_j}_{(s,e]}(\mu) ^2 -    T^{v}_{(s,e]}(\mu) ^2.\end{align} 
Moreover, 
\begin{align}
    &(1) \ \text{for any } \eta_j \leq v < e, \    T^{\eta_j}_{(s,e]}(\mu)  ^2 -    T^{v}_{(s,e]}(\mu)  ^2 = \frac{\rho \delta_{\text{L}}}{\rho + \delta_{\text{L}}} \theta_j^2;\\
    &(2) \ \text{ for any } s<v\leq \eta_j, \    T^{\eta_j}_{(s,e]}(\mu)  ^2 -    T^{v}_{(s,e]}(\mu)  ^2 = \frac{\rho \delta_{\text{R}}}{\rho + \delta_{\text{R}}} \theta_j^2.
\end{align}
\end{lemma}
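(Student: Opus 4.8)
Looking at the final statement, which is Lemma~\ref{lemma4NOT} (attributed to \citet{baranowski_narrowest-over-threshold_2019}), I need to prove a computational identity about CUSUM statistics in the univariate setting when an interval contains exactly one changepoint.

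\textbf{Proof proposal.} The plan is to work directly with the vector representation of the CUSUM transformation. Recall from the text that $T^v_{(s,e]}(y) = \langle y_{(s,e]}, \Psi^v_{(s,e]} \rangle$ where $\Psi^v_{(s,e]}$ is the appropriate two-level step vector, normalized so that $\normm{\Psi^v_{(s,e]}}_2 = 1$. First I would establish the norm identity: expand
\[
\normm{ \Psi^{\eta_j}_{(s,e]} \langle \mu, \Psi^{\eta_j}_{(s,e]} \rangle - \Psi^v_{(s,e]} \langle \mu, \Psi^v_{(s,e]}\rangle }_2^2
= \langle \mu, \Psi^{\eta_j}\rangle^2 \normm{\Psi^{\eta_j}}_2^2 - 2 \langle \mu, \Psi^{\eta_j}\rangle\langle \mu, \Psi^v\rangle \langle \Psi^{\eta_j}, \Psi^v\rangle + \langle \mu, \Psi^v\rangle^2 \normm{\Psi^v}_2^2,
\]
and since each $\Psi$ is a unit vector this is $T^{\eta_j}(\mu)^2 + T^v(\mu)^2 - 2 T^{\eta_j}(\mu) T^v(\mu) \langle \Psi^{\eta_j}, \Psi^v\rangle$. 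The key sub-step is then to show the mean vector $\mu_{(s,e]}$, which is piecewise constant with a single jump at $\eta_j$, lies in the span of $\{\mathbf{1}_{(s,e]}, \Psi^{\eta_j}_{(s,e]}\}$; since the CUSUM vectors are orthogonal to $\mathbf{1}$, this forces $T^{\eta_j}(\mu) T^v(\mu) \langle \Psi^{\eta_j}, \Psi^v\rangle = T^v(\mu)^2$, collapsing the expression to $T^{\eta_j}(\mu)^2 - T^v(\mu)^2$. Equivalently, one can argue via the projection interpretation: $\langle \mu, \Psi^v\rangle \Psi^v$ is the projection of $\mu$ onto $\Psi^v$, and because $\mu$ (mean-centered) is a scalar multiple of $\Psi^{\eta_j}$, Pythagoras gives the result.

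Second, for parts (1) and (2) I would just substitute the explicit coordinates. For the case $\eta_j \le v < e$, write out $T^{\eta_j}_{(s,e]}(\mu)^2$ and $T^v_{(s,e]}(\mu)^2$ using the CUSUM formula \eqref{c2def} applied to the two-level mean $\mu$ over $(s,e]$: the mean takes one value on $(s,\eta_j]$ and another on $(\eta_j, e]$, with difference $\theta_j$. A direct computation (or citing Lemma~\ref{lemma2NOT} combined with the norm identity just proved) yields $T^{\eta_j}(\mu)^2 = \delta_{\mathrm{L}} \delta_{\mathrm{R}} \theta_j^2 / (e-s)$ with $\delta_{\mathrm L} = \eta_j - s$, $\delta_{\mathrm R} = e - \eta_j$, and similarly $T^v(\mu)^2$ with the segment lengths shifted; subtracting and simplifying the resulting rational function of $s, v, e, \eta_j$ gives $\frac{\rho \delta_{\mathrm L}}{\rho + \delta_{\mathrm L}} \theta_j^2$ where $\rho = v - \eta_j$. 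The case $s < v \le \eta_j$ is symmetric by reflecting the interval, yielding $\frac{\rho \delta_{\mathrm R}}{\rho + \delta_{\mathrm R}} \theta_j^2$.

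\textbf{Main obstacle.} The genuinely substantive point is the first identity — recognizing that the mean-centered $\mu_{(s,e]}$ is exactly proportional to $\Psi^{\eta_j}_{(s,e]}$ (this is the reason the cross term simplifies so cleanly), rather than just bounding things. Once that structural fact is in hand, everything else is bookkeeping with the explicit $\Psi$ coordinates. Since this lemma is quoted verbatim from \citet{baranowski_narrowest-over-threshold_2019}, the cleanest route is to simply cite their Lemma~4 and remark that the identity follows from the orthogonal-decomposition of $\mu$ and an elementary computation; I would only reproduce the short computation of the algebraic simplification for completeness.
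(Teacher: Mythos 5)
Your proposal is correct, and it fills in an argument that this paper itself never spells out: Lemma \ref{lemma4NOT} is simply restated from \citet{baranowski_narrowest-over-threshold_2019} with a citation, so the "paper's proof" is the citation you also suggest as the cleanest route. Your sketch matches the standard argument behind the original result: on $(s,e]$ the mean is two-level with its jump at $\eta_j$, hence $\mu_{(s,e]}\in\Span\{\mathbf{1},\Psi^{\eta_j}_{(s,e]}\}$; since every $\Psi^{v}_{(s,e]}$ is a unit vector orthogonal to $\mathbf{1}$, writing $\mu_{(s,e]}=c\mathbf{1}+d\Psi^{\eta_j}_{(s,e]}$ gives $T^{v}(\mu)=d\langle\Psi^{\eta_j},\Psi^{v}\rangle$, the cross term collapses exactly as you claim, and Pythagoras yields the first identity; parts (1) and (2) then follow from the explicit two-level computation, e.g. $T^{\eta_j}_{(s,e]}(\mu)^2=\delta_{\mathrm L}\delta_{\mathrm R}\theta_j^2/(e-s)$ and $T^{v}_{(s,e]}(\mu)^2=(e-v)\delta_{\mathrm L}^2\theta_j^2/\{(e-s)(v-s)\}$ for $v\geq\eta_j$, whose difference simplifies to $\rho\delta_{\mathrm L}\theta_j^2/(\rho+\delta_{\mathrm L})$. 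Two minor caveats: the display in the statement should read $\normm{\cdot}_2^2$ on the left (the missing square is a typo carried over in the restatement), and your argument proves the squared-norm version; and the parenthetical suggestion to get the exact value of $T^{\eta_j}_{(s,e]}(\mu)^2$ from Lemma \ref{lemma2NOT} does not work, since that lemma only provides two-sided bounds of the form $\tfrac12\delta\theta_j^2\leq T^{\eta_j}_{(s,e]}(\mu)^2\leq\delta\theta_j^2$ rather than an identity — but the direct substitution you primarily propose is all that is needed.
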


\end{document}